\def\dOi{10(3:18)2014}
\subjclass{D.3.1, F.4.1}
\newcommand{\uA}{\underline{A}}
\newcommand{\uB}{\underline{B}}
\newcommand{\uR}{\underline{R}}
\newcommand{\uC}{\underline{C}}
\newcommand{\Cc}{\mathcal{C}}
\newcommand{\cO}{\mathcal{O}}
\newcommand{\cF}{\mathcal{F}}
\newcommand{\cV}{\mathcal{V}}
\newcommand{\cE}{\mathcal{E}}
\newcommand{\cP}{\mathcal{P}}
\newcommand{\cH}{\mathcal{H}}
\newcommand{\cS}{\mathcal{S}}
\newcommand{\cU}{\mathcal{U}}
\newcommand{\A}{\mathbb{A}}
\newcommand{\N}{\mathbb{N}}
\newcommand{\B}{\mathbb{B}}
\newcommand{\Si}{\mathbb{S}}
\newcommand{\catC}{\mathsf{C}}
\newcommand{\Hom}{\mathsf{hom}}
\newcommand{\cl}{\mathsf{cl}}
\newcommand{\id}{\mathsf{id}}
\newcommand{\AVal}{\mathbb{A}\mathsf{val}}
\newcommand{\HAVal}{\mathsf{H}\text{-}\mathbb{A}\mathsf{val}}
\newcommand{\dcpo}{{\mathcal{D}\kern-0.14em\mathit{cpo}}}
\renewcommand{\epsilon}{\varepsilon}
\renewcommand{\phi}{\varphi}
\newcommand{\ov}[1]{\overline{#1}}
\newcommand{\wt}{\widetilde}
\newcommand{\wh}{\widehat}
\newcommand{\2}{\Sigma}
\newcommand{\lollipop}{\mbox{$\circ \kern-0.4em \rightarrow$}}
\newcommand{\dbdownarrow}{\rlap{\raise.25ex\hbox{$\shortdownarrow$}}\raise-.25ex\hbox{$\shortdownarrow$}}%%
\newcommand{\dda}{\rlap{\raise.25ex\hbox{$\shortdownarrow$}}\raise-.25ex\hbox{$\shortdownarrow$}}%%
\newcommand{\dua}{\rlap{\raise-.25ex\hbox{$\shortuparrow$}}\raise.25ex\hbox{$\shortuparrow$}}%%
\newcommand{\dbuparrow}{\rlap{\raise-.25ex\hbox{$\shortuparrow$}}\raise.25ex\hbox{$\shortuparrow$}}%%
\newcommand{\funion}{\mathrel{\makebox[0pt][l]{\hspace{.08em}\raisebox{.4ex}{\rule{.5em}{.1ex}}}\mathord{\cup}}}
\newcommand{\bigfunion}{\mathrel{\makebox[0pt][l]{\hspace{.08em}\raisebox{.4ex}{\rule{.65em}{.1ex}}}\mathord{\bigcup}}}
\newcommand{\dsup}{\mathop{\bigvee{}^{^{\,\makebox[0pt]{$\scriptstyle\uparrow$}}}}}
\DeclareMathOperator{\ua}{\uparrow}
\DeclareMathOperator{\da}{\downarrow\!}
\begin{document}

\title[Observationally-induced algebras in Domain Theory]{Observationally-induced algebras in Domain Theory}

\author[I.~Battenfeld]{Ingo Battenfeld\rsuper a}	%required
\address{{\lsuper a}Grotenbachstr.~8, Dortmund, D-44225, Germany}	%required
\email{ingo-battenfeld@email.de}  %optional

\author[K.~Keimel]{Klaus Keimel\rsuper b}	%required
\address{{\lsuper{b,c}}Fachbereich 4 Mathematik TU Darmstadt\\
         Schlo{\ss}gartenstr.~7, D-64289, Germany}	%required
\email{\{keimel,streicher\}@mathematik.tu-darmstadt.de}  %optional
\thanks{Work supported by Deutsche Forschungsgemeinschaft (DFG)}

\author[T.~Streicher]{Thomas Streicher\rsuper c}	%required
\address{\vspace{-18 pt}}	%required
%\email{streicher@mathematik.tu-darmstadt.de}  %optional

%% etc.

%% required for running head on odd and even pages, use suitable
%% abbreviations in case of long titles and many authors:

%% mandatory lists of keywords and classifications:
\keywords{denotational semantics, computational effects, powerdomains, domain theory}

%%%%%%%%%%%%%%%%%%%%%%%%%%%%%%%%%%%%%%%%%%%%%%%%%%%%%%%%%%%%%%%%%%%%%%%%%%%

%% the abstract has to PRECEED the command \maketitle:
%% be sure not to issue the \maketitle command twice!

\begin{abstract}
In this paper we revise and simplify the notion of
\emph{observationally induced} algebra introduced by Simpson and
Schr\"oder for the purpose of modelling \emph{computational effects}
in the particular case where the ambient category is given by
classical domain theory.

As examples of the general framework we consider the various powerdomains.
For the particular case of the Plotkin powerdomain the general recipe leads 
to a somewhat unexpected result which, however, makes sense from a Computer
Science perspective. We analyze this ``deviation'' and show how
to reobtain the original Plotkin powerdomain by imposing further conditions
previously considered by R.~Heckmann and J.~Goubault-Larrecq.  
\end{abstract}

\maketitle

\section*{Introduction}

E.~Moggi in his seminal paper \cite{mogcomplambda} described how to model
\emph{computational effects} via so-called ``computational monads''. Later
Power and Plotkin suggested to model computational effects as free algebras
which after all give rise to monads. Though some effects like continuations 
cannot be modeled this way their approach covers most examples of computational 
effects as described in \cite{plponotions}. In their account the algebras of
interest are specified by equational and inequational laws. As an alternative
A.~Simpson and M.~Schr\"oder in \cite{scsiobservations} suggested to specify
classes of algebras not in an axiomatic way but instead by exhibiting a
prototypical such algebra $\uR$. For such an algebra $\uR$ one may define
a notion of $\uR$-complete algebra. In \cite{scsiobservations,batmfps28,batisdt13,batschrmfps27} it is shown how to construct for every object $X$ of the ambient category 
(of domains) a \emph{free} $\uR$-complete algebra $\uR(X)$ over $X$, called the
\emph{repletion of $X$}.

The notion of $\uR$-complete algebra is defined in analogy with M.~Hyland's
notion of replete object as introduced in \cite{hylsdt} for the purpose of
providing an appropriate notion of completeness within Synthetic Domain Theory.
Actually, in case of an empty signature $\uR$-complete algebras coincide with
$R$-replete objects in the sense of \cite{hylsdt}. The definition of 
$\uR$-completeness in \cite{scsiobservations,batmfps28,batisdt13,batschrmfps27}
appears as somewhat convoluted because categories of algebras are typically 
not cartesian closed and, moreover, in some of the cases considered in 
\emph{loc.cit.}\ the ambient category was not cartesian closed as e.g.\ the
category of topological spaces and continuous maps. We assume our ambient
category to be cartesian closed and the category of algebras to be enriched
over this ambient category. This allows us to come up in section~\ref{sect1} 
with a notion of completeness which is closer in spirit to Hyland's original 
notion of repleteness. Moreover, as shown in section~\ref{sect2} the complete
algebras form a full reflective subcategory when the ambient category are
directed complete partial orders and Scott continuous maps as studied e.g.\
in \cite{abjudomain}.

In the remaining sections we study particular cases of computational effects
corresponding to the various notions of powerdomains. If the computational
effect is given by the computational monad $T$ then the corresponding 
prototypical algebra $\uR$ is chosen to be $T(\2)$ where $\2$ is the 
Sierpi\'nski domain. In most cases $\uR(X)$ turns out to coincide with the 
respective powerdomain of $X$. A notable exception is the Plotkin powerdomain
$\mathcal{P}$ in which case $\A = \mathcal{P}(\2)$ is the 3-element chain
$\bot \sqsubset m \sqsubset \top$ whose elements stand for ``must diverge'',
``may diverge or converge'' and ``must converge'', respectively. It turns
out that for (reasonable) domains $X$ their repletion $\A(X)$ consists of
``formal lenses'' $(C,Q)$ where $C$ is Scott closed in $X$, $Q$ is compact
saturated in $X$ and $C \cap Q$ is nonempty. Obviously, in general such 
formal lenses $(C,Q)$ are not determined by the ``real lens'' $C \cap Q$. 
We also give characterizations of the ``real'' lenses as particular ``formal'' 
lenses using and adapting ideas from
\cite{heckabstvals,jgbprevisions,jglduality}. However, formal lenses
$(C,Q)$ appear as quite natural from a Computer Science 
perspective since $C$ may be understood as a ``safety'' predicate and $Q$ as a
``liveness'' predicate on $X$.

\section{Complete algebras in cartesian closed categories}
\label{sect1}

We assume $\Cc$ to be a cartesian closed category and $\Omega$
a finitary algebraic signature, that is, a collection of operation
symbols $\omega$ each coming with a finite arity $n=n_\omega\in\N$.  
A $\Cc_\Omega$-algebra in $\Cc$ is an object $A$ of $\Cc$ together
with $\Cc$-morphisms $\omega_A\colon A^{n_\omega}\to A$, one for each 
operation symbol $\omega\in\Omega$. We denote $\Cc_\Omega$-algebras by $\uA$,
$\uB$, $\dots$ and by $A$, $B$, $\dots$ the underlying $\Cc$-objects. 
A $\Cc_\Omega$-homomorphism between $\Cc_\Omega$-algebras $\uA$ and $\uB$ is 
a $\Cc$-morphism $h \colon A\to B$ such that $h \circ \omega_A= \omega_B\circ h^n$ 
for every $\omega\in\Omega$ of arity $n$. We denote by $\Cc_\Omega$ the category 
of $\Cc_\Omega$-algebras and $\Cc_\Omega$-homomorphisms.  

As $\Cc$ is cartesian closed, the category $\Cc_\Omega$ has $\Cc$-powers, i.e.\
for $\uA$ in $\Cc_\Omega$ and $X$ in $\Cc$ the power $\uA^X$ is the algebra
whose underlying object is $A^X$ and whose operations are defined pointwise as 
$\omega_{A^X}(\vec{u}) = \lambda x{:}X. \: \omega_A (\vec{u}(x))$. 

Moreover, we assume $\Cc_\Omega$ to be $\Cc$-enriched in the sense that
for $\Cc_\Omega$-algebras $\uA$ and $\uB$ there is an object
$\Hom(\uA,\uB)$ in $\Cc$ with $\Cc_\Omega(\uA,\B^X) \cong
\Cc(X,\Hom(\uA,\uB)$ naturally in $X$. If $\Cc$ has enough  
limits $\Hom(\uA,\uB)$ arises as subobject of the exponential $B^A$ via an 
appropriate equalizer. 

Finally, we fix a $\Cc_\Omega$-algebra $\uR$ as \emph{computational prototype}. 
As originally suggested by A.~Simpson and M.~Schr\"oder \cite{scsiobservations} 
we will define notions of $\uR$-complete algebras which in case of empty signature 
coincide with the notion of an $R$-replete object \cite{hylsdt}. We present two 
such notions which, however, give rise to the same construction when reflecting 
free algebras to the $\uR$-complete ones.

First we give the definitions originally suggested 
in \cite{scsiobservations,batmfps28,batisdt13,batschrmfps27}
but formulated in a way making use of the assumption that $\Cc_\Omega$
is enriched over $\Cc$.

\begin{defi}\label{def:1}{\rm
Let $X$ be an object of $\Cc$ and $\uA$ and $\uC$ be algebras in $\catC$. 
A morphism $e\colon X \to A$ is called \emph{$\uC$-equable}, if the
restriction of the $\Cc$-morphism $C^e \colon C^A \to C^X$ 
to $\Hom(\uA,\uC)$, i.e.\ $C^e \colon \Hom(\uA,\uC) \to C^X \colon 
h \mapsto h \circ e$ is an isomorphism in $\Cc$.}
%One may phrase this condition also in the following way: Every
%$\catC$-morphism $f\colon X\to \uC$ has a unique homomorphic extension
%$\widehat f\colon\uB\to\uC$ along $e$ and the assignment $f\mapsto
%\widehat f\colon [X\to C]\to[\uB\lollipop\uC]$ is an isomorphism. 
%$\mathcal{C_\Omega$-homomorphism $f: \uA \rightarrow \underline{C}$
%has a unique homomorphism extension $\widehat{f}: \uB \rightarrow
%\underline{C}$ along $h$
% This doesn't makse sense in arbitrary ccc's!!
\end{defi}

Next, we identify our class of computation types as $\uR$-complete 
algebras in the following sense. 

\begin{defi}\label{def:2}{\rm
A \emph{weakly $\uR$-complete algebra} is a $\Cc_\Omega$-algebra $\uC$ such that
every $\uR$-equable morphism  is also $\uC$-equable.
We denote the category of $\Cc_\Omega$-homomorphisms between
weakly $\uR$-complete algebras by $w\Cc_{\uR}$.} 
\end{defi}

Next we give the second stronger version of $\uR$-completeness which has been 
proposed recently by Battenfeld in \cite{batisdt13} (following a suggestion
on p.64 of \cite{sim07talk}) and is closer in spirit to 
Hyland's original definition of $R$-replete objects, see \cite{hylsdt}.
  
\begin{defi}\label{def:1'}{\rm
For $\Cc_\Omega$-algebras $\uA$, $\uB$ and $\uC$, a $\Cc_\Omega$-homomorphism 
$e \colon \uA \rightarrow \uB$ is called \emph{$\uC$-equable}, if the 
$\Cc$-map $\Hom(e,\uC) \colon \Hom(\uB,\uC) \to \Hom(\uA,\uC) \colon
h \mapsto h \circ e$ is an isomorphism in $\Cc$.}  
\end{defi}

\begin{defi}\label{def:2'}{\rm
A $\Cc_\Omega$-algebra $\uC$ is called \emph{$\uR$-complete} if 
every $\uR$-equable homomorphism $e \colon \uA \rightarrow \uB$ 
is also $\uC$-equable. We denote the category of
$\uR$-complete $\Cc_\Omega$-algebras and
$\Cc_\Omega$-homomorphisms by $\Cc_{\uR}$.}  
\end{defi}

We now discuss why $\uR$-completeness is presumably stronger than weak
$\uR$-completeness and why the difference doesn't matter too much
for our purposes. For this, we suppose that for every object $X$ in the 
category $\Cc$ there is an (absolutely) free $\Omega$-algebra over $X$ 
in $\Cc$, i.e.\ a $\Cc_\Omega$-algebra $\cF_\Omega(X)$ together with a 
$\Cc$-map $i_X \colon X \to \cF_\Omega(X)$ such that for every algebra 
$\uA$ the map $\uA^{i_X} \colon \Hom(\cF_\Omega(X),\uA) \to A^X \colon
h \mapsto h \circ i_X$ is an isomorphism in $\Cc$. This amounts to an
internalization to $\Cc$ of the requirement that for every morphism
$e \colon X \to A$ there is a unique homomorphic extension 
$\wh{e} \colon \cF_\Omega(X) \to \uA$ along $i_X$ as depicted in
\begin{diagram}[small]
X & \rTo^{i_X} & \cF_\Omega(X) \\
& \rdTo_{e} & \dTo_{\wh{e}} \\
& & \uA
\end{diagram}
Thus, a morphism $e \colon X \to A$ is $\uC$-equable in the sense of 
Def.~\ref{def:1} iff the homomorphic extension 
$\wh{e} \colon \cF_\Omega(X) \to \uA$ is $\uC$-equable in the sense of
Definition \ref{def:1'}. For this reason, an $\uR$-complete algebra
according to Definition \ref{def:2'} is in particular also
weakly $\uR$-complete in the sense of Definition \ref{def:2}.  

The notion of completeness has the advantage that for interesting
instances of $\Cc$, as e.g.\ the category $\dcpo$ of directed
complete posets, the $\uR$-complete algebras form a full reflective 
subcategory of $\Cc_\Omega$. But, as we will show later, in the category 
$\dcpo$ for every object $X$ the free weakly $\uR$-complete algebra over $X$ 
coincides with the reflection of $\cF_\Omega(X)$ to
the category $\Cc_{\uR}$ of $\uR$-complete algebras.  
 
We recall for later use that the forgetful functor 
$\Cc_\Omega \rightarrow \Cc$ creates limits.  It turns out that 
the same holds for both of the subcategories of weakly $\uR$-complete and 
also $\uR$-complete algebras.

\begin{lem}\label{compcomp}
The forgetful functor $U \colon w\Cc_{\uR} \rightarrow \Cc$ creates limits. 
The same holds for the category of $\uR$-complete algebras.
\end{lem}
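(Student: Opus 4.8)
The plan is to leverage the fact, recalled just before the statement, that the forgetful functor $\Cc_\Omega \to \Cc$ creates limits, together with the observation that $w\Cc_{\uR}$ and $\Cc_{\uR}$ are by definition \emph{full} subcategories of $\Cc_\Omega$. Creation of limits along a full subcategory inclusion that is then composed with a limit-creating functor reduces to a single closure property: it suffices to show that whenever a diagram $(\uC_i)_i$ of (weakly) $\uR$-complete algebras has a limit $\uC$ in $\Cc_\Omega$, that limit $\uC$ is again (weakly) $\uR$-complete. Indeed, given a diagram in $w\Cc_{\uR}$ whose underlying diagram in $\Cc$ has a limit, creation of limits by $\Cc_\Omega \to \Cc$ produces a unique lift to a limit cone in $\Cc_\Omega$; once we know its vertex lies in the full subcategory $w\Cc_{\uR}$, fullness makes this cone a limit cone there as well, and the uniqueness of the lift is inherited. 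So the whole lemma rests on the said closure property, and the free-algebra hypotheses discussed above are not needed.

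The key tool is that the enriched hom-functor $\Hom(\uB,-)\colon \Cc_\Omega \to \Cc$ preserves limits for every fixed $\uB$. I would prove this by testing against an arbitrary object $X$ and invoking Yoneda: using the enrichment isomorphism $\Cc_\Omega(\uB,\uC^X)\cong\Cc(X,\Hom(\uB,\uC))$, the fact that powers preserve limits (so $(\lim_i \uC_i)^X \cong \lim_i(\uC_i^{X})$, which follows since $\Cc_\Omega\to\Cc$ creates limits and $(-)^X$ is right adjoint to $(-)\times X$ in $\Cc$), and the fact that ordinary covariant hom-functors and $\Cc(X,-)$ preserve limits, one assembles the natural isomorphisms
\[
 \Cc\bigl(X,\Hom(\uB,{\textstyle\lim_i}\,\uC_i)\bigr)\;\cong\;{\textstyle\lim_i}\,\Cc\bigl(X,\Hom(\uB,\uC_i)\bigr)\;\cong\;\Cc\bigl(X,{\textstyle\lim_i}\,\Hom(\uB,\uC_i)\bigr),
\]
whence $\Hom(\uB,\lim_i\uC_i)\cong\lim_i\Hom(\uB,\uC_i)$. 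The same computation, using that the exponential $(-)^X$ in $\Cc$ preserves limits, gives $(\lim_i C_i)^X\cong\lim_i C_i^{X}$ for the underlying objects, which is what is needed for the weak version.

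With this in hand the closure property is immediate. For $\uR$-completeness, fix an $\uR$-equable homomorphism $e\colon\uA\to\uB$. The maps $\Hom(e,\uC_i)\colon\Hom(\uB,\uC_i)\to\Hom(\uA,\uC_i)$ are natural in $i$ by bifunctoriality of $\Hom$, so they constitute a natural transformation between the two diagrams whose limits are $\Hom(\uB,\lim_i\uC_i)$ and $\Hom(\uA,\lim_i\uC_i)$; under these identifications the induced map on limits is exactly $\Hom(e,\lim_i\uC_i)$. Since each $\uC_i$ is $\uR$-complete, every $\Hom(e,\uC_i)$ is an isomorphism, hence the transformation is a natural isomorphism and so induces an isomorphism on limits. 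Thus $\Hom(e,\lim_i\uC_i)$ is an isomorphism and $\lim_i\uC_i$ is $\uR$-complete. The weak case is identical after replacing the homomorphism $e\colon\uA\to\uB$ by a morphism $e\colon X\to A$ and the map $\Hom(e,\uC_i)$ by the restricted precomposition $C_i^{e}\colon\Hom(\uA,\uC_i)\to C_i^{X}$ of Definition~\ref{def:1}.

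The step I expect to be the main obstacle is the bookkeeping in the second paragraph: verifying that $\Hom(\uB,-)$ preserves limits and, crucially, that the isomorphism $\Hom(\uB,\lim_i\uC_i)\cong\lim_i\Hom(\uB,\uC_i)$ is natural in $\uB$ in such a way that $\Hom(e,\lim_i\uC_i)$ really is identified with $\lim_i\Hom(e,\uC_i)$. Everything else is a formal consequence of the principle that a natural isomorphism of diagrams induces an isomorphism of their limits, together with creation of limits by the algebra forgetful functor and fullness of the two subcategories.
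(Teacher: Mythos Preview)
Your proposal is correct and follows essentially the same approach as the paper: factor the forgetful functor through $\Cc_\Omega$, use creation of limits by $\Cc_\Omega\to\Cc$, and reduce to showing that the limit algebra is again (weakly) $\uR$-complete. The paper dismisses this last step as ``a straightforward calculation'' without details, whereas you supply them in full via the limit-preservation of $\Hom(\uB,-)$ and the identification of $\Hom(e,\lim_i\uC_i)$ with $\lim_i\Hom(e,\uC_i)$; so your argument is a fleshed-out version of the paper's sketch rather than a genuinely different route.
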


\begin{proof} 
Clearly, the forgetful functor factors 
as $\Cc_{\uR} \rightarrow \Cc_\Omega \to \Cc$. The forgetful functor 
$\Cc_\Omega \to \Cc$ is known to create limits. Thus, if $D$ is a diagram 
in $\Cc_{\uR}$, and the limit of $U \circ D$ exists in $\Cc$, then its 
limit $\mathsf{Lim}(D)$ carries a canonical $\Omega$-algebra structure, 
making $(\mathsf{Lim}(D), \{ \omega_{\mathsf{Lim}(D)} \})$ the corresponding 
limit in $\Cc_\Omega$. It only remains to show that 
$(\mathsf{Lim}(D), \{ \omega_{\mathsf{Lim} (D)} \})$ is (weakly) $\uR$-complete, which
follows from a straightforward calculation. 
\end{proof}

\section{Complete algebras in Classical Domain Theory }\label{sect2}

We now study the notions of the previous section for the classical case where
$\Cc$ is the category $\dcpo$ of directed complete partial orders and
Scott continuous maps between them. We start by fixing our notation. 

Recall that a partially ordered set is \emph{directed complete} if
every directed family $(x_i)_{i\in I}$ of elements has a supremum that
we denote by $\dsup_{i\in I}x_i$ in order to indicate that it is the
supremum of a directed family. A map $f$ between dcpos is \emph{continuous} 
(in the sense of Scott) if it preserves the order and suprema of directed 
families. It is well-known that the category $\dcpo$ is cartesian closed, 
complete and cocomplete (see e.g.\ \cite{abjudomain}). 
The exponential, denoted by $Y^X$ and alternatively by $[X\to Y]$, is given
by the set of Scott-continuous functions $u \colon X \to Y$ ordered pointwise. 
Suprema of directed families in $Y^X$ are computed pointwise. A subset $Y$ of 
a dcpo $X$ is said to be a sub-dcpo if, for every directed family 
$(y_i)_{i\in I}$ of elements in $Y$, the supremum $\dsup_{i\in I}y_i$ (taken in $X$) 
belongs to $Y$. 

As in the previous section, we fix a finitary algebraic signature $\Omega$.
A $\dcpo_\Omega$-algebra $\uA$ is a dcpo $A$ together with continuous 
operations $\omega_A\colon A^n\to A$ for every $\omega\in\Omega$ with
arity $n$. A map $h\colon\uA\to \uB$ between $\dcpo_\Omega$-algebras is an 
\emph{$\Omega$-homomorphism} if 
$$h(\omega_A(a_1,\dots,a_n))=\omega_B(h(a_1),\dots,h(a_n))$$ for every
$\omega\in\Omega$ of arity $n$ and all $a_1,\dots,a_n\in A$.  We write 
$\dcpo_\Omega$  for the category of $\dcpo_\Omega$-algebras and 
continuous $\Omega$-homomorphisms. 

For a directed family of continuous $\Omega$-homomorphisms
$\{h_i \colon \underline{A} \rightarrow \underline{B} \}_{i \in I}$, 
their (pointwise) supremum $h(x) = \dsup_{i \in I} h_i(x)$ is again a continuous
$\Omega$-homomorphism $h \colon \uA \rightarrow \uB$. Hence, the continuous 
homomorphisms from $\uA$ to $\uB$ give rise to a sub-dcpo $\Hom(\uA,\uB)$ of 
the exponential $B^A$ in $\dcpo$ for which reason the category $\dcpo_\Omega$ 
is $\dcpo$-enriched. 

For a dcpo $X$ and a $\dcpo_\Omega$-algebra $\uA$, the exponential $A^X$ 
in $\dcpo$ can be endowed with the structure of a $\dcpo_\Omega$-algebra 
by defining the operations $\omega$ on $A^X$ as 
$$\omega_{A^X}(u_1,\dots,u_n) = \lambda x{:}X.\,\omega_A(u_1(x),\dots,u_n(x))$$
where $n$ is the arity of $\omega$. 

It is well-known (see \cite{abjudomain}) that for every dcpo $X$ there is an 
(absolutely) free $\dcpo_\Omega$-algebra over $X$, i.e.\ a $\dcpo_\Omega$-algebra 
$\cF_\Omega(X)$ together with a continuous map $i_X \colon X \to\cF_\Omega(X)$ 
such that for every continuous map $f$ from $X$ to a $\dcpo_\Omega$-algebra $\uA$ 
there is a unique continuous homomorphic extension 
$\widetilde{f} \colon \underline \cF_\Omega(X) \to \uA$ of $f$ along $i_X$ as in 
\begin{diagram}
X&\rTo^{i_X}& \cF_{\Omega}(X)\\
 &\rdTo_{f}&\dTo_{\widetilde{f}}\\
 &       &{\uA}
\end{diagram}
Moreover, the map $f\mapsto\wt f \colon A^X \to \Hom(\cF_\Omega(X),\uA)$ 
is not only bijective but an isomorphism of dcpos (see, e.g.\ \cite{abjudomain}).

We now fix a $\dcpo_\Omega$-algebra $\uR$ as computational prototype and identify 
our class of computation types as $\uR$-complete algebras 
as in Section \ref{sect1}.

\begin{defi}\label{def:1''}{\rm
For $\dcpo_\Omega$-algebras $\uA$, $\uB$ and $\uC$, a continuous homomorphism 
$e \colon \uA \rightarrow \uB$ is called {\emph{$\uC$-equable}}, if every 
continuous homomorphism $h \colon \uA \to \uC$ has a unique continuous homomorphic 
extension $\wh{h} \colon \uB \to\uC$ along $e$ as in
\begin{diagram}
{\uA}&\rTo^{e} &{\uB}\\
   &\rdTo_{h}&\dTo>{\wh{h}}\\
   &       &\underline{C}
\end{diagram}
such that the map 
$$h \mapsto \wh{h} \colon \Hom(\uA,\uC) \to \Hom(\uB,\uC)$$ 
is an isomorphism. } 
\end{defi}

This definition of ``equable'' fits under the general scheme 
of Definition~\ref{def:1'}. First notice that the map 
$\Hom(e,\uC) \colon \Hom(\uB,\uC) \to \Hom(\uA,\uC)$ is always continuous. 
It is surjective if and only if every continuous homomorphism 
$h \colon \uA \to \uC$ has at least one continuous homomorphic extension 
$\wh{h} \colon \uB\to\uC$ along $e$. It is bijective if and only if every  
continuous homomorphism $h\colon \uA \to \uC$ has a unique continuous
homomorphic extension $\wh{h} \colon \uB\to\uC$ along $e$. Then the map 
$\Hom(e,\uC) \colon \Hom(\uB,\uC) \to \Hom(\uA,\uC)$ is an isomorphism 
within $\dcpo$ iff the inverse map $h \mapsto \wh{h}$ preserves the order 
(since isomorphisms in $\dcpo$ are bijective maps which both preserve
and reflect the partial order).
Thus, the map $e$ is {\emph{$\uC$-equable}} if and only if
$\Hom(e,\uC)$ is an isomorphism of dcpos.

\begin{defi}\label{def:2''}{\rm 
A $\dcpo_\Omega$-algebra $\underline{C}$ is said to be
\emph{$\uR$-complete}, if  every $\uR$-equable
$\dcpo_\Omega$-homomorphism $e \colon \uA \rightarrow \uB$ is 
also $\uC$-equable.  
 We denote by $\dcpo_{\uR}$ the category of $\underline{R}$-complete
 $\dcpo_\Omega$-algebras and $\dcpo_\Omega$-homomorphisms. }  
\end{defi}

It is our aim in this section to prove the following theorem.

\begin{thm}\label{th:1}
For every finitary algebraic signature $\Omega$ and every computational
prototype $\uR$, the category $\dcpo_{\uR}$ of all $\uR$-complete
$\dcpo_\Omega$-algebras is a full reflective subcategory of the
category $\dcpo_\Omega$ of all $\dcpo_\Omega$-algebras.
\end{thm}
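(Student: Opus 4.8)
The plan is to establish $\dcpo_{\uR}$ as a full reflective subcategory of $\dcpo_\Omega$ by constructing a reflector $L \colon \dcpo_\Omega \to \dcpo_{\uR}$, left adjoint to the full inclusion. Fullness is immediate since $\dcpo_{\uR}$ is defined to have all $\dcpo_\Omega$-homomorphisms between its objects as morphisms. The real content is the reflection: for each $\dcpo_\Omega$-algebra $\uA$ I must produce an $\uR$-complete algebra $L\uA$ together with a universal homomorphism $\eta_{\uA} \colon \uA \to L\uA$ such that every homomorphism from $\uA$ into an $\uR$-complete algebra factors uniquely through $\eta_{\uA}$.

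The natural strategy is to obtain $L\uA$ as a suitable closure of $\uA$ inside a large product of copies of $\uR$. More precisely, I would consider the canonical evaluation homomorphism $\uA \to \uR^{\Hom(\uA,\uR)}$ (or, to be safe, a product over all homomorphisms from $\uA$ into all $\uR$-complete algebras from a suitable generating set), and take $L\uA$ to be the \emph{smallest $\uR$-complete sub-$\dcpo_\Omega$-algebra} of that product containing the image of $\uA$. For this to make sense I need to know that arbitrary intersections of $\uR$-complete subalgebras are again $\uR$-complete and that $\uR^I$ is itself $\uR$-complete. The latter follows because $\uR$ is trivially $\uR$-complete (the identity argument) and, by Lemma~\ref{compcomp}, the forgetful functor from $\uR$-complete algebras creates limits, so products of $\uR$-complete algebras — in particular powers $\uR^I$ — are $\uR$-complete. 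The closure-under-intersection property again reduces to limit-creation, since an intersection of subalgebras is an equalizer/limit in $\dcpo_\Omega$ and the $\uR$-completeness of each factor is preserved. This gives a well-defined $L\uA$ with its structure map $\eta_{\uA}$.

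To verify the universal property, let $\uC$ be $\uR$-complete and let $f \colon \uA \to \uC$ be any homomorphism. I want a unique extension $\bar f \colon L\uA \to \uC$ with $\bar f \circ \eta_{\uA} = f$. Uniqueness is straightforward because $\eta_{\uA}(A)$ generates $L\uA$ as an $\uR$-complete algebra and any two extensions agreeing on a generating set that is itself closed under the completion operation must coincide. The crux is \emph{existence}: here is where the definition of $\uR$-completeness via $\uR$-equable maps does the work. The inclusion $\eta_{\uA}(A) \hookrightarrow L\uA$, or more precisely the comparison between the free-generation of $L\uA$ and the image of $\uA$, should be arranged to be an $\uR$-equable homomorphism; then $\uR$-completeness of $\uC$ guarantees that every homomorphism into $\uC$ extends uniquely across it, yielding $\bar f$. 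I expect \textbf{this existence step — showing that the relevant comparison map is $\uR$-equable, so that $\uR$-completeness of $\uC$ applies — to be the main obstacle}, since it is precisely the point where the abstract completeness condition must be matched against the concretely constructed closure, and one must check that the closure operation is generated by $\uR$-equable maps rather than merely by limits. Once existence and uniqueness are in place, naturality of $\eta$ follows routinely, and the adjunction $L \dashv \text{(inclusion)}$ exhibits $\dcpo_{\uR}$ as a full reflective subcategory.
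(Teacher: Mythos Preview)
Your proposal is correct and follows essentially the same route as the paper: construct the reflection of $\uA$ as the least $\uR$-complete subalgebra of $\uR^{\Hom(\uA,\uR)}$ containing the image of the evaluation map, use limit-creation (Lemma~\ref{compcomp}) to see this intersection is $\uR$-complete, and then derive the universal property from $\uR$-equability of the unit. The paper dispatches the ``main obstacle'' you flag exactly as one would hope: the inverse to $\Hom(r_{\uA},\uR)$ is given explicitly by $h \mapsto \bigl(\lambda f.\,f(h)\bigr)$, and uniqueness of extensions is the equalizer argument you allude to---the equalizer of two extensions is an $\uR$-complete subalgebra containing the image of $\uA$, hence all of $\uR(\uA)$ by minimality.
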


Recall that for a dcpo $X$ and $\dcpo_\Omega$-algebras $\uA$ and $\uB$ the 
canonical isomorphism $(B^A)^X \cong (B^X)^A$ restricts to a dcpo-isomorphism 
$\Hom(\uA,\uB)^X \cong \Hom(\uA,\uB^X)$
which observation will be useful when proving the following lemma.

\begin{lem}
A Scott-continuous homomorphism $e \colon \uA \to \uB$ 
is $\uC$-equable if and only if it is $\uC^X$-equable for all dcpos $X$. 
\end{lem}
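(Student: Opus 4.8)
The plan is to prove the equivalence by leveraging the ``comment'' preceding the lemma, namely the dcpo-isomorphism $\Hom(\uA,\uB)^X \cong \Hom(\uA,\uB^X)$, together with the characterization established just above Definition~\ref{def:2''}: a homomorphism $e\colon\uA\to\uB$ is $\uC$-equable if and only if $\Hom(e,\uC)\colon\Hom(\uB,\uC)\to\Hom(\uA,\uC)$ is an isomorphism in $\dcpo$. So the whole statement reduces to the claim that $\Hom(e,\uC)$ is an isomorphism for all $X$ simultaneously with $\uC^X$ in place of $\uC$ precisely when it is an isomorphism for $\uC$ itself.

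First I would observe that the ``if'' direction is trivial: taking $X$ to be the one-point dcpo $\mathbf 1$ gives $\uC^{\mathbf 1}\cong\uC$, so $\uC^X$-equability for all $X$ specializes to $\uC$-equability. The substance is therefore the ``only if'' direction. Here I would start from the assumption that $\Hom(e,\uC)$ is a dcpo-isomorphism and show that $\Hom(e,\uC^X)$ is a dcpo-isomorphism for every dcpo $X$. The key step is to transport the map $\Hom(e,\uC^X)$ across the natural isomorphism $\Hom(\uD,\uC^X)\cong\Hom(\uD,\uC)^X$ (applied with $\uD=\uA$ and $\uD=\uB$). Under this identification, $\Hom(e,\uC^X)$ corresponds to the map $\Hom(e,\uC)^X\colon\Hom(\uB,\uC)^X\to\Hom(\uA,\uC)^X$, that is, to postcomposition-in-the-exponent with $\Hom(e,\uC)$, namely $u\mapsto\Hom(e,\uC)\circ u$. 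I would need to check carefully that the two naturality squares (for $\uA$ and for $\uB$) commute, so that the transported map is genuinely $\Hom(e,\uC)^X$; this is a routine diagram chase using the explicit form of the isomorphism coming from $(B^A)^X\cong(B^X)^A$.

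Once this identification is in place, the conclusion is immediate from functoriality of the exponential: since $\Hom(e,\uC)$ is a dcpo-isomorphism by hypothesis, the induced map $(-)^X$ applied to it, namely $\Hom(e,\uC)^X$, is again a dcpo-isomorphism (the functor $(-)^X$ on $\dcpo$ preserves isomorphisms, with inverse $\bigl(\Hom(e,\uC)^{-1}\bigr)^X$). Transporting back along the natural isomorphism then shows $\Hom(e,\uC^X)$ is a dcpo-isomorphism, which is exactly $\uC^X$-equability of $e$.

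The step I expect to be the main obstacle is verifying the naturality, i.e.\ that the isomorphism $\Hom(\uD,\uC)^X\cong\Hom(\uD,\uC^X)$ really is natural in $\uD$ in the precise sense needed to convert $\Hom(e,\uC^X)$ into $\Hom(e,\uC)^X$. While intuitively clear, this requires pinning down the explicit formulas: an element of $\Hom(\uB,\uC)^X$ is a continuous map $x\mapsto u_x$ with each $u_x\colon\uB\to\uC$ a homomorphism, and its image in $\Hom(\uB,\uC^X)$ is the homomorphism $b\mapsto(x\mapsto u_x(b))$; one must confirm that precomposing with $e$ commutes with this correspondence on both sides. This is purely a bookkeeping check with no genuine mathematical difficulty, so the proof is essentially a transport-of-structure argument along the stated exponential isomorphism.
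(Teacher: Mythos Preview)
Your proposal is correct and follows essentially the same route as the paper: the paper also observes that the backward direction is trivial and, for the forward direction, draws the commuting square with horizontal isomorphisms $\Hom(\uB,\uC)^X\cong\Hom(\uB,\uC^X)$ and $\Hom(\uA,\uC)^X\cong\Hom(\uA,\uC^X)$ and vertical maps $\Hom(e,\uC)^X$ and $\Hom(e,\uC^X)$, concluding that the right vertical is an isomorphism because the left one is. Your discussion of the naturality check is exactly the verification that this square commutes, which the paper simply asserts.
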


\begin{proof}
Obviously, the backward direction is trivial. For the forward direction
suppose that $e : \uA \to \uB$ is a $\uC$-equable homomorphism and $X$ is 
an object of $\Cc$. First notice that the diagram
\begin{diagram}
\Hom(\uB,\uC)^X & \rTo^{t_B}_\cong & \Hom(\uB,\uC^X) \\
\dTo^{\Hom(e,\uC)^X} & & \dTo_{\Hom(e,\uC^X)}\\
\Hom(\uA,\uC)^X & \rTo_{t_A}^\cong & \Hom(\uA,\uC^X) 
\end{diagram}
commutes where $t_A$ and $t_B$ are the canonical isomorphisms which
``swap arguments''. Since $e$ is $\uC$-equable the left vertical arrow is an 
isomorphism from which it follows that the right vertical arrow is an isomorphism 
as well.
\end{proof}

From the previous lemma it follows that $\uR$-complete algebras are closed 
under arbitrary $\dcpo$-powers. Since $\uR$ clearly is $\uR$-complete,
it follows that all $\dcpo$-powers $\uR^X$ are $\uR$-complete.

\begin{cor}\label{cor:1}
The category $\dcpo_{\uR}$ inherits $\dcpo$-powers from $\dcpo_\Omega$.
\end{cor}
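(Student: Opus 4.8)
The plan is to deduce the corollary directly from the preceding lemma, which already reduces the whole matter to the stability of $\uR$-completeness under $\dcpo$-powers. First I would make explicit what ``inherits $\dcpo$-powers'' requires: for an $\uR$-complete algebra $\uC$ and any dcpo $X$, the power $\uC^X$ formed in $\dcpo_\Omega$ should again be $\uR$-complete, and it should retain the universal property of the $X$-power when the test algebra is restricted to $\dcpo_{\uR}$.

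For the closure part, I would fix an arbitrary $\uR$-equable homomorphism $e \colon \uA \to \uB$ and show it is $\uC^X$-equable. Since $\uC$ is $\uR$-complete, $e$ is $\uC$-equable, and the forward direction of the previous lemma then gives that $e$ is $\uC^X$-equable for every dcpo $X$. As $e$ was an arbitrary $\uR$-equable homomorphism, this exhibits $\uC^X$ as $\uR$-complete, hence as an object of $\dcpo_{\uR}$.

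It then remains to see that this same $\uC^X$ is the power inside the subcategory. The universal property of the power in the $\dcpo$-enriched category $\dcpo_\Omega$ is precisely the natural dcpo-isomorphism $\Hom(\uD, \uC^X) \cong \Hom(\uD, \uC)^X$ recalled just before the lemma, valid for all $\uD$ in $\dcpo_\Omega$. Because $\dcpo_{\uR}$ is a \emph{full} subcategory of $\dcpo_\Omega$, its hom-objects $\Hom(-,-)$ coincide with those of $\dcpo_\Omega$; restricting $\uD$ to range over $\uR$-complete algebras therefore leaves this isomorphism intact, so $\uC^X$ is the $X$-power of $\uC$ in $\dcpo_{\uR}$ as well.

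There is essentially no hard step here, as the lemma carries the analytic content; the only point demanding a moment's care is that the universal property transports to the full subcategory, which is immediate once one notes that full subcategories inherit their hom-objects unchanged. The special case $\uC = \uR$, combined with the trivial $\uR$-completeness of $\uR$, recovers in passing that every power $\uR^X$ lies in $\dcpo_{\uR}$.
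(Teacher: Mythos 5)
Your proposal is correct and follows essentially the same route as the paper: the paper likewise deduces from the preceding lemma that $\uR$-completeness is preserved under $\dcpo$-powers (and notes $\uR^X$ is $\uR$-complete since $\uR$ is), which is exactly your closure argument. Your additional observation that the power's universal property restricts to the full subcategory because hom-objects are unchanged is left implicit in the paper, but it is the same (trivial) point, not a different approach.
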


For the next result, let us fix $\dcpo_\Omega$-algebras $\uA$ and $\uB$.  
There is a canonical continuous map $\eta \colon A \rightarrow B^{B^A}$, namely
the transposition of the identity on $B^A$, which in $\lambda$-notation can
be written as $\lambda x.\lambda f. \: f (x)$. Writing $\iota$ for the 
inclusion of $\Hom(\uA,\uB)$ into $B^A$ we can define the map
$B^{\iota} \circ \eta : A \rightarrow B^{\Hom(\uA,\uB)}$ which we also denote
by $\eta$ and in $\lambda$-notation can be written as 
$\lambda x.\lambda h.\: h (x)$.  

\begin{lem}\label{eta}
For all $\dcpo_\Omega$-algebras $\uA$ and $\uB$, the map
$\eta \colon  \uA \rightarrow \uB^{\Hom(\uA,\uB)}$ is an $\Omega$-homomorphism. 
\end{lem}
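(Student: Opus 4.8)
The plan is to unwind the definitions of $\eta$ and of the pointwise algebra structure on the power $\uB^{\Hom(\uA,\uB)}$, and then verify directly that $\eta$ commutes with each operation $\omega \in \Omega$. Recall that $\eta \colon A \to B^{\Hom(\uA,\uB)}$ is given in $\lambda$-notation by $\eta(x) = \lambda h.\,h(x)$, where $h$ ranges over continuous $\Omega$-homomorphisms from $\uA$ to $\uB$. What must be checked is that for every $\omega \in \Omega$ of arity $n$ and all $a_1,\dots,a_n \in A$,
\[
\eta(\omega_A(a_1,\dots,a_n)) = \omega_{B^{\Hom(\uA,\uB)}}(\eta(a_1),\dots,\eta(a_n)).
\]

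First I would compute the right-hand side using the definition of the operations on the power. By the pointwise definition recalled in the excerpt, for any $u_1,\dots,u_n \in B^{\Hom(\uA,\uB)}$ we have $\omega_{B^{\Hom(\uA,\uB)}}(u_1,\dots,u_n) = \lambda h.\,\omega_B(u_1(h),\dots,u_n(h))$. Applying this to $u_i = \eta(a_i)$ and using $\eta(a_i)(h) = h(a_i)$ yields
\[
\omega_{B^{\Hom(\uA,\uB)}}(\eta(a_1),\dots,\eta(a_n)) = \lambda h.\,\omega_B(h(a_1),\dots,h(a_n)).
\]
On the other hand, the left-hand side is $\eta(\omega_A(a_1,\dots,a_n)) = \lambda h.\,h(\omega_A(a_1,\dots,a_n))$.

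The two expressions therefore agree precisely when $h(\omega_A(a_1,\dots,a_n)) = \omega_B(h(a_1),\dots,h(a_n))$ for every $h$ in the indexing object. This is where the key observation enters: the index set of the power is $\Hom(\uA,\uB)$, whose elements are by definition exactly the continuous $\Omega$-homomorphisms $h \colon \uA \to \uB$. Hence each such $h$ satisfies the homomorphism identity $h(\omega_A(\vec a)) = \omega_B(h(\vec a))$, so the two $\lambda$-expressions denote the same function, and the required equation holds.

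I do not expect a genuine obstacle here; the content of the lemma is essentially that restricting the exponent from all of $B^A$ to the subobject $\Hom(\uA,\uB)$ is exactly what makes $\eta$ homomorphic. (Indeed, into the full power $\uB^{B^A}$ the map $\lambda x.\lambda f.\,f(x)$ would \emph{not} be a homomorphism, since arbitrary continuous $f$ need not respect $\omega$.) The only point to keep in mind is that $\eta$ is continuous by construction, being the composite $B^{\iota}\circ\eta$ of continuous maps as set up before the lemma, so no separate continuity verification is needed; it remains only to record the equality of operations established above.
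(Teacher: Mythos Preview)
Your proof is correct and follows essentially the same direct computation as the paper: both unfold $\eta$ and the pointwise operations on $\uB^{\Hom(\uA,\uB)}$ and then invoke the homomorphism identity satisfied by each $h \in \Hom(\uA,\uB)$. Your added remark about why restricting from $B^A$ to $\Hom(\uA,\uB)$ is essential is a nice clarification but not needed for the argument.
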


\begin{proof}
For $\omega\in\Omega$ with arity $n$ we have
$$\begin{array}{lll}
\eta(\omega_A(a_1,\dots,a_n)) 
& = & (\lambda x.\lambda h. \: (h(x))(\omega_A(a_1,\dots,a_n))\\ 
&=&   \lambda h. \: h(\omega_A(a_1,\dots,a_n))\\
&=&   \lambda h. \: \omega_A(h(a_1),\dots,h(a_n))\\
&=&  \omega_{B^{\Hom(\uA,\uB)}}(\lambda h. \: h(a_1),\dots,\lambda
h. \: h(a_n)) \\
&=& \omega_{B^{\Hom(\uA,\uB)}}(\eta(a_1),\dots,\eta(a_n))
\end{array}$$  
for all $a_1,\dots,a_n\in A$.
\end{proof}

Thus, in particular, for every $\dcpo_\Omega$-algebra $\uA$,
we obtain a canonical $\dcpo_\Omega$-homomor\-phism 
$\eta_{\uA} \colon \uA \rightarrow \uR^{\Hom(\uA,\uR)}$. 
Let $\mathcal{J}$ be the collection of all $\uR$-complete 
$\dcpo_\Omega$-subalgebras of $\uR^{\Hom(\uA,\uR)}$ containing the image  
of $A$ under $\eta_{\uA}$.
By Corollary \ref{cor:1}, the algebra $\uR^{\Hom(\uA,\uR)}$ is
$\uR$-complete and hence a member of $\mathcal{J}$, so that
$\mathcal{J}$ is nonempty. We can form the $\dcpo_\Omega$-algebra 
$\uR(\uA)= \bigcap \mathcal J$ and write $r_{\uA} \colon \uA \to \uR(\uA)$ 
for the corestriction of $\eta_{\uA} \colon \uA \to \uR^{\Hom(\uA,\uR)}$.

\begin{prop}\label{prop:factorization}
For every $\dcpo_\Omega$-algebra $\uA$ we have that
\begin{enumerate}
\item $\uR(\uA)$ is $\uR$-complete 
\item $r_{\uA} \colon \uA \to \uR(\uA)$ is $\uR$-equable.
\end{enumerate}
\end{prop}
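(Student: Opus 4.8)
The plan is to treat the two assertions separately, establishing (1) by a closure property of $\uR$-complete algebras under limits and (2) by exploiting the minimality of $\uR(\uA)$ among the members of $\mathcal{J}$.

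For (1), I would observe that $\uR(\uA) = \bigcap \mathcal{J}$ is a limit of $\uR$-complete algebras. Concretely, form the diagram — small, since the subalgebras of the fixed dcpo $\uR^{\Hom(\uA,\uR)}$ form a set — whose objects are the members of $\mathcal{J}$ together with the ambient algebra $\uR^{\Hom(\uA,\uR)}$ and whose arrows are the inclusions. A cone over this diagram is exactly a map landing inside every member of $\mathcal{J}$, so the limit computed in $\dcpo$ is precisely the set-theoretic intersection $\bigcap \mathcal{J}$ equipped with the inherited (pointwise) dcpo and $\Omega$-algebra structure. All objects of the diagram are $\uR$-complete: the members of $\mathcal{J}$ by definition, and the ambient power by Corollary~\ref{cor:1}. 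Since by Lemma~\ref{compcomp} the forgetful functor $\dcpo_{\uR} \to \dcpo$ creates limits, the intersection $\uR(\uA)$ is again $\uR$-complete.

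For (2), by the remark following Definition~\ref{def:1''} it suffices to show that $\Hom(r_{\uA},\uR)\colon \Hom(\uR(\uA),\uR) \to \Hom(\uA,\uR)$, $k \mapsto k \circ r_{\uA}$, is an isomorphism of dcpos; since this map is always continuous, I must produce a monotone two-sided inverse. The candidate inverse is $h \mapsto \pi_h|_{\uR(\uA)}$, where $\pi_h \colon \uR^{\Hom(\uA,\uR)} \to \uR$ is evaluation at $h$. Each $\pi_h$ is a $\dcpo_\Omega$-homomorphism (projections of powers are homomorphisms), its restriction to the subalgebra $\uR(\uA)$ is again one, and $\pi_h(\eta_{\uA}(x)) = \eta_{\uA}(x)(h) = h(x)$ for all $x \in A$ by construction of $\eta_{\uA}$; this shows that every $h$ admits the extension $\pi_h|_{\uR(\uA)}$ along $r_{\uA}$, so $\Hom(r_{\uA},\uR)$ is surjective.

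The crux — and the step I expect to be the main obstacle — is injectivity, i.e.\ the \emph{uniqueness} of extensions, for which the minimality of $\uR(\uA)$ is essential. Given two homomorphisms $\phi,\psi \colon \uR(\uA) \to \uR$ with $\phi \circ r_{\uA} = \psi \circ r_{\uA}$, I would consider their equalizer $E = \{x \in \uR(\uA) : \phi(x) = \psi(x)\}$. It is a sub-dcpo (both maps are continuous) and a subalgebra (both are homomorphisms) of $\uR^{\Hom(\uA,\uR)}$, it contains the image $\eta_{\uA}(A) = r_{\uA}(A)$, and, being the equalizer of two maps into the $\uR$-complete algebra $\uR$ from the $\uR$-complete algebra $\uR(\uA)$, it is itself $\uR$-complete by Lemma~\ref{compcomp}. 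Hence $E \in \mathcal{J}$, so that $\uR(\uA) = \bigcap \mathcal{J} \subseteq E$, forcing $\phi = \psi$. Thus $\Hom(r_{\uA},\uR)$ is bijective with inverse $h \mapsto \pi_h|_{\uR(\uA)}$, and this inverse is monotone: if $h_1 \le h_2$ in $\Hom(\uA,\uR)$ then for every $x \in \uR(\uA) \subseteq \uR^{\Hom(\uA,\uR)}$ — an element which is a Scott-continuous, hence monotone, function — we have $\pi_{h_1}(x) = x(h_1) \le x(h_2) = \pi_{h_2}(x)$. A continuous bijection with monotone inverse is an isomorphism of dcpos, whence $r_{\uA}$ is $\uR$-equable.
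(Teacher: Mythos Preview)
Your proof is correct and follows essentially the same route as the paper: part (1) via the limit-closure of $\uR$-complete algebras (Lemma~\ref{compcomp}) applied to the intersection $\bigcap\mathcal{J}$, and part (2) via the evaluation-at-$h$ extension together with the equalizer argument for uniqueness. The only cosmetic difference is that the paper writes the inverse as $\ell(h)=\lambda f.\,f(h)$ and invokes $\lambda$-definability for its Scott continuity, whereas you verify monotonicity of $h\mapsto\pi_h|_{\uR(\uA)}$ directly from the fact that the elements of $\uR(\uA)$, being Scott-continuous functions on $\Hom(\uA,\uR)$, are monotone; both justifications suffice, as the paper itself notes after Definition~\ref{def:1''} that monotonicity of the inverse is all that is required.
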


\begin{proof} 
By construction $\uR(\uA)$ is a $\dcpo_\Omega$-subalgebra of 
$\uR^{\Hom(\uA,\uR)}$ containing the image of $\uA$ under $\eta_A$. 
By Lemma~\ref{compcomp}, $\uR(\uA)$ is $\uR$-complete, since the 
intersection of a collection of $\uR$-complete $\dcpo_\Omega$-subalgebras 
is the limit of the corresponding $\dcpo_\Omega$-subalgebra embeddings. 

It remains to show that $r_{\uA} \colon \uA \rightarrow \uR(\uA)$ 
is $\uR$-equable. For $h \in \Hom(\uA,\uR)$ let $\ell(h) : \uR(\uA) \to \uR$ 
be given by $\lambda f{:}\uR(\uA).\:f(h)$ which is easily seen to be a 
homomorphism. Moreover, the map 
$\ell \colon \Hom(\uA,\uR) \to \Hom(\uR(\uA),\uR)$ is continuous 
since it is given by the $\lambda$-term $\lambda h.\lambda f.\: f(h)$. 
The homomorphism $\ell(h)$ extends $h$ along $r_A$ since 
$(\ell(h) \circ r_A)(a) = \ell(h)(r_A(a)) = \ell(h)(\eta_A(a)) = 
 \eta_A(a)(h) = h(a)$ for all $a \in A$. 

We show now that  $\ell(h): \uR(\uA) \rightarrow \uR$ is the unique 
continuous homomorphic extension of $h$ along $r_{\uA}$.
For this purpose suppose $g,g^\prime \in \Hom(\uR(\uA),\uR)$ with 
$g \circ r_{\uA} = h = g^\prime \circ r_{\uA}$. Then the equalizer 
of $g$ and $g^\prime$ contains the image of $A$ under $r_{\uA}$ and thus 
under $\eta_{\uA}$. Since $\uR(\uA)$ and $\uR$ are $\uR$-complete, the 
equalizer is an $\uR$-complete $\dcpo_\Omega$-subalgebra of $\uR(\uA)$ 
by Lemma~\ref{compcomp}. By construction, $\uR(\uA)$ is the smallest 
$\uR$-complete $\dcpo_\Omega$-subalgebra of $\uR^{\Hom(\uA,\uR)}$ containing 
the image of $\eta_A$. Thus, the equalizer of $g$ and $g^\prime$ must be an 
isomorphism from which it follows that $g = g^\prime$ as desired.

Thus, the map $\ell$ is the continuous inverse of $\Hom(r_{\uA},\uR)$, 
i.e.\ $r_{\uA}$ is $\uR$-equable as claimed.
\end{proof}

\begin{rem}
More generally, for a homomorphism $e : \uA \to \uB$ there is a natural 
correspondence between homomorphisms $h$ making the diagram
\begin{diagram}
\uA   & \rTo^{\eta_A \quad}  & \uR^{\Hom(\uA,\uR)}\\
\dTo^{e} & \ruTo_h &\\
\uB &&
\end{diagram}
commute and continuous sections $s$ of the $\Cc$-map $\Hom(e,\uR) \colon
\Hom(\uB,\uR) \to \Hom(\uA,\uR)$ given by $h(y)(p) = s(p)(y)$ for $y \in B$
and $p \in \Hom(\uA,\uR)$.

Obviously, such $e$ are $\uR$-equable iff $\Hom(e,\uR)$ is monic.
\end{rem}

Having Proposition~\ref{prop:factorization} available 
we can easily give now the

\medskip
\noindent
\emph{Proof} of Theorem~\ref{th:1} :\\
Let $\uA$ be a $\dcpo_\Omega$-algebra. Then $\uR(\uA)$ is $\uR$-complete 
and, since $r_{\uA} \colon \uA \to \uR(\uA)$ is $\uR$-equable, for every 
homomorphism $h$ from $\uA$ to an $\uR$-complete algebra $\uB$ there exists 
a unique homomorphism $\wh{h} : \uR(\uA) \to \uB$ 
with $\wh{h} \circ r_{\uA} = h$.  $\hfill\Box$

\begin{defi}{\rm
For a $\dcpo_\Omega$-algebra $\uA$ we call $\uR(\uA)$ the
\emph{$\uR$-repletion of $\uA$} and $r_{\uA} \colon \uA \to \uR(\uA)$  
the \emph{reflection map} for $\uA$.}
\end{defi}             

One can give the following characterization of $\uR$-repletion 
which avoids any reference to $\uR$-completeness.

\begin{prop}\label{prop:altcharep}
For a $\dcpo_\Omega$-algebra $\uA$, up to isomorphism 
$r_{\uA} \colon \uA \to \uR(\uA)$ is the unique $\uR$-equable homomorphism  
such that for every $\uR$-equable homomorphism $h \colon \uA \to \uB$ 
there is a unique homomorphism
$\wh{h} : \uB \to \uR(\uA)$ with $\wh{h} \circ h = r_{\uA}$.

Alternatively, one may characterize $r_{\uA}$ as the up to isomorphism
unique $\uR$-equable homomorphism from $\uA$ to an $\uR$-complete algebra.
\end{prop}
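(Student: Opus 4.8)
The plan is to prove both characterizations by exploiting the universal property of $r_{\uA}$ established in Proposition~\ref{prop:factorization} together with the reflection established in Theorem~\ref{th:1}. I treat the two statements in turn, since the alternative characterization is the cleaner of the two and its proof reuses the machinery built for the first.

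For the \emph{alternative} characterization, first I would verify that $r_{\uA} \colon \uA \to \uR(\uA)$ has the stated property: by Proposition~\ref{prop:factorization} it is indeed an $\uR$-equable homomorphism from $\uA$ to an $\uR$-complete algebra. For uniqueness, suppose $e \colon \uA \to \uB$ is another $\uR$-equable homomorphism with $\uB$ being $\uR$-complete. Since $r_{\uA}$ is $\uR$-equable and $\uB$ is $\uR$-complete, the reflection property (proved in Theorem~\ref{th:1}) gives a unique homomorphism $\wh{e} \colon \uR(\uA) \to \uB$ with $\wh{e} \circ r_{\uA} = e$. Symmetrically, since $e$ is $\uR$-equable and $\uR(\uA)$ is $\uR$-complete, there is a unique $k \colon \uB \to \uR(\uA)$ with $k \circ e = r_{\uA}$. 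The key step is then to show $\wh{e}$ and $k$ are mutually inverse: the composite $k \circ \wh{e} \colon \uR(\uA) \to \uR(\uA)$ satisfies $(k \circ \wh{e}) \circ r_{\uA} = k \circ e = r_{\uA}$, and by the uniqueness clause of the reflection (applied with $\uB$ replaced by the $\uR$-complete algebra $\uR(\uA)$ itself, where the identity already mediates) we get $k \circ \wh{e} = \id$; the dual argument gives $\wh{e} \circ k = \id$. Hence $r_{\uA}$ and $e$ are isomorphic in the relevant sense.

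For the \emph{first} characterization, the universal property quantifies over \emph{all} $\uR$-equable homomorphisms $h \colon \uA \to \uB$ (not merely those into $\uR$-complete codomains), and asks for a unique mediating $\wh{h} \colon \uB \to \uR(\uA)$ with $\wh{h} \circ h = r_{\uA}$. The natural approach is to observe that if $h \colon \uA \to \uB$ is $\uR$-equable, then the composite $\Hom(h, \uR)$ is an isomorphism, so applying $\Hom(-, \uR)$ twice and using the map $\eta$ from Lemma~\ref{eta} lets me transport $r_{\uA}$ across. Concretely, $\uR$-equability of $h$ gives an isomorphism $\Hom(\uB, \uR) \cong \Hom(\uA, \uR)$, inducing an isomorphism $\uR^{\Hom(\uA,\uR)} \cong \uR^{\Hom(\uB,\uR)}$ under which $\eta_{\uA}$ corresponds to $\eta_{\uB} \circ h$; corestricting to the smallest $\uR$-complete subalgebra containing the image then yields the desired $\wh{h}$ into $\uR(\uA)$. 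Uniqueness of $\wh{h}$ follows because $h$ is epic in the relevant sense — its image generates, under $\uR$-equability, enough of $\uB$ that any two mediating maps agreeing after precomposition with $h$ must coincide on an $\uR$-complete subalgebra equal to all of $\uB$, by the same equalizer argument used in the proof of Proposition~\ref{prop:factorization}.

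The main obstacle I anticipate is the first characterization rather than the second: one must be careful that $\wh{h}$ genuinely lands inside $\uR(\uA) = \bigcap \mathcal{J}$ and not merely in the ambient power $\uR^{\Hom(\uA,\uR)}$, and that the functoriality of $\uR^{\Hom(-,\uR)}$ interacts correctly with the corestriction defining $\uR(\uA)$. The cleanest route is probably to reduce the first characterization to the second: every $\uR$-equable $h \colon \uA \to \uB$ factors as $r_{\uB} \circ$ (its own reflection data), and since $\uR(\uB)$ is $\uR$-complete with $\uA \to \uB \to \uR(\uB)$ again $\uR$-equable, the alternative characterization forces $\uR(\uB) \cong \uR(\uA)$, after which $\wh{h}$ is recovered as the composite $\uB \to \uR(\uB) \cong \uR(\uA)$. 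This keeps the verification of the mediating map's codomain entirely inside results already proved and avoids direct manipulation of the intersection $\bigcap \mathcal{J}$.
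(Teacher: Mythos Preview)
Your approach is correct in spirit but unnecessarily circuitous, and there is a small gap in your uniqueness argument for the first characterization.

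The paper's proof is essentially one line per part, and the key observation you are missing is this: since $\uR(\uA)$ is $\uR$-complete, \emph{every} $\uR$-equable homomorphism $h \colon \uA \to \uB$ is automatically $\uR(\uA)$-equable, i.e.\ $\Hom(h,\uR(\uA))$ is an isomorphism. Applied to $r_{\uA} \in \Hom(\uA,\uR(\uA))$, this gives existence \emph{and} uniqueness of $\wh{h} \in \Hom(\uB,\uR(\uA))$ with $\wh{h}\circ h = r_{\uA}$ immediately, with no need to touch the explicit construction $\bigcap\mathcal{J}$ or to pass through $\uR(\uB)$. The paper then derives the second characterization from the first: any $\uR$-equable $h \colon \uA \to \uB$ with $\uB$ $\uR$-complete satisfies the universal property of the first characterization (same one-line argument with $\uB$ in place of $\uR(\uA)$), hence is isomorphic to $r_{\uA}$.

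Your reversal of the order (second characterization first) is fine and your argument there is correct. But for the first characterization your proposed uniqueness argument does not work as stated: the equalizer trick from Proposition~\ref{prop:factorization} relied on $\uR(\uA)$ being the \emph{smallest} $\uR$-complete subalgebra containing the image, and there is no analogous minimality for an arbitrary $\uB$. The correct uniqueness argument is precisely the injectivity of $\Hom(h,\uR(\uA))$ noted above --- which, once you see it, also renders your detour through $\uR(\uB)$ for existence superfluous.
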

\begin{proof}
Obviously, the condition of the first characterization determines $r_{\uA}$ 
uniquely up to isomorphism. That $r_{\uA}$ has the required property is an 
immediate consequence of the fact that $\uR(\uA)$ is $\uR$-complete.

The second characterization follows from the fact that $\uR(\uA)$ is
$\uR$-complete and that every $\uR$-equable homomorphism $h$ from $\uA$
to an $\uR$-complete algebra satisfies the condition of the first 
characterization.
\end{proof}

According to the alternative Definition~\ref{def:1} of $\uR$-equability, 
for an algebra $\uA$ a map $e \colon X \to A$ is $\uR$-equable iff every 
continuous map $f \colon X \to R$ has a unique continuous homomorphic 
extension $\wh{f} \colon \uA \to \uR$ such that $f \mapsto \widehat{f} 
\colon R^X \to \Hom(\uA,\uR)$ is continuous. Obviously, a map 
$e \colon X \to A$ is $\uR$-equable in the sense of Definition~\ref{def:1} 
if and only if its homomorphic extension $\wh{e} \colon \cF_\Omega(X) \to \uA$ 
is $\uR$-equable in the sense of Definition~\ref{def:1'}. 

For a dcpo $X$ we consider the map $\eta_X\colon X \to R^{R^X}$ defined 
as $\eta_X(x) = \lambda f.\: f(x)$. Let $w\uR(X)$ be the least weakly
$\uR$-complete $\dcpo_\Omega$-subalgebra of $\uR^{R^X}$ containing 
the image of $\eta_X$ and $r^w_X\colon X\to w\uR(X)$ be the corestriction of 
$\eta_X$.\footnote{Such a subalgebra exists by an argument analogous 
to the one in Proposition~\ref{prop:factorization}.} We call it the weak 
repletion of $X$. Again one can prove that $r^w_X$ is $\uR$-equable. 
Moreover, one easily proves the following analogue of 
Proposition~\ref{prop:altcharep}.

\begin{prop}\label{prop:altcharepweak}
For a dcpo $X$ up to isomorphism $r^w_X \colon X \to w\uR(X)$ is the 
unique weakly $\uR$-equable map such that for every weakly $\uR$-equable 
map $f \colon X \to \uA$ there is a unique homomorphism 
$\wh{f} \colon \uA \to w\uR(X)$ with $\wh{f} \circ f = r^w_X$.

Alternatively, one may characterize $r^w_X$ as the up to isomorphism 
unique weakly $\uR$-equable map to a weakly $\uR$-complete algebra.
\end{prop}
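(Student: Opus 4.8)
The plan is to mimic the structure of Proposition~\ref{prop:altcharep}, since Proposition~\ref{prop:altcharepweak} is the exact ``weak'' analogue with $\uR$-equable homomorphisms replaced by weakly $\uR$-equable maps $f\colon X\to\uA$ and with $\uR(\uA)$ replaced by $w\uR(X)$. The two assertions to establish are (i) that $r^w_X$ itself is weakly $\uR$-equable and lands in a weakly $\uR$-complete algebra, and (ii) the universal factorization property, from which uniqueness up to isomorphism is immediate. The footnote already grants that $w\uR(X)$ exists as the least weakly $\uR$-complete subalgebra of $\uR^{R^X}$ containing the image of $\eta_X$, and the paragraph preceding the proposition asserts that $r^w_X$ is $\uR$-equable; I would take both as given or reprove them in a single sentence by transporting the argument of Proposition~\ref{prop:factorization} across the correspondence ``$e\colon X\to A$ is $\uR$-equable in the sense of Def.~\ref{def:1} iff $\wh e\colon\cF_\Omega(X)\to\uA$ is $\uR$-equable in the sense of Def.~\ref{def:1'}'' noted just before the statement.

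First I would verify the factorization property. Given a weakly $\uR$-equable map $f\colon X\to\uA$ into any algebra $\uA$, I want a unique homomorphism $\wh f\colon\uA\to w\uR(X)$ with $\wh f\circ f=r^w_X$. The key is that weak $\uR$-equability of $f$ means exactly that every continuous $p\colon X\to R$ extends uniquely to a homomorphism $\bar p\colon\uA\to\uR$ with $\bar p\circ f=p$, continuously in $p$ — this is Definition~\ref{def:1} read through the free-algebra translation. So I would define $\wh f\colon\uA\to\uR^{R^X}$ by $\wh f(a)=\lambda p.\:\bar p(a)$, check it is a homomorphism (each coordinate $\bar p$ is, and operations on the power are pointwise), and verify $\wh f\circ f=\eta_X$ by computing $\wh f(f(x))=\lambda p.\:\bar p(f(x))=\lambda p.\:p(x)=\eta_X(x)$. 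Then I must show the image of $\wh f$ lands in $w\uR(X)$: since $\uA$ is the target of the weakly $\uR$-equable $f$ but is not assumed weakly $\uR$-complete, I would instead argue that $\wh f(A)$ together with $\eta_X(X)$ generates a weakly $\uR$-complete subalgebra, and by minimality of $w\uR(X)$ the corestriction is forced; alternatively, the cleaner route is to replace $\uA$ by its weak repletion and use the first characterization's universal property recursively. Uniqueness of $\wh f$ follows by the same equalizer argument as in Proposition~\ref{prop:factorization}: two such extensions agree on $\eta_X(X)$, their equalizer is a weakly $\uR$-complete subalgebra containing that image, hence equals $w\uR(X)$ by minimality.

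Second, the alternative characterization follows formally: $r^w_X$ is weakly $\uR$-equable and maps into the weakly $\uR$-complete algebra $w\uR(X)$, and conversely any weakly $\uR$-equable map into a weakly $\uR$-complete algebra satisfies the factorization condition of the first characterization, so by the uniqueness in that characterization it is isomorphic to $r^w_X$. This is a verbatim transport of the two-sentence proof of Proposition~\ref{prop:altcharep}.

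I expect the main obstacle to be the one place where the weak theory genuinely diverges from the complete theory: ensuring that $\wh f$ corestricts into $w\uR(X)$ rather than merely into the larger power $\uR^{R^X}$. In the complete case (Proposition~\ref{prop:factorization}) the analogous map $\ell$ landed in the right subalgebra automatically because the source was the generating algebra itself; here the source $\uA$ is arbitrary and only the weak completeness of $w\uR(X)$ — via Lemma~\ref{compcomp}, which gives closure under the relevant limits/intersections — lets me pin down the corestriction. I would want to state carefully that the relevant minimality and the limit-creation of Lemma~\ref{compcomp} interact correctly for \emph{weak} $\uR$-completeness, since the whole point of the surrounding discussion is that weak and strong completeness need not coincide; fortunately Lemma~\ref{compcomp} is stated for $w\Cc_{\uR}$ as well, so the equalizer step is licensed.
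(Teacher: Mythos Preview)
Your detour through an explicit construction of $\wh f$ into the big power $\uR^{R^X}$ is unnecessary and creates the very gap you flag at the end. The paper's intended argument (``analogous to Proposition~\ref{prop:altcharep}'') is a one-liner that sidesteps the corestriction issue entirely: by construction $w\uR(X)$ is weakly $\uR$-complete, so by Definition~\ref{def:2} every $\uR$-equable map $f\colon X\to\uA$ is also $w\uR(X)$-equable. Unwinding Definition~\ref{def:1}, this means precisely that $h\mapsto h\circ f\colon \Hom(\uA,w\uR(X))\to w\uR(X)^X$ is a dcpo-isomorphism; applying its inverse to $r^w_X\in w\uR(X)^X$ yields the unique $\wh f\in\Hom(\uA,w\uR(X))$ with $\wh f\circ f=r^w_X$. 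Existence, uniqueness, and the correct codomain all come for free.

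Two concrete problems with your route as written. First, neither of your corestriction arguments works: the subalgebra generated by $\wh f(A)\cup\eta_X(X)$ has no reason to be weakly $\uR$-complete, and passing to its weakly $\uR$-complete closure gives a subalgebra \emph{containing} $w\uR(X)$, not contained in it, so minimality points the wrong way; the ``recursive'' alternative is circular. Second, your uniqueness argument misplaces the equalizer: two homomorphisms $g,g'\colon\uA\to w\uR(X)$ have their equalizer as a subalgebra of $\uA$, not of $w\uR(X)$, so the minimality of $w\uR(X)$ inside $\uR^{R^X}$ is irrelevant here. (You are pattern-matching on the proof of Proposition~\ref{prop:factorization}, where the equalizer argument establishes $\uR$-equability of $r_{\uA}$; but that step is the analogue of showing $r^w_X$ is $\uR$-equable, which you are already taking as given, not of the factorization property you need.) Both difficulties evaporate once you invoke $w\uR(X)$-equability of $f$ directly.
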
      

We do not know whether the weakly $\uR$-complete algebras form 
a full reflective subcategory of $\dcpo_\Omega$. But weak repletion
can be understood as repletion of free algebras as we show next.

\begin{prop}\label{equivrepl}
For every dcpo $X$ the weak repletion $w\uR(X)$ is isomorphic to the
repletion $\uR(\cF_\Omega(X))$ of the free $\dcpo_\Omega$-algebra 
$\cF_\Omega(X)$ generated by $X$. Moreover, the isomorphism 
$h \colon  \uR(\cF_\Omega(X)) \to w\uR(X)$ can be chosen in such a way 
that  $h \circ r_{\cF_\Omega(X)} \circ i_X = r^w_X$.
\end{prop}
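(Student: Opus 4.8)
The plan is to identify $r_{\cF_\Omega(X)} \circ i_X \colon X \to \uR(\cF_\Omega(X))$ as a weakly $\uR$-equable map to a weakly $\uR$-complete algebra, and then to appeal to the uniqueness up to isomorphism of such maps recorded in Proposition~\ref{prop:altcharepweak}. Concretely, I would check the two hypotheses of the second characterization in that proposition for the pair $\big(\uR(\cF_\Omega(X)),\, r_{\cF_\Omega(X)} \circ i_X\big)$, whence the desired isomorphism $h$ with $h \circ r_{\cF_\Omega(X)} \circ i_X = r^w_X$ drops out.

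First, $\uR(\cF_\Omega(X))$ is weakly $\uR$-complete: by part (1) of Proposition~\ref{prop:factorization} it is $\uR$-complete, and every $\uR$-complete algebra is weakly $\uR$-complete, as observed after Definition~\ref{def:2'}. Second, the composite $r_{\cF_\Omega(X)} \circ i_X$ is weakly $\uR$-equable, i.e.\ $\uR$-equable in the sense of Definition~\ref{def:1}. Here I would invoke the equivalence recorded just before Proposition~\ref{prop:altcharepweak}: a map $e \colon X \to A$ is $\uR$-equable in the sense of Definition~\ref{def:1} iff its homomorphic extension $\wh{e} \colon \cF_\Omega(X) \to \uA$ is $\uR$-equable in the sense of Definition~\ref{def:1'}. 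Applying this to $e = r_{\cF_\Omega(X)} \circ i_X$, the homomorphic extension of $e$ along $i_X$ is, by uniqueness of extensions out of the free algebra, the homomorphism $r_{\cF_\Omega(X)}$ itself, which is $\uR$-equable by part (2) of Proposition~\ref{prop:factorization}. Hence $e$ is weakly $\uR$-equable.

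The one point that needs care --- and which I regard as the only real obstacle --- is that the slogan ``up to isomorphism unique'' must be read as producing an isomorphism \emph{compatible with the reflection maps}, i.e.\ $h \circ (r_{\cF_\Omega(X)} \circ i_X) = r^w_X$, and not merely a bare isomorphism of underlying algebras. To secure this I would not simply quote the second characterization but instead verify the full first characterization of Proposition~\ref{prop:altcharepweak} directly for $r_{\cF_\Omega(X)} \circ i_X$. Given any weakly $\uR$-equable $f \colon X \to \uA$, its free extension $\wt{f} \colon \cF_\Omega(X) \to \uA$ is $\uR$-equable, so Proposition~\ref{prop:altcharep} supplies a unique homomorphism $\psi \colon \uA \to \uR(\cF_\Omega(X))$ with $\psi \circ \wt{f} = r_{\cF_\Omega(X)}$. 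Precomposing with $i_X$ and using $f = \wt{f} \circ i_X$ gives $\psi \circ f = r_{\cF_\Omega(X)} \circ i_X$, and the uniqueness of $\psi$ transfers to uniqueness of the factoring homomorphism via the freeness of $\cF_\Omega(X)$ (any competing $\phi$ with $\phi \circ f = r_{\cF_\Omega(X)} \circ i_X$ satisfies $\phi \circ \wt{f} = r_{\cF_\Omega(X)}$ after restriction along $i_X$).

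Thus $r_{\cF_\Omega(X)} \circ i_X$ enjoys the same universal property that characterizes $r^w_X$ in Proposition~\ref{prop:altcharepweak}. Since a universal property determines its object up to a unique isomorphism, there is a unique isomorphism $h \colon \uR(\cF_\Omega(X)) \to w\uR(X)$ with $h \circ r_{\cF_\Omega(X)} \circ i_X = r^w_X$, which is exactly the assertion. The remaining verifications --- that the various composites are continuous $\Omega$-homomorphisms and that $h$ is genuinely bijective and order-reflecting --- are routine and would follow from the bijective correspondences already set up in the equability framework.
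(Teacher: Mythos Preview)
Your proposal is correct and follows essentially the same route as the paper: verify that $r_{\cF_\Omega(X)} \circ i_X$ is a weakly $\uR$-equable map into a weakly $\uR$-complete algebra (the latter via Proposition~\ref{prop:factorization}(1), the former by identifying its homomorphic extension as $r_{\cF_\Omega(X)}$ and invoking Proposition~\ref{prop:factorization}(2)), then appeal to Proposition~\ref{prop:altcharepweak}. Your additional care in explicitly unwinding the first characterization to secure that $h$ commutes with the reflection maps is more thorough than the paper, which simply takes the compatibility for granted as part of ``up to isomorphism''; this is harmless extra detail rather than a different argument.
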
 

\begin{proof}
Since $\Hom(\cF_\Omega(X),\uR) \to R^X : h \mapsto h \circ i_X$ is an 
isomorphism in $\dcpo$ the maps $i_X$ and $r_{\cF_\Omega(X)} \circ i_X : 
X \to w\uR(\cF_\Omega(X))$ are weakly $\uR$-equable. Moreover, since
$\uR(\cF_\Omega(X))$ is $\uR$-complete it is in particular weakly 
$\uR$-complete. Thus, by Proposition~\ref{prop:altcharepweak} there is 
an isomorphism $h : \uR(\cF_\Omega(X)) \to w\uR(X)$ with 
$h \circ r_{\cF_\Omega(X)} \circ i_X = r^w_X$.
\end{proof}

\section{Free dcpo-algebras}

For semantics, computational effects are mostly modelled by monads
(see \cite{mogcomplambda}). For algebraic effects, monads can be 
specialized to free constructions, an aspect advocated by Plotkin 
and Power \cite{plponotions}. 
In the category of dcpos this comes down to consider $\dcpo$-algebras 
that are free with respect to a collection of equational and 
inequational laws that are considered to be characteristic of
the computational effect under consideration. This approach is quite
different to the observationally-induced approach, to which we want to
compare it in this paper. 

Let $\Omega$ be a finitary signature. If $t_1$ and $t_2$ are two
terms, a $\dcpo_\Omega$-algebra $\uR$ is said to satisfy the
inequational law $t_1\leq t_2$ (resp., the equational law $t_1=t_2$), if
this inequality (resp. equality) holds for every 
instantiation of the variables by elements of $\uR$.    
For a given collection $\cE$ of equational and inequational
laws, denote by $\dcpo_{\Omega,\cE}$ the class of
$\dcpo_\Omega$-algebras that satisfy all the laws in $\cE$. 

It is well-known that over every dcpo $X$ there is a 
\emph{free} $\dcpo_{\Omega,\cE}$-algebra, that is, a
$\dcpo_{\Omega,\cE}$-algebra 
$\cF_{\Omega,\cE}(X)$ together with a continuous map $\iota_X\colon X
\to \cF_{\Omega,\cE}(X)$ such that, for every continuous map $f$ from $X$ 
to some $\dcpo_{\Omega,\cE}$-algebra $\uA$, there is a unique
$\dcpo_\Omega$-homomorphism $\wt f\colon\cF_{\Omega,\cE}(X)\to \uA$
such that $f=\wt f\circ \iota_X$. This is usually proved using the
adjoint functor theorem (see, e.g.\ \cite[Theorem 6.1.2]{abjudomain}).

%It is desirable that free algebras are free also in the enriched sense
It is desirable for free algebras to be free in the enriched sense as well,
namely that $A^{\iota_X}\colon\Hom(\cF_{\Omega,\cE}(X),\uA)\to A^X$
is an isomorphism of dcpos. For this it is necessary and sufficient 
that the extension operator $f\mapsto\wt f\colon A^X \to
\Hom(\cF_{\Omega,\cE}(X),\uA)$ preserves the order. 
We do not believe that all free dcpo-algebras are free in this enriched
sense, although we have no counterexample. But restricting the attention to
continuous dcpos and using the description of $\cF_{\Omega,\cE}(X)$
presented in \cite[Section 6.1.2]{abjudomain} one can show:

\begin{prop}
For continuous dcpos $X$, the free 
$\dcpo_{\Omega,\cE}$-algebra $\cF_{\Omega,\cE}(X)$ always is internally free. 
\end{prop}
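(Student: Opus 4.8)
The excerpt already reduces internal freeness to a single monotonicity statement: since $f\mapsto\wt f$ is inverse to the (automatically monotone) bijection $A^{\iota_X}$, it suffices to show that whenever $f\le g\colon X\to A$ are continuous maps into a $\dcpo_{\Omega,\cE}$-algebra $\uA$, their homomorphic extensions satisfy $\wt f\le \wt g$ pointwise on $\cF_{\Omega,\cE}(X)$. The plan is to prove exactly this, by an abstract argument that never unfolds the construction of $\cF_{\Omega,\cE}(X)$.

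The device I would use is the \emph{comparability algebra}. Let $\uP$ be the subset $P=\{(a,b)\in A\times A : a\le b\}$ of $A\times A$, equipped with the operations inherited componentwise from $\uA\times\uA$. First I would verify that $\uP$ is again a $\dcpo_{\Omega,\cE}$-algebra. It is a sub-dcpo of $A\times A$, because if $(a_i,b_i)_{i\in I}$ is directed with $a_i\le b_i$ then $\dsup_i a_i\le\dsup_i b_i$; it is closed under the operations, because each $\omega_A$ is monotone; the inclusion $\uP\hookrightarrow\uA\times\uA$ and the projections $\pi_1,\pi_2\colon\uP\to\uA$ are continuous homomorphisms; and, being a subalgebra of the model $\uA\times\uA$, it satisfies every equational and inequational law of $\cE$.

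Granting this, the conclusion is purely formal. Since $f\le g$, the assignment $\langle f,g\rangle\colon x\mapsto(f(x),g(x))$ is a well-defined continuous map into $P$, so freeness yields a unique homomorphism $k\colon\cF_{\Omega,\cE}(X)\to\uP$ with $k\circ\iota_X=\langle f,g\rangle$. The composites $\pi_1\circ k$ and $\pi_2\circ k$ are then continuous homomorphisms extending $f$ and $g$ along $\iota_X$, so by the uniqueness clause of freeness they equal $\wt f$ and $\wt g$ respectively. Hence $k(t)=(\wt f(t),\wt g(t))\in P$ for every $t\in\cF_{\Omega,\cE}(X)$, which is precisely $\wt f(t)\le\wt g(t)$.

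The one step that is not mechanical is checking that $\uP$ satisfies the laws of $\cE$, and in particular the inequational ones; this reduces to the elementary fact that term evaluation in $\uP$ is componentwise, i.e.\ $t^{\uP}((a_1,b_1),\dots,(a_n,b_n))=(t^{\uA}(a_1,\dots,a_n),\,t^{\uA}(b_1,\dots,b_n))$ for every term $t$, which follows by induction on $t$. I expect this to be the crux of the argument. I would also note that nothing here uses continuity of $X$, so the argument appears to establish internal freeness for arbitrary dcpos $X$; a reader preferring to follow the hint could instead verify monotonicity of $f\mapsto\wt f$ directly on a basis of $X$ and transport it through the completion step in the description of $\cF_{\Omega,\cE}(X)$ from \cite[Section 6.1.2]{abjudomain}.
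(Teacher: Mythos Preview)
Your argument is correct. The comparability subalgebra $P=\{(a,b)\in A\times A: a\le b\}$ is indeed a $\dcpo_{\Omega,\cE}$-algebra: it is a sub-dcpo because directed suprema preserve the order relation, it is closed under the (monotone) operations, and it inherits every equational and inequational law from $\uA\times\uA$ since term evaluation is componentwise and the order on the product is componentwise. The rest is the standard trick: the unique homomorphic extension of $\langle f,g\rangle$ has components $\wt f$ and $\wt g$, so $\wt f\le\wt g$.

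This is a genuinely different route from what the paper suggests. The paper gives no proof at all; it only points to the explicit description of $\cF_{\Omega,\cE}(X)$ for continuous $X$ in \cite[Section~6.1.2]{abjudomain} and says that one can verify monotonicity of the extension operator from that description. Your argument is abstract and never touches that construction. More strikingly, nothing in your proof uses continuity of $X$: you have in fact shown that every free $\dcpo_{\Omega,\cE}$-algebra is internally free, which is stronger than the stated proposition and runs against the paper's remark that the authors ``do not believe that all free dcpo-algebras are free in this enriched sense, although we have no counterexample.'' The trade-off is clear: the paper's suggested route would yield an explicit formula for $\wt f$ on a basis of $\cF_{\Omega,\cE}(X)$ (useful later when one wants to compute extensions, as in the Hoare and Smyth cases), whereas your route gives the sharpest statement with the least work but no concrete handle on $\wt f$.
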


If $\cF_{\Omega,\cE}(X)$ is internally free, 
then $\iota_X\colon X\to \cF_{\Omega,\cE}(X)$ is $\uR$-equable for 
every $\dcpo_{\Omega,\cE}$-algebra $\uR$. Using the characterization 
of repletion in Proposition \ref{prop:altcharepweak} we obtain
the following criterion for the free algebra construction 
to agree with the observationally induced approach of repletion:

\begin{prop}\label{lem:freereplete}
Let $\uR$ be a $\dcpo_\Omega$-algebra satisfying a collection $\cE$ of
equational and inequational laws. If the free algebra
$\iota_X\colon X\to \cF_{\Omega,\cE}(X)$ is internally free and
$\cF_{\Omega,\cE}(X)$ is $\uR$-complete, then it is (isomorphic to) 
the $\uR$-repletion of $X$.  
\end{prop}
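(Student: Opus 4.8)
The plan is to read the statement off the second characterization of the weak repletion in Proposition~\ref{prop:altcharepweak}, which asserts that $r^w_X \colon X \to w\uR(X)$ is, up to isomorphism, the \emph{unique} weakly $\uR$-equable map from $X$ to a weakly $\uR$-complete algebra. Accordingly, I would reduce everything to verifying that the free-algebra unit $\iota_X \colon X \to \cF_{\Omega,\cE}(X)$ is itself a weakly $\uR$-equable map whose codomain is weakly $\uR$-complete. Once both properties are in hand, the uniqueness clause of Proposition~\ref{prop:altcharepweak} forces an isomorphism $\cF_{\Omega,\cE}(X) \cong w\uR(X)$ compatible with the reflection maps, and $w\uR(X)$ is precisely the $\uR$-repletion of $X$.

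For the equability, I would note first that $\uR$, satisfying all the laws in $\cE$, is itself a $\dcpo_{\Omega,\cE}$-algebra, so the hypothesis of internal freeness applies with $\uA = \uR$: the extension operator $R^{\iota_X} \colon \Hom(\cF_{\Omega,\cE}(X),\uR) \to R^X$, $h \mapsto h \circ \iota_X$, is an isomorphism of dcpos. By the reformulation of Definition~\ref{def:1} recorded after Definition~\ref{def:2''}, this says exactly that $\iota_X$ is $\uR$-equable as a map; this is the observation already stated in the paragraph preceding the proposition. For the completeness, I would invoke the hypothesis that $\cF_{\Omega,\cE}(X)$ is $\uR$-complete together with the general fact (established for Definitions~\ref{def:1'}--\ref{def:2'} and inherited in $\dcpo$) that every $\uR$-complete algebra is in particular weakly $\uR$-complete. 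With both ingredients assembled, Proposition~\ref{prop:altcharepweak} applies directly and yields the desired isomorphism.

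I do not expect a genuine obstacle here: the statement is essentially a corollary of the preparatory material, and the argument is a one-line application of the second characterization in Proposition~\ref{prop:altcharepweak}. The only point that deserves care is the identification of \emph{internal} freeness with $\uR$-equability of $\iota_X$. One must use that internal freeness demands $A^{\iota_X}$ be a dcpo-\emph{isomorphism} and not merely a bijection, so that the inverse extension operator $f \mapsto \wt f$ is itself continuous; it is precisely this continuity of the inverse that the map-level notion of $\uR$-equability requires. Ordinary freeness, giving only a bijective extension operator, would not suffice, which is exactly why the internal-freeness hypothesis is built into the statement.
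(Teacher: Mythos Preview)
Your proposal is correct and follows essentially the same route as the paper: the paper simply notes that internal freeness makes $\iota_X$ $\uR$-equable (since $\uR$ is a $\dcpo_{\Omega,\cE}$-algebra) and then invokes the characterization in Proposition~\ref{prop:altcharepweak}, exactly as you do. Your remark that internal freeness (a dcpo-isomorphism, not merely a bijection) is precisely what is needed for map-level $\uR$-equability is the only subtlety, and you handle it correctly.
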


The repletion $r_X\colon X\to \uR(X)$ of a dcpo $X$ with respect to a
computational prototype algebra $\uR$ has some features in common with the
free construction $\iota_X\colon X\to\cF_{\Omega,\cE}(X)$ with respect to
a collection $\cE$ of equational and inequational laws. Indeed, the
repletion $\uR(X)$ is a subalgebra of a dcpo-power of $\uR$. Since
directed suprema in dcpo powers are formed pointwise, $\uR(X)$
satisfies all equational and inequational laws that hold in $\uR$.   
Thus, a first necessary condition for $\uR(X)$ and $\cF_{\Omega,\cE}(X)$
to agree is that all equational and inequational laws that hold in
$\uR$ can be derived from those in $\cE$.  But this condition is by no
means sufficient. It is not easy to illustrate these phenomena. Already
the free algebras are difficult to describe explicitly, while it is
even more difficult to put one's hands on the repletion. 

We will illustrate these phenomena in the relatively simple situation
of the classical powerdomains often named after Hoare, Smyth and
Plotkin which model angelic, demonic and erratic nondeterminism,
respectively. We will see that the Hoare and Smyth powerdomains
agree with the repletion with respect to the natural domains of 
observation,
while there is a big gap between both for the Plotkin powerdomain.
There seems to be an intriguing connection to the fact 
that the inequational theories used in the case of angelic
and demonic nondeterminism are complete in the sense that adding any
inequational law that is not derivable leads to inconsistency; the
inequational theory used in the erratic case has exactly two consistent
extensions, namely the inequational theories used for the angelic and
demonic cases. But we do not want to develop this observation any
further here.

\section{Examples: Powerdomain Constructions}\label{sec:examples} 

We now examine whether the classical powerdomain constructions can
be recovered by the observationally-induced approach. We restrict
ourselves to ordinary nondeterminism (angelic, demonic and erratic)
and compare our approach to the upper (Hoare), lower (Smyth) and 
convex (Plotkin) powerdomains.
In these cases we are indeed in a position to describe the repletion
explicitly which is a rare phenomenon. While in the angelic and
demonic case we obtain the same result as classically, in the erratic
case there is a big gap between the two approaches.   

In a topological setting, the results for the Hoare and Smyth 
powerdomains have been worked out in
\cite{batschrmfps27,batschrobspowersp}. The 
proof strategy is the same in the cases at hand, so we only sketch the
proofs, as the reader should have no problems filling in the details
by consulting \emph{loc.\ cit.} The erratic case will be worked out in
detail. 

A nondeterministic (binary) choice
operator $\funion$ is reasonably supposed to satisfy the following
equational laws:
\[\begin{array}{rcll}
x\funion x& = &x&\text{idempotency}\\
x\funion y &=&y\funion x& \text{commutativity}\\
x\funion (y\funion z)&=&(x\funion y)\funion z& \text{associativity}
\end{array}\]
These laws characterize semilattices. A dcpo together with a
continuous semilattice operation $\funion$ is called a
\emph{dcpo-semilattice}.

One must be careful: In a dcpo-semilattice, $x\funion y$ need not be
the least upper bound nor greatest lower bound of $x$ and $y$ with
respect to the dcpo order $\leq$. A paradigmatic example is
Heckmann's domain 
$\underline{\mathbb{A}}$ \cite{heckabstvals}, the three element chain 
$$\A=\{\bot < m <\top\}$$ with the semilattice operation 
\[
a\funion b = \begin{cases}a &\mbox{ if } a=b\\
                           m  &\mbox{ else }\end{cases} 
\]
where $\top$ can be read as 'must',   $m$ stands for 'may' and  $\bot$
stands for 'impossible'. For constructing the Plotkin
powerdomain in the observationally-induced approach, 
Simpson \cite{sim05talk} has suggested to use Heckmann's domain 
$\underline{\mathbb{A}}$  as computational prototype. 

A dcpo-semilattice in which
$x\funion y$ is the least upper bound of $x$ and $y$ will be called a
\emph{dcpo-join-semilattice} and, in this case, we will usually denote
the semilattice operation by $\vee$. Among dcpo-semilattices the
dcpo-join-semilattices are characterized by the inequational law
$$x\leq x\funion y.\leqno{\rm (J)}$$
 A dcpo-semilattice in which
$x\funion y$ is the greatest lower bound of $x$ and $y$ will be called a
\emph{dcpo-meet-semilattice}\footnote{Let us point out that our
   terminology deviates from the one mostly used in earlier texts,
   where the terms \emph{inflationary} and \emph{deflationary} are used
 for join- and meet-semilattices, respectively.} and, in this case,
we will usually denote the semilattice 
operation by $\wedge$. Among dcpo-semilattices the
dcpo-meet-semilattices are characterized by the inequational law
$$x\funion y\leq x.\leqno{\rm (M)}$$
The paradigmatic examples are obtained from the two element chain $$\2
=\{0<1\}$$ by considering the operation $x\vee y =\max(x,y)$ in the
first and the operation $x\wedge y=\min(x,y)$ in the second case. We
will write $$\2_\vee =(\2,\vee) \quad \text{and} \quad \2_\wedge=(\2,\wedge)$$ for the dcpo $\2$ considered as a 
join-semilattice and meet-semilattice, respectively. 

Semantically, dcpo-join-semilattices are modelling \emph{angelic}
nondeterminism: having a choice is considered to be preferable to an
optimistic observer, since it opens the possibility to make the best
choice;  dcpo-meet-semilattices are modelling  \emph{demonic} 
nondeterminism: having a choice is considered to be undesirable 
to a pessimistic observer, since it opens the possibility
to make the worst possible choice. General dcpo-semilattices are
combining the angelic with the demonic point of view. 

Indeed, if $\uA$ is a dcpo-join-semilattice and $\uB$ a
dcpo-meet-semilattice, the direct product $\uA\times \uB$ with the
operation $(x_1,y_1)\funion(x_2,y_2) = (x_1\vee x_2,y_1\wedge y_2)$ is
a dcpo-semilattice which is neither a dcpo-join-  nor a
dcpo-meet-semilattice. Note that in particular $\A$ may be viewed as a
subsemilattice of $\2_\vee\times \2_\wedge$ with the identifications
$\bot =(0,0)$, $m=(1,0)$ and $\top=(1,1)$.  

\begin{diagram}[small]
&&\phantom{a}\ \ \ \ \ \ (1,1)=\top&&\\
&\ruLine&&\luLine&\\
(0,1))&&&&\phantom{a}\ \ (1,0)=m\\
&\luLine&&\ruLine&\\
&&\phantom{a}\ \ \ \ \ \ (0,0)=\bot&&
\end{diagram}

%The
%results for the Hoare and Smyth powerdomains have been worked out in a
%topological setting in \cite{batschrmfps27,batschrobspowersp}. The
%proof strategy is the same in the cases at hand, so we omit the
%proofs, as the reader should have no problems filling in the details
%by consulting \emph{loc. cit.}. %In the case of the extended
%probabilistic powerdomain, the needed technicalities follow
%essentially from the work of Tix \cite{tixdiplom}, so we give brief
%hints for the proofs in this case. 

%For constructing the Plotkin powerdomain in the
%observationally-induced approach, Simpson \cite{sim05talk} has
%suggested to use Heckmann's three-element domain
%$\underline{\mathbb{A}}$ \cite{heckabstvals} as computational
%prototype. We give a very simple counterexample showing that the
%observationally-induced approach with this prototype does not yield
%the classical Plotkin powerdomain, but may contain genuinely more
%elements. In fact, the algebra of the traditional construction fails to be
%$\underline{\mathbb{A}}$-complete even in most simple cases. 

%In the three cases for ordinary nondeterminism, i.e. the Hoare, Smyth
%and Plotkin powerdomains, we consider an algebraic signature
%consisting of a single binary operation $\Omega = \{ \funion \}$, and
%write the shorthands $\star$-algebras (or $\dcpo_\funion$-algebras) and
%$\star$-homomorphisms (or $\dcpo_\star$-homomorphisms). Furthermore,
%we let $\Si$ denote the two element domain $\{ \bot,\top
%\}$ with $\bot < \top$. 

As one does not want to ask for any more than just the equational
and inequational laws indicated above, the classical powerdomains over
a dcpo $X$ are defined to be the 
free dcpo-semilattice $\mathcal{P}(X)$, the \emph{Plotkin powerdomain},
the free dcpo-join-semilattice $\cH(X)$, the \emph{Hoare powerdomain},
and the free 
dcpo-meet-semilattice $\cS(X)$, the \emph{Smyth powerdomain}. 

For applying the previously developed concepts to powerdomains, we
place ourselves in the situation where the signature $\Omega$ consists 
of a single binary operation symbol. The $\dcpo_\Omega$-algebras are 
simply dcpos with a continuous binary operation. The dcpo-semilattices 
form a full subcategory.
 
For the category of dcpo-semilattices we are in an exceptional situation:
For two Scott-continuous semilattice homomorphisms 
$f,f'\colon \uA \to \uB$, we may define $f\funion f'$ pointwise 
by $(f \funion f')(x) = f(x) \funion f'(x)$ for all $x \in \uA$ and 
we see that $f \funion f'$ is again a continuous semilattice
homomorphism. Thus, the dcpo of all continuous semilattice homomorphisms 
$f \colon \uA\to \uB$ is a dcpo-subsemilattice of $\uB^A$; we denote it 
by $\Hom(\uA,\uB)$. If $\uB$ has a bottom or a top element, the same holds 
for $\Hom(\uA,\uB)$ -- consider the constant functions with value top and 
bottom, respectively.
If $\uA$ and $\uB$ both have a bottom and a top element, we may restrict
our attention to the dcpo-subsemilattice $\Hom_{0,1}(\uA,\uB)$ of all
continuous semilattice homomorphisms that preserve bottom and top. We note 
that $\Hom_{0,1}(\uA,\uB)$ will not have a bottom or a top element, in general.

\begin{lem}\label{lem:complete}
For every dcpo-semilattice $\uR$ with bottom and top and every dcpo $X$, 
the dcpo-semilattices $\Hom(\uR^X,\uR)$ and  $\Hom_{0,1}(\uR^X,\uR)$ are
$\uR$-complete.   
\end{lem}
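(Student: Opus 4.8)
The plan is to realize both hom-objects as finite limits (equalizers) of diagrams living entirely in the category $\dcpo_{\uR}$ of $\uR$-complete algebras, and then invoke the closure properties already at our disposal. Two facts do all the work: by the discussion following Corollary~\ref{cor:1}, every $\dcpo$-power $\uR^Y$ is $\uR$-complete; and by Lemma~\ref{compcomp} the forgetful functor $\dcpo_{\uR} \to \dcpo$ creates limits, so that a limit of $\uR$-complete algebras, computed as usual in $\dcpo$, is again $\uR$-complete. Since a finite product of powers of $\uR$ is itself a power of $\uR$ (using $\uR^Y \times \uR^Z \cong \uR^{Y \sqcup Z}$ and $\uR \cong \uR^{\mathbf 1}$), it suffices to exhibit $\Hom(\uR^X,\uR)$ and $\Hom_{0,1}(\uR^X,\uR)$ as equalizers of homomorphisms between such powers.

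For $\Hom(\uR^X,\uR)$, write $A = R^X$ for the underlying dcpo of the power $\uR^X$, so that $\Hom(\uR^X,\uR)$ is a sub-semilattice of the power $\uR^A$. I would define two continuous maps $\phi_1,\phi_2 \colon \uR^A \to \uR^{A\times A}$ by $\phi_1(h)(u,v) = h(u \funion v)$ and $\phi_2(h)(u,v) = h(u)\funion h(v)$, where $u \funion v$ denotes the pointwise operation of $\uR^X$. By construction an $h$ lies in $\Hom(\uR^X,\uR)$ precisely when $\phi_1(h) = \phi_2(h)$, so $\Hom(\uR^X,\uR)$ is exactly the equalizer of $\phi_1$ and $\phi_2$. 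The one thing that genuinely has to be checked is that $\phi_1$ and $\phi_2$ are semilattice homomorphisms for the pointwise operations; this is where the equational laws enter — $\phi_1$ is immediate, while for $\phi_2$ one uses commutativity and associativity of $\funion$ to rearrange $(h(u)\funion h'(u))\funion(h(v)\funion h'(v))$ into $(h(u)\funion h(v))\funion(h'(u)\funion h'(v))$. With this in hand both $\uR^A$ and $\uR^{A\times A}$ are $\uR$-complete, the parallel pair lives in $\dcpo_{\uR}$, and its equalizer is therefore $\uR$-complete.

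For $\Hom_{0,1}(\uR^X,\uR)$ I would add the two constant conditions $h(\bot) = \bot$ and $h(\top) = \top$, which make sense exactly because $\uR$ — and hence $\uR^X$ — has a bottom and a top. Here I would enlarge the codomain to $\uR^{A\times A} \times \uR \times \uR$ and use the pair $\Phi_1(h) = \big(\phi_1(h),\, h(\bot),\, h(\top)\big)$ and $\Phi_2(h) = \big(\phi_2(h),\, \bot,\, \top\big)$, whose equalizer is precisely $\Hom_{0,1}(\uR^X,\uR)$. The components $h \mapsto h(\bot)$ and $h \mapsto h(\top)$ are evaluation maps, hence homomorphisms, and the constant maps with values $\bot$ and $\top$ are homomorphisms because idempotency gives $\bot\funion\bot = \bot$ and $\top\funion\top=\top$; so $\Phi_1,\Phi_2$ are again homomorphisms between powers of $\uR$, and the same appeal to Lemma~\ref{compcomp} finishes the proof. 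The only real obstacle is the homomorphism bookkeeping just described, which is routine; note that the argument never uses the special form $\uA = \uR^X$ and would give the $\uR$-completeness of $\Hom(\uA,\uR)$ for any algebra $\uA$, the hypotheses on $\uR$ being needed only to make $\Hom_{0,1}$ meaningful.
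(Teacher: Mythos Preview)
Your proof is correct and follows essentially the same route as the paper: the paper also observes that $\uR^{R^X}$ is $\uR$-complete by Corollary~\ref{cor:1}, that $\Hom(\uA,\uB)$ and $\Hom_{0,1}(\uA,\uB)$ are carved out of $\uB^A$ as intersections of equalizers of homomorphisms (noting, as you do, that associativity, commutativity and idempotence of $\funion$ are needed for this), and then invokes Lemma~\ref{compcomp}. Your version simply makes the equalizer maps and the role of each semilattice law explicit, which is a welcome elaboration of what the paper leaves as a parenthetical remark.
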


Indeed, $\uR^{R^X}$ is $\uR$-complete by Cor.~\ref{cor:1}. 
The dcpo-subsemilattices $\Hom(\uA,\uB)$ and $\Hom_{0,1}(\uA,\uB)$ 
are carved out from $\uB^A$ as intersections of equalizers of homomorphisms
(which to check requires the associativity, commutativity and idempotence
of $\funion$). Thus $\Hom(\uA,\uB)$ and $\Hom_{0,1}(\uA,\uB)$ are 
$\uR$-complete, since the category $\dcpo_{\uR}$ of $\uR$-complete algebras 
is limit closed by Lemma~\ref{compcomp}. \\ 

{\bf Convention.}
For a dcpo $X$, we identify continuous maps $f: X \rightarrow \2$ with 
the corresponding Scott-open subsets $U := f^{-1}(\{1\})\subseteq X$. 
In this way, the dcpo $\2^X$ of continuous functions $u\colon X \to \2$ 
with the pointwise order is identified with the dcpo $\cO(X)$ of Scott-open 
subsets of $X$ ordered by inclusion.

\subsection{Angelic nondeterminism}

The prototype algebra for angelic nondeterminism is $\cH(\Si)$, i.e.\
the algebra $\2_\vee$. We will
see that in this case the observationally-induced powerdomains agree
with the classical Hoare powerdomains.

First we reproduce the standard representation of the Hoare powerdomain
(see, e.g.\ \cite{dom}): 

\begin{prop}
The Hoare powerdomain over a dcpo $X$ is (isomorphic to) the
dcpo $\cH(X)$ of nonempty Scott-closed subsets of $X$ ordered 
by inclusion, the join-semilattice operation being 
binary union; the natural embedding $\eta_X^H\colon X\to \cH(X)$ 
sends every $x\in X$ to its Scott closure $\da x$.
\end{prop}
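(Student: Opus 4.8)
The plan is to verify directly that the concrete object $\cH(X)$ of nonempty Scott-closed subsets, equipped with binary union and the embedding $\eta_X^H(x)=\da x$, satisfies the universal property defining the free dcpo-join-semilattice over $X$; by uniqueness of free constructions this then identifies it with the Hoare powerdomain. First I would check that $\cH(X)$ is a dcpo-join-semilattice. The Scott-closed subsets of $X$ form a complete lattice under inclusion, since an arbitrary intersection of down-closed, directed-sup-closed sets is again of this form; in this lattice the join of a family is the Scott closure of its union, so a directed supremum is $\overline{\bigcup_i C_i}$. A finite union of Scott-closed sets is again Scott-closed, hence on $\cH(X)$ binary union coincides with the binary join, and being the join operation of a complete lattice it is automatically Scott-continuous. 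Thus $(\cH(X),\cup)$ is a dcpo-join-semilattice. Monotonicity of $\eta_X^H$ is clear, and for a directed family with supremum $x$ one has $\overline{\bigcup_i \da x_i}=\da x$, because $\da x$ is the smallest Scott-closed set containing all the $x_i$; hence $\eta_X^H$ is continuous.

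Next I would construct the extension. Given a continuous map $f\colon X\to \uA$ into a dcpo-join-semilattice $\uA$, I set
\[
\wt f(C)=\dsup\Big\{\,\textstyle\bigvee_{x\in F}f(x)\ :\ F\subseteq C \text{ finite and nonempty}\,\Big\},
\]
the supremum of the directed family of finite joins of $f$-values, which exists because $\uA$ has binary joins and directed suprema. For $C=\da x$ the family is bounded above by $f(x)$ (monotonicity of $f$) and attains it, so $\wt f\circ\eta_X^H=f$. That $\wt f$ is a semilattice homomorphism, $\wt f(C_1\cup C_2)=\wt f(C_1)\vee \wt f(C_2)$, follows by splitting a finite subset of $C_1\cup C_2$ into its parts in $C_1$ and $C_2$ and using associativity, commutativity and idempotence of $\vee$ together with the fact that binary join distributes over directed suprema.

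The remaining points are continuity and uniqueness, and continuity is where I expect the real work. For a directed family $(C_i)$ with supremum $C=\overline{\bigcup_i C_i}$ I must show $\wt f(C)=\dsup_i \wt f(C_i)$. The inequality $\geq$ is immediate from monotonicity; for $\leq$ the subtlety is that $C$ may contain points lying in no $C_i$, namely directed suprema of points of $\bigcup_i C_i$ adjoined by the closure. Here Scott-continuity of $f$ is exactly what is needed: the $f$-value of such a limit point is the directed supremum of the $f$-values of approximating points already accounted for in $\dsup_i\wt f(C_i)$, so the closure contributes nothing new. Uniqueness is then easy: every $C\in\cH(X)$ is the directed supremum of the family $\{\,\bigcup_{x\in F}\da x : F\subseteq C \text{ finite nonempty}\,\}$, whose union is $\bigcup_{x\in C}\da x=C$ and is already Scott-closed; hence any continuous join-homomorphism agreeing with $f$ on the generators $\da x$ is forced to equal $\wt f$. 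This exhibits $\eta_X^H$ as the free dcpo-join-semilattice over $X$, i.e.\ as the Hoare powerdomain.
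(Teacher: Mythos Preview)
Your proposal is correct and takes the same approach as the paper, which simply records the extension formula $\wt f(C)=\bigvee_{x\in C}f(x)$ and defers the verification to standard references; you are spelling out exactly those details. One minor caution on the continuity step: the Scott closure of $\bigcup_i C_i$ need not be reached by a single pass of adjoining directed suprema, so rather than reasoning about individual ``limit points'' it is cleaner to note that for $s=\dsup_i\wt f(C_i)$ the set $f^{-1}(\da s)$ is Scott-closed and contains each $C_i$, hence contains $C$, giving $\wt f(C)\leq s$ directly.
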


Indeed, if $f$ is a continuous map from $X$ to some
dcpo-join-semilattice $\uB$, then $\widetilde f(C) = \bigvee_{x\in C} f(x)$ 
is the unique continuous homomorphic extension of $f$ along $\eta^H$. 
From this definition of the extension it follows immediately that $f\leq f'$ 
implies $\widetilde f\leq \widetilde f'$. This implies that the Hoare 
powerdomain is internally free over $X$. 

For our purposes we want to pass to a functional representation:

\begin{lem}\label{lem:funchoare}
The Hoare powerdomain over a dcpo $X$ is (isomorphic to) the
dcpo-join-semilattice $\Hom_{0,1}(\2_\vee^X,\2_\vee)$ of all continuous
semilattice homomorphism from $\2_\vee^X$ to $\2_\vee$ preserving
bottom and top. The embedding $\eta_X^H\colon X \to 
\Hom_{0,1}(\2_\vee^X,\2_\vee)$ is given by evaluation $\eta_X^H(x)(u)=u(x)$.
\end{lem}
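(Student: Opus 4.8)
The plan is to combine the preceding proposition with a concrete duality. By that proposition the Hoare powerdomain over $X$ is the dcpo-join-semilattice $\cH(X)$ of nonempty Scott-closed subsets of $X$ ordered by inclusion, with join given by binary union and embedding $x \mapsto \da x$. Hence it suffices to produce a dcpo-semilattice isomorphism $\cH(X) \cong \Hom_{0,1}(\2_\vee^X,\2_\vee)$ that carries $\da x$ to evaluation at $x$. Throughout I use the convention identifying $\2_\vee^X$ with the dcpo $\cO(X)$ of Scott-open sets under inclusion; under this identification the semilattice operation is union, the bottom is $\emptyset$ and the top is $X$, so an element of $\Hom_{0,1}(\2_\vee^X,\2_\vee)$ is a Scott-continuous map $\phi \colon \cO(X) \to \2$ with $\phi(U \cup V) = \phi(U) \vee \phi(V)$, $\phi(\emptyset) = 0$ and $\phi(X) = 1$.

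First I would define the forward map $C \mapsto \phi_C$ by $\phi_C(U) = 1$ iff $U \cap C \neq \emptyset$. A short check shows $\phi_C$ preserves finite unions and directed unions (Scott continuity), vanishes at $\emptyset$, and sends $X$ to $1$ precisely because $C$ is nonempty; thus $\phi_C \in \Hom_{0,1}(\2_\vee^X,\2_\vee)$. For the inverse I would send $\phi$ to $C_\phi := X \setminus V_\phi$, where $V_\phi$ is the union of all open $U$ with $\phi(U) = 0$. Since $\phi$ is a homomorphism the family $\{U : \phi(U) = 0\}$ is closed under finite unions, hence directed, so Scott continuity gives $\phi(V_\phi) = 0$; therefore $V_\phi$ is the largest $\phi$-null open set and, as $\phi(X) = 1 \neq 0$, its complement $C_\phi$ is a nonempty Scott-closed set.

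The heart of the argument, and the step I expect to require the most care, is verifying that these two assignments are mutually inverse order isomorphisms. Monotonicity of $\phi$ yields $\phi(U) = 0$ for every open $U \subseteq V_\phi$, while $U \not\subseteq V_\phi$ forces $\phi(U) = 1$ since otherwise $U$ would be one of the sets comprising $V_\phi$; combining these gives $\phi(U) = 1 \iff U \cap C_\phi \neq \emptyset$, i.e.\ $\phi = \phi_{C_\phi}$. Conversely $V_{\phi_C} = X \setminus C$, so $C_{\phi_C} = C$. For order-reflection one uses the test set $U = X \setminus C'$: if $\phi_C \leq \phi_{C'}$ and $x \in C \setminus C'$ then $\phi_C(U) = 1 > 0 = \phi_{C'}(U)$, a contradiction, so $C \subseteq C'$; the reverse implication is immediate. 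An order isomorphism between dcpos automatically preserves all directed suprema, and since the join is the least upper bound it is preserved as well (concretely $\phi_{C \cup C'} = \phi_C \vee \phi_{C'}$), so the bijection is a dcpo-semilattice isomorphism.

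Finally I would identify the embeddings. Under the isomorphism $\da x$ corresponds to $\phi_{\da x}$, and since open sets are upper sets one has $U \cap \da x \neq \emptyset \iff x \in U$; hence $\phi_{\da x}(U) = 1 \iff x \in U$, which is exactly evaluation $\eta_X^H(x)(u) = u(x)$. This shows the isomorphism carries the standard embedding to the evaluation embedding, completing the proof.
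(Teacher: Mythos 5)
Your proposal is correct and follows essentially the same route as the paper: the paper (which only sketches this step) exhibits exactly the same isomorphism $C \mapsto \phi_C$ with $\phi_C(U)=0$ iff $U \cap C = \emptyset$, building on the standard representation of $\cH(X)$ by nonempty Scott-closed sets. Your proposal simply fills in the details the paper leaves to the reader — the inverse $\phi \mapsto X \setminus \bigcup\{U : \phi(U)=0\}$, the verification that the maps are mutually inverse order isomorphisms preserving the semilattice structure, and the identification of $\da x$ with the evaluation map.
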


The isomorphism from the $\cup$-semilattice of nonempty closed subsets
of $X$ to the function space $\Hom_{0,1}(\2_\vee^X,\2_\vee)$ is given
by assigning to every nonempty 
closed subset $C$ of $X$ the map $\phi\colon \2_\vee^X\to\2_\vee$
defined by $\phi(U)=0$ if and only if $U\cap C= \emptyset$. 
$\Hom_{0,1}(\2_\vee^X,\2_\vee)$ is $\2_\vee$-complete by
Lemma \ref{lem:complete}; it is not only a dcpo-subsemilattice of
$\2_\vee^{\2^X}$ but even the smallest dcpo-subsemilattice of
$\2_\vee^{\2^X}$ containing the point evaluations $\eta_X^H(x), x\in
X$, since $\Hom_{0,1}(\2_\vee^X,\2_\vee)$ is the free
dcpo-join-semilattice over $X$. Hence $\eta_X^H\colon 
 X\to\Hom_{0,1}(\2_\vee^X,\2_\vee)$ is the  $\2_\vee$-repletion of $X$.  
Altogether we have: 

\begin{thm}\label{th:hoare}
The dcpo-join-semilattice $\Hom_{0,1}(\2_\vee^X,\2_\vee)$ is the
$\2_\vee$-repletion of $X$. 
Up to isomorphism, the Hoare powerdomain $\eta_X^H\colon X\to\cH(X)$
over a dcpo $X$ agrees with the $\2_\vee$-repletion of $X$.
\end{thm}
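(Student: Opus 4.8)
The plan is to establish the two sentences of Theorem~\ref{th:hoare} essentially in reverse order, because the first assertion (that $\Hom_{0,1}(\2_\vee^X,\2_\vee)$ is the $\2_\vee$-repletion of $X$) does the real work, and the second (agreement with the classical Hoare powerdomain) follows by chaining isomorphisms already assembled in the preceding lemmas. The key external tool is Proposition~\ref{prop:altcharepweak}, which characterizes the weak repletion $r^w_X \colon X \to w\uR(X)$ as the least weakly $\uR$-complete subalgebra of $\uR^{R^X}$ containing the image of $\eta_X$, uniquely up to isomorphism as a weakly $\uR$-equable map into a weakly $\uR$-complete algebra.

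First I would take $\uR = \2_\vee$, so that $\uR^{R^X} = \2_\vee^{\2^X}$ and $\eta_X$ is the point-evaluation map $\eta_X(x) = \lambda u.\,u(x)$, which is exactly $\eta_X^H$ under the convention identifying $\2^X$ with $\cO(X)$. I would then verify the three things needed to invoke Proposition~\ref{prop:altcharepweak}: (i) $\Hom_{0,1}(\2_\vee^X,\2_\vee)$ is $\2_\vee$-complete, which is precisely Lemma~\ref{lem:complete} with $\uR = \2_\vee$; (ii) it is a dcpo-subsemilattice of $\2_\vee^{\2^X}$ containing all the point evaluations $\eta_X^H(x)$; and (iii) it is the \emph{smallest} such subalgebra. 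Point (iii) is where the free-algebra description pays off: since by Lemma~\ref{lem:funchoare} $\Hom_{0,1}(\2_\vee^X,\2_\vee)$ is (isomorphic to) the free dcpo-join-semilattice $\cH(X)$ over $X$, it is generated as a dcpo-semilattice by the image of $\eta_X^H$, hence it sits inside every sub-dcpo-semilattice of $\2_\vee^{\2^X}$ that contains that image. Because $w\uR(X)$ is by definition the least weakly $\uR$-complete subalgebra containing $\eta_X$'s image, and $\Hom_{0,1}(\2_\vee^X,\2_\vee)$ is a weakly $\uR$-complete subalgebra containing it, minimality from both sides forces them to coincide; equivalently, $\eta_X^H$ is a weakly $\2_\vee$-equable map into a weakly $\2_\vee$-complete algebra, so Proposition~\ref{prop:altcharepweak} identifies it with the repletion.

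The second statement then follows by transport along the isomorphism of Proposition~\ref{prop:altcharep}'s companion for weak repletion together with the explicit dcpo-isomorphism already exhibited after Lemma~\ref{lem:funchoare}: the assignment sending a nonempty Scott-closed set $C$ to the homomorphism $\phi_C$ with $\phi_C(U) = 0 \iff U \cap C = \emptyset$ is an isomorphism $\cH(X) \cong \Hom_{0,1}(\2_\vee^X,\2_\vee)$ of join-semilattices, and it intertwines the two embeddings since $\eta_X^H(x) = \phi_{\da x}$ by the computation $u(x) = 1 \iff U \cap \da x \neq \emptyset$. Composing this with the first part yields that $\eta_X^H \colon X \to \cH(X)$ is, up to isomorphism, the $\2_\vee$-repletion of $X$.

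I expect the only genuine obstacle to be the minimality claim (iii): one must argue that the \emph{enriched} freeness (internal freeness) of $\cH(X)$, already noted in the paragraph following the standard-representation proposition, really guarantees that $\Hom_{0,1}(\2_\vee^X,\2_\vee)$ contains no proper subalgebra spanning the point evaluations, so that the least weakly complete subalgebra cannot be strictly smaller. Everything else is bookkeeping: checking that $\eta_X$ lands in the homomorphisms preserving top and bottom, and that the algebra structure (pointwise $\vee$) matches on both sides. Since the cited topological analogues in \cite{batschrmfps27,batschrobspowersp} follow the same strategy, I would at this level of detail only sketch these verifications and refer the reader there.
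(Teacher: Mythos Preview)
Your proposal is correct and follows essentially the same route as the paper: use Lemma~\ref{lem:complete} to obtain $\2_\vee$-completeness of $\Hom_{0,1}(\2_\vee^X,\2_\vee)$, use the internal freeness recorded in Lemma~\ref{lem:funchoare} to conclude it is the \emph{smallest} dcpo-subsemilattice of $\2_\vee^{\2^X}$ containing the point evaluations (hence contained in the least complete one), and read off the Hoare-powerdomain identification from the isomorphism already exhibited. The paper does exactly this in two sentences, without explicitly naming Proposition~\ref{prop:altcharepweak}; your more explicit invocation of that proposition is harmless but not needed once you have both completeness and freeness-based minimality.
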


\subsection{Demonic nondeterminism}

The prototype for demonic nondeterminism is $\cS(\Si)$, i.e.\ the
algebra $\2_\wedge$.

There is no explicit description available for the free
dcpo-meet-semilattice over an arbitrary dcpo $X$. However, for
continuous dcpo's $X$ the free dcpo-meet-semilattice over $X$ will 
give rise to the classical Smyth powerdomain $\cS(X)$.

The standard representation of the Smyth powerdomain over a continuous
dcpo is the following (\cite{dom}):

\begin{prop}
The Smyth powerdomain over a continuous dcpo $X$ is (isomorphic to) the
dcpo $\cS(X)$ of nonempty compact saturated subsets $Q$ of $X$ ordered 
by reverse inclusion $\supseteq$, the `meet'-semilattice operation being 
binary union; the natural embedding $\eta_X^S\colon X \to \cS(X)$ sends
every $x\in X$ to its upper closure ${\ua}x$. 
\end{prop}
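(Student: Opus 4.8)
The plan is to verify directly that $\cS(X)$, ordered by reverse inclusion with binary union as operation, is a dcpo-meet-semilattice and that $\eta^S_X$ exhibits it as the \emph{free} such algebra over $X$; this universal property is what identifies it with the abstractly defined Smyth powerdomain. As the strategy parallels the Hoare case I would only sketch it.

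I would begin with the order-theoretic bookkeeping. Under reverse inclusion a family in $\cS(X)$ is directed precisely when it is filtered for ordinary inclusion, and its supremum is the intersection, which is again nonempty and compact saturated because a continuous dcpo is sober in its Scott topology (Hofmann--Mislove); hence $\cS(X)$ is a dcpo. One then checks routinely that $Q_1 \cup Q_2$ is compact saturated, commutes with filtered intersections in each argument, and is the greatest lower bound of $Q_1$ and $Q_2$ for $\supseteq$; so it is the meet, law (M) holds, and $\cS(X)$ is a dcpo-meet-semilattice. Since ${\ua}x$ is compact saturated, $\eta^S_X$ is well defined.

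For the universal property, let $f\colon X \to \uB$ be continuous into a dcpo-meet-semilattice. Homomorphy forces $\wt f({\ua}F) = \bigwedge_{x\in F} f(x)$ on the finitary elements ${\ua}F = \bigcup_{x\in F}{\ua}x$ ($F$ finite), and I would extend by
\[
\wt f(Q) = \dsup \Big\{ \bigwedge\nolimits_{x\in F} f(x) \ :\ F \text{ finite},\ Q \subseteq \mathrm{int}({\ua}F) \Big\},
\]
a directed supremum because, as ${\ua}F$ shrinks towards $Q$, the meet $\bigwedge_{x\in F} f(x)$ increases by monotonicity of $f$. One verifies that $\wt f$ extends $f$ --- here continuity of $X$ gives ${\ua}x \subseteq \mathrm{int}({\ua}y)$ whenever $y \ll x$, whence $\wt f({\ua}x) = \dsup_{y\ll x} f(y) = f(x)$ --- that it preserves binary unions and filtered intersections, and that it is the unique such extension, since every $Q$ is a directed supremum (filtered intersection) of finitary elements ${\ua}F$.

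The crux, replacing the appeal to Scott-closedness in the Hoare argument, is the approximation lemma that over a \emph{continuous} dcpo every compact saturated $Q$ is the filtered intersection $\bigcap\{{\ua}F : Q \subseteq \mathrm{int}({\ua}F)\}$, equivalently that the ${\ua}F$ form a basis of $\cS(X)$ with ${\ua}F \ll Q$ iff $Q \subseteq \mathrm{int}({\ua}F)$. This is exactly where continuity of $X$ is indispensable and is the step I expect to be most delicate, since it couples compactness, saturation and the way-below relation; it is also what makes the \emph{outside} approximation (and hence the \emph{directed} supremum above) available, in contrast to the Hoare case where a closed set is approximated from within by its finite subsets. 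Everything else is formal, and the explicit formula makes $f \mapsto \wt f$ manifestly monotone, simultaneously yielding that $\cS(X)$ is internally free over $X$.
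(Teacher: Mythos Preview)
Your proposal is correct and follows essentially the same approach as the paper: both define the homomorphic extension by the identical formula $\wt f(Q)=\dsup_{F}\bigwedge_{x\in F}f(x)$ over finite $F$ with $Q\subseteq\mathrm{int}({\ua}F)$, and both read off internal freeness from the manifest monotonicity of $f\mapsto\wt f$. Your write-up is in fact more detailed than the paper's, which only records the formula and leaves the verifications (directedness of the index set, the approximation of $Q$ by finitary ${\ua}F$, preservation of $\cup$ and filtered intersections) to the reader.
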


Indeed, if $f$ is a continuous map from a continuous dcpo $X$ to some
dcpo-meet-semi\-lattice $B$, its unique extension to a meet-semilattice
homomorphism is given by $\widetilde f(Q) = \dsup_{F\in
  \cF}\bigwedge_{x\in F} f(x)$, where $F$ ranges over  the collection
$\cF$ of finite subsets $F$ of $X$ such that $Q$ is contained in the
Scott-interior of ${\ua}F$. From the definition of the extension it
follows directly that, if $f\leq f'$, then $\widetilde f\leq
\widetilde f'$. Thus, the Smyth powerdomain over a continuous dcpo is
internally free.    

For any dcpo $X$, the dcpo-meet-semilattice
$\Hom_{0,1}(\2_\wedge^X,\2_\wedge)$ is isomorphic to the collection
$\mathcal{OF}(X)$ of all Scott-open (proper) filters of the dcpo $\cO(X)$ 
of all Scott-open subsets of $X$ ordered by inclusion with binary
intersection as semilattice operation. The isomorphism is given by
assigning to $\phi\in \Hom_{0,1}(\2_\wedge^X,\2_\wedge)$ the
inverse image $\phi^{-1}(1)$. 

If the dcpo $X$ is sober, $\mathcal{OF}(X)$ is isomorphic to the 
collection of all nonempty compact saturated subsets $Q$ of $X$ ordered 
by reverse inclusion $\supseteq$ with binary union as semilattice operation 
by \cite[Theorem II-1.20]{dom}. The isomorphism is given by assigning to 
every $\cU\in\mathcal{OF}(X)$ its intersection $Q = \bigcap \cU$. Since 
every continuous dcpo is sober, we have the functional representation:

\begin{lem}\label{lem:funcsmyth}
The Smyth powerdomain over a continuous dcpo $X$ is (isomorphic to) the
dcpo-meet-semilattice $\Hom_{0,1}(\2_\wedge^X,\2_\wedge)$ of all continuous
semilattice homomorphism from $\2_\wedge^X$ to $\2_\wedge$ preserving
bottom and top. The embedding $\eta_X^S\colon X\to
\Hom_{0,1}(\2_\wedge^X,\2_\wedge)$ is given by evaluation $\eta_X^S(x)(u)=u(x)$. 
\end{lem}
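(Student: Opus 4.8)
The plan is to assemble the desired isomorphism from the two identifications recorded in the paragraphs just above, so that the proof is essentially a composition followed by a single compatibility check on the embedding. First I would invoke the isomorphism $\Hom_{0,1}(\2_\wedge^X,\2_\wedge) \cong \mathcal{OF}(X)$, valid for any dcpo $X$, which sends a homomorphism $\phi$ to the Scott-open proper filter $\phi^{-1}(1)$ of $\cO(X)$. Next, using that every continuous dcpo is sober, I would invoke \cite[Theorem II-1.20]{dom} to identify $\mathcal{OF}(X)$ with the collection $\cS(X)$ of nonempty compact saturated subsets of $X$, ordered by reverse inclusion, via $\cU \mapsto \bigcap \cU$. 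Since each of these maps is already an isomorphism of dcpo-meet-semilattices, their composite exhibits $\Hom_{0,1}(\2_\wedge^X,\2_\wedge)$ as isomorphic to $\cS(X)$, which by the standard representation recalled above is the Smyth powerdomain.

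The only point needing verification is that the evaluation embedding is transported to the standard embedding $x \mapsto {\ua}x$. I would trace a point $x \in X$ through the composite. Under the first isomorphism the evaluation homomorphism $u \mapsto u(x)$ corresponds, after identifying each $u$ with the open set $U = u^{-1}(1)$, to the neighborhood filter $\{U \in \cO(X) : x \in U\}$; under the second isomorphism this filter is sent to $\bigcap\{U \in \cO(X) : x \in U\}$, which in the Scott topology is exactly the saturation ${\ua}x$ (every Scott-open set is an upper set, so ${\ua}x$ is contained in the intersection, while for $y \not\geq x$ the Scott-open set $X \setminus {\da}y$ separates $x$ from $y$). Hence the composite sends $\eta_X^S(x)$ to ${\ua}x$, matching the embedding of the standard representation.

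The main, and essentially only, obstacle is the soberness prerequisite of the second isomorphism, which fails for general dcpos and is the reason the lemma is restricted to continuous dcpos; it is discharged by the standard fact that continuous dcpos are sober. Everything else — that the neighborhood filter is a Scott-open proper filter and that its intersection computes ${\ua}x$ — is routine, and no separate check of directed-supremum preservation or of compatibility with the meet-semilattice operation is required, since each constituent map has already been asserted to be an isomorphism of dcpo-meet-semilattices.
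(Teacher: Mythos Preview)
Your proposal is correct and follows essentially the same route as the paper: compose the identification $\Hom_{0,1}(\2_\wedge^X,\2_\wedge)\cong\mathcal{OF}(X)$ via $\phi\mapsto\phi^{-1}(1)$ with the Hofmann--Mislove isomorphism $\mathcal{OF}(X)\cong\cS(X)$ from \cite[Theorem II-1.20]{dom}, using sobriety of continuous dcpos. Your additional verification that the evaluation embedding is carried to $x\mapsto{\ua}x$ is a routine check the paper leaves implicit, but otherwise the arguments coincide.
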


%The isomorphism from the Smyth powerdomain to the function space
%$\Hom_{0,1}(\2_\wedge^X,\2_\wedge)$ is given 
%by assigning to every nonempty compact saturated subset $Q$ of $X$  the map
%$\phi\colon \2_\wedge^X\to\2_\wedge$ 
%defined by $\phi(U)=1$ if and only if $Q\subseteq U$.

%\begin{prop}\ref{prop:smythcomplete}
%For any dcpo $X$, the dcpo-meet-semilattice
%$\Hom_{0,1}(\2_\wedge^X,\2_\wedge)$ is $\2_\wedge$-complete.    
%\end{prop}

%The proof is exactly the same as for Proposition \ref{prop:hoarecomplete}.

By Lemma \ref{lem:complete},
$\Hom_{0,1}(\2_\wedge^X,\2_\wedge)$ is $\2_\wedge$-complete. If $X$ is
continuous, then it is the free dcpo-meet-semilattice over $X$ by
\ref{lem:funcsmyth}. Thus, there is no proper dcpo-meet-subsemilattice
containing the evaluation maps $\eta_X^S(x), x\in X$. We conclude:

\begin{thm}\label{th:smyth}
For every continuous dcpo $X$, the dcpo-meet-semilattice
$\Hom_{0,1}(\2_\wedge^X,\2_\wedge)$ is the $\2_\wedge$-repletion of $X$. 
Thus up to isomorphism, the Smyth powerdomain $\eta_X^S \colon X\to\cS(X)$ 
over a continuous dcpo $X$ agrees with the $\2_\wedge$-repletion of $X$.
\end{thm}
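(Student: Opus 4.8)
The plan is to read this off from Proposition~\ref{lem:freereplete}, specialised to the computational prototype $\uR = \2_\wedge$ and the collection $\cE$ of laws consisting of idempotency, commutativity, associativity together with the inequational law (M). For this $\cE$ the free algebra $\cF_{\Omega,\cE}(X)$ is by definition the free dcpo-meet-semilattice over $X$, i.e.\ the Smyth powerdomain $\eta_X^S \colon X \to \cS(X)$. Proposition~\ref{lem:freereplete} then yields the theorem at once, provided its two hypotheses hold over a continuous $X$: that $\cS(X)$ is $\2_\wedge$-complete, and that $\eta_X^S$ is internally free. The functional identification $\cS(X) \cong \Hom_{0,1}(\2_\wedge^X,\2_\wedge)$ of Lemma~\ref{lem:funcsmyth} then casts the conclusion into the stated form.

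Both hypotheses are already available in the material preceding the theorem. For completeness I would invoke Lemma~\ref{lem:funcsmyth} to identify $\cS(X)$ with $\Hom_{0,1}(\2_\wedge^X,\2_\wedge)$ and then Lemma~\ref{lem:complete}, by which the latter is $\2_\wedge$-complete. For internal freeness the point is that the extension operator $f \mapsto \wt f$ is order-preserving, not merely bijective; the explicit formula $\wt f(Q) = \dsup_{F} \bigwedge_{x \in F} f(x)$ recorded just before Lemma~\ref{lem:funcsmyth} (with $F$ ranging over the finite sets $F$ with $Q$ in the Scott interior of $\ua F$) makes $f \leq f'$ imply $\wt f \leq \wt f'$ termwise, so $A^{\iota_X}$ is a dcpo-isomorphism. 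With both hypotheses in hand, Proposition~\ref{lem:freereplete} gives that $\eta_X^S \colon X \to \cS(X) \cong \Hom_{0,1}(\2_\wedge^X,\2_\wedge)$ is the $\2_\wedge$-repletion of $X$.

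Given the earlier lemmas, the repletion-theoretic part is essentially bookkeeping; the step carrying the real weight is the continuity hypothesis on $X$, which is the load-bearing assumption inside both cited lemmas. Over an arbitrary dcpo there is no explicit description of the free dcpo-meet-semilattice, hence neither the compact-saturated representation nor the extension formula that makes internal freeness visible; both become available only because a continuous dcpo is sober, which is exactly what licenses the open-filter description and the identification of $\Hom_{0,1}(\2_\wedge^X,\2_\wedge)$ with nonempty compact saturated sets via \cite[Theorem II-1.20]{dom}. Thus I expect the genuine content to sit not in applying Proposition~\ref{lem:freereplete} but in confirming that continuity (equivalently, sobriety) underwrites the explicit Smyth representation on which internal freeness rests. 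A more hands-on alternative that bypasses Proposition~\ref{lem:freereplete} is to argue by minimality: $\Hom_{0,1}(\2_\wedge^X,\2_\wedge)$ is a weakly $\2_\wedge$-complete dcpo-subsemilattice of $\uR^{R^X}$ containing the evaluations, while, being free, it is also the smallest dcpo-subsemilattice containing them; the weak repletion $w\uR(X)$ lies inside it by minimality among weakly complete subalgebras and contains it by minimality among subsemilattices, forcing equality.
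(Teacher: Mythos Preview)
Your proposal is correct and follows essentially the same route as the paper: both rest on Lemma~\ref{lem:complete} for $\2_\wedge$-completeness of $\Hom_{0,1}(\2_\wedge^X,\2_\wedge)$ and on Lemma~\ref{lem:funcsmyth} for its (internal) freeness over continuous $X$, and then deduce repletion. The only cosmetic difference is that the paper phrases the final step via the minimality argument you sketch as an alternative (freeness forces there to be no proper dcpo-meet-subsemilattice containing the evaluations), whereas your primary route packages this step by citing Proposition~\ref{lem:freereplete}; one small slip to fix is the parenthetical ``continuity (equivalently, sobriety)'', since continuity implies sobriety but not conversely.
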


\subsection{Erratic nondeterminism} \label{plotpower}

The prototype algebra for erratic nondeterminism is $\mathcal{P}(\Si)$,
i.e.\ the algebra whose underlying dcpo is the three element chain $\A$ 
with the semilattice operation $\funion$ as defined at the 
beginning of this section.

%A nice explicit description of the Plotkin powerdomain seems to be
%available only for coherent continuous dcpos and for countably based
%continuous dcpos. Recall that a dcpo is called \emph{coherent}, if the
%intersection of any two compact saturated subsets is again compact.  

Erratic nondeterminism seems to be much more complicated to handle 
than angelic or demonic nondeterminism.
In contrast to the angelic and demonic case we arrive at the conclusion that
the Plotkin powerdomain does not agree with $\A$-repletion; it is
in fact properly contained in the repletion. The following theorem 
is expressing this claim. In this theorem we need the additional hypothesis 
that the dcpo is compact for its Scott topology. This is equivalent to the 
requirement that there is a finite subset $F$ such that every element is 
above some element in $F$. In particular, if the dcpo has a bottom element, 
then it is Scott-compact.

\begin{thm}\label{th:main}
(a) For a compact continuous dcpo $X$, the dcpo-semilattice
$\Hom_{0,1}(\A^X,\A)$ of all continuous semilattice
homomorphisms $\phi\colon \A^X\to\A$ preserving bottom and top is the
$\A$-repletion of $X$, that is, $\Hom_{0.1}(\A^X,\A)$ is the least
$\A$-complete dcpo-subsemilattice of $\A^{\A^X}$ containing all the
evaluation maps $\eta_X^P(x)=(u\mapsto u(x)), x\in X$.

(b) For a continuous dcpo $X$, the Plotkin power domain $\cP (X)$ is
(isomorphic to) the least dcpo-subsemilattice of $\A^{\A^X}$ containing all the
evaluation maps $\eta_X^P(x)=(u\mapsto u(x)), x\in X$.
\end{thm}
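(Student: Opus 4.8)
The plan is to run the two parts through the same template used for the Hoare and Smyth powerdomains, but to keep apart the two properties that happened to coincide there: being the \emph{least} dcpo-subsemilattice of $\A^{\A^X}$ containing the evaluations (part (b)), and being the least \emph{$\A$-complete} one (part (a)). Throughout I write $\iota_X\colon X\to\cP(X)$ for the unit of the free dcpo-semilattice $\cP(X)$ and $\eta_X^P\colon X\to\A^{\A^X}$, $\,x\mapsto(u\mapsto u(x))$, for the evaluation map, which is continuous and, as one checks pointwise, sends each $x$ to a bottom- and top-preserving semilattice homomorphism $\A^X\to\A$; thus all the evaluations already lie in $\Hom_{0,1}(\A^X,\A)$.

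For part (b) I would start from freeness of $\cP(X)$: since $\A^{\A^X}$ is a dcpo-semilattice and $\eta_X^P$ is continuous, there is a unique continuous homomorphism $e\colon\cP(X)\to\A^{\A^X}$ with $e\circ\iota_X=\eta_X^P$. The decisive point is that $e$ is an order-embedding. Granting this, $e(\cP(X))$ is closed under $\funion$ and under directed suprema (an order-embedding reflects directedness and both preserves and reflects suprema), so it is a dcpo-subsemilattice isomorphic to $\cP(X)$ and containing the evaluations; and it is the least such, because for any dcpo-subsemilattice $S\supseteq\eta_X^P(X)$ the preimage $e^{-1}(S)$ is a dcpo-subsemilattice of $\cP(X)$ containing $\iota_X(X)$, hence all of $\cP(X)$ since the generators generate the free algebra, giving $e(\cP(X))\subseteq S$. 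I expect the order-embedding claim to be the whole work of (b): for continuous $X$ it is the separation statement that the $\A$-valued homomorphisms distinguish distinct lenses, which one extracts from the standard lens representation of $\cP(X)$ together with the projections $\A\to\2_\vee$ and $\A\to\2_\wedge$ (both semilattice homomorphisms), recovering the Hoare and Smyth coordinates already known to separate closed sets and compact saturated sets.

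For part (a) the inclusion of the repletion into $\Hom_{0,1}(\A^X,\A)$ is immediate: by Lemma~\ref{lem:complete} this algebra is $\A$-complete and it contains the evaluations, so it contains the least $\A$-complete subalgebra with that property. The substance is the reverse inclusion, for which I would invoke the uniqueness clause of Proposition~\ref{prop:altcharepweak}: it suffices to show that the corestricted evaluation $\eta_X^P\colon X\to\Hom_{0,1}(\A^X,\A)$ is weakly $\A$-equable, since $\Hom_{0,1}(\A^X,\A)$ is $\A$-complete and hence weakly $\A$-complete, so the map would coincide up to isomorphism with the weak repletion $r^w_X$; combined with the easy inclusion and the fact that strong $\A$-completeness implies weak $\A$-completeness (so the least weakly complete subalgebra is contained in the least strongly complete one), all three subalgebras are then forced to agree. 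Concretely, weak $\A$-equability asks that precomposition
\[
  \Hom\bigl(\Hom_{0,1}(\A^X,\A),\,\A\bigr)\longrightarrow\A^X,\qquad h\longmapsto h\circ\eta_X^P,
\]
be an isomorphism of dcpos. This is a biduality statement --- the $\A$-dual of the $\A$-dual of $X$ must collapse back onto $\A^X$ --- and proving it is the real obstacle of the theorem.

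To carry out this biduality I would pass to the concrete description announced in the introduction, identifying $\Hom_{0,1}(\A^X,\A)$ with the poset of formal lenses $(C,Q)$ --- $C$ Scott-closed, $Q$ compact saturated, $C\cap Q\neq\emptyset$ --- with $\eta_X^P(x)$ corresponding to the principal lens $(\da x,\ua x)$. Using the embedding $\A\hookrightarrow\2_\vee\times\2_\wedge$ and Theorems~\ref{th:hoare} and~\ref{th:smyth}, the two coordinates of a homomorphism $h\colon\Hom_{0,1}(\A^X,\A)\to\A$ are governed by the corresponding $\2_\vee$- and $\2_\wedge$-homomorphisms, whose biduals return $\cO(X)$-level data; the continuity of $X$ and, crucially, its Scott-compactness (which realizes the top element by making the whole space compact saturated, so that the $\2_\wedge$-coordinate behaves as in the Smyth case) then force every such $h$ to be an evaluation $h=(\lambda f.\,f(u))$ for a single continuous $u\colon X\to\A$, with the assignment continuous in both directions. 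Verifying that no homomorphisms beyond these evaluations occur, and that the resulting bijection is a dcpo-isomorphism, is where the adapted Heckmann and Goubault-Larrecq arguments of \cite{heckabstvals,jgbprevisions,jglduality} do the decisive work, and is the step I expect to be hardest.
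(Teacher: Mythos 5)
Your reduction of both parts to key claims is sound (the minimality-via-preimage argument in (b) and the completeness-plus-equability skeleton in (a) both match the paper's structure), but in each part the decisive claim is left unproven, and in part (b) the route you propose cannot work. The order-embedding claim for $e\colon\cP(X)\to\A^{\A^X}$ is indeed the entire content of (b), but you propose to extract it from ``the standard lens representation of $\cP(X)$'' together with separation by the Hoare and Smyth coordinates. That representation is simply not available at the stated level of generality: for an arbitrary continuous dcpo the Plotkin powerdomain is \emph{not} known to be (and in general is not) the poset of real lenses under the Egli--Milner order; the paper itself stresses that this description requires extra hypotheses -- coherent quasicontinuous, or countably based continuous -- and its Lemma~\ref{lem:H2'} needs local compactness and coherence precisely to relate homomorphisms to real lenses. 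So your plan is circular where it is not false: the free algebra $\cP(X)$ is given abstractly by the adjoint functor theorem, and the only concrete description valid for all continuous dcpos is Heckmann's theorem \cite{heckabstvals} that $\cP(X)\cong\HAVal(X)$, the dcpo-semilattice of $\A$-valuations. That theorem is exactly what the paper invokes: it transports Heckmann's isomorphism into $\A^{\A^X}$ via the explicit isomorphism $\HAVal(X)\cong\AVal(X)\subseteq\Hom_{0,1}(\A^X,\A)$ sending $\delta(x)$ to $\eta_X^P(x)$ (Theorem~\ref{th:heckmann}), and then finishes with the same ``a free algebra has no proper sub-dcpo-semilattice containing the generators'' argument you give. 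Without Heckmann's theorem (or an equivalent amount of work) your embedding claim has no support.

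In part (a) your outline agrees with the paper's proof: Lemma~\ref{lem:complete} gives $\A$-completeness of $\Hom_{0,1}(\A^X,\A)$, so it remains to show $\eta_X^P$ is $\A$-equable and to invoke the uniqueness characterization of the repletion (Proposition~\ref{prop:altcharepweak}). But the crux -- that every continuous semilattice homomorphism $h\colon\Hom_{0,1}(\A^X,\A)\to\A$ is determined by $h\circ\eta_X^P$ -- is deferred to ``adapted Heckmann and Goubault-Larrecq arguments,'' and those references do not supply it; Heckmann enters only in part (b). The paper proves this uniqueness by an elementary decomposition lemma (Observation~\ref{obs}, instantiated as Lemma~\ref{hunique}): a continuous semilattice homomorphism $\Phi$ defined on a sub-dcpo-semilattice $B$ of a product $B_1\times B_2$ of a join- and a meet-semilattice cannot mix the two coordinates, provided $B$ contains all slices $(b_1,0)$ and $(1,b_2)$; the proof is a short computation with $\funion$. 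Applying this with $B=\Hom_{0,1}(\A^X,\A)$ viewed inside $\Hom_{0,1}(\2_\vee^X,\2_\vee)\times\Hom_{0,1}(\2_\wedge^X,\2_\wedge)$ (Lemma~\ref{lem:3}) is exactly where Scott-compactness of $X$ is used: it makes the map $\phi_\bot$ (sending every open set except $X$ to $0$) continuous, providing the \emph{least} element of the Smyth factor (not a top element, as you say) so that the required slices exist. Uniqueness of the two components $\Phi_H$, $\Phi_S$ then follows from internal freeness, i.e.\ $\2_\vee$- and $\2_\wedge$-equability, of the Hoare and Smyth units, which is where continuity of $X$ enters. So your part (a) is structurally on track, but its core mechanism is missing and the tools you point to would not fill the hole.
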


 Let us indicate, how big the gap between the Plotkin powerdomain and
the repletion $\Hom_{0.1}(\A^X,\A)$ really is: As we will see in
\ref{sec:lens}, the elements of the repletion of $X$ correspond to 
`formal lenses', that is, pairs $(C,Q)$ of a closed set $C$ and a compact 
saturated set $Q$ in $X$ with nonempty intersection. Each of this formal 
lenses determines a  `real lens' $L = C\cap Q$. The same real lens can be 
represented by many different formal lenses. If the dcpo is continuous and 
coherent, the real lenses correspond to the elements of the
Plotkin powerdomain as we recall later in \ref{sec:lens}.

A first step towards a proof of Theorem \ref{th:main} is a consequence
of Lemma \ref{lem:complete}: 
 
\begin{prop}\label{prop:plotkincomplete}
For any dcpo $X$, the dcpo-semilattice $\Hom_{0,1}(\A^X,\A)$ of all
top and bottom preserving semilattice homomorphisms $\phi\colon
\A^X\to\A$ is $\A$-complete.  
\end{prop}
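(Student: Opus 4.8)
The plan is to show that $\Hom_{0,1}(\A^X,\A)$ is $\A$-complete by exhibiting it as a limit (specifically an intersection of equalizers) of $\A$-complete algebras, so that Lemma~\ref{compcomp} applies. The starting point is Corollary~\ref{cor:1}: the power $\A^{\A^X}$ is $\A$-complete, since all $\dcpo$-powers of the prototype $\A$ are $\A$-complete. Thus it suffices to realize $\Hom_{0,1}(\A^X,\A)$ as an $\A$-complete $\dcpo_\Omega$-subalgebra of $\A^{\A^X}$ by carving it out with equalizers of homomorphisms, exactly as indicated in the remark following Lemma~\ref{lem:complete}.

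First I would recall that $\A^X = (\A^X,\funion)$ is itself a dcpo-semilattice under the pointwise operation, so $\A^{\A^X}$ is a dcpo-semilattice, and the conditions defining $\Hom_{0,1}(\A^X,\A)$ inside $\A^{\A^X}$ are: (i) being a semilattice homomorphism, i.e.\ $\phi(u\funion v)=\phi(u)\funion\phi(v)$ for all $u,v\in\A^X$, and (ii) preserving bottom and top, i.e.\ $\phi(\bot_{\A^X})=\bot$ and $\phi(\top_{\A^X})=\top$. The second I would express as equalizers of the evaluation-at-a-point homomorphisms against the constant homomorphisms with values $\bot$ and $\top$. For the first, for each fixed pair $u,v$ the two maps $\phi\mapsto\phi(u\funion v)$ and $\phi\mapsto\phi(u)\funion\phi(v)$ from $\A^{\A^X}$ to $\A$ are both $\Omega$-homomorphisms out of the semilattice $\A^{\A^X}$ into $\A$ --- here one uses commutativity, associativity and idempotence of $\funion$ to verify that $\phi\mapsto\phi(u)\funion\phi(v)$ respects $\funion$ --- so their equalizer $E_{u,v}$ is a subalgebra. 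Intersecting all $E_{u,v}$ over $u,v\in\A^X$ together with the two point-preservation equalizers yields precisely $\Hom_{0,1}(\A^X,\A)$.

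The conclusion then follows from Lemma~\ref{compcomp}: each $E_{u,v}$ and each bottom/top equalizer is, by the general equalizer construction, a limit of $\A$-complete algebras mapping into the $\A$-complete algebra $\A^{\A^X}$ and into the $\A$-complete algebra $\A$, hence is itself $\A$-complete; and the intersection, being again a limit of $\A$-complete subalgebra embeddings, is $\A$-complete. Since $\A^{\A^X}$ is $\A$-complete by Corollary~\ref{cor:1}, this exhibits $\Hom_{0,1}(\A^X,\A)$ as an $\A$-complete $\dcpo_\Omega$-subalgebra, which is the claim. Note that this argument is uniform in $X$ and uses nothing about continuity or compactness of $X$, consistent with the proposition being stated for an arbitrary dcpo $X$.

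The main obstacle I expect is purely the bookkeeping in step two: checking that $\phi\mapsto\phi(u)\funion\phi(v)$ is genuinely an $\Omega$-homomorphism $\A^{\A^X}\to\A$ requires the semilattice laws (this is the content of the parenthetical remark in the text after Lemma~\ref{lem:complete}), and one must confirm that equalizers and arbitrary intersections of subalgebras are computed as limits in $\dcpo_\Omega$ so that Lemma~\ref{compcomp} genuinely applies. These are routine but are the only substantive points; the rest is a direct appeal to Corollary~\ref{cor:1} and Lemma~\ref{compcomp}, mirroring the already-sketched proof of Lemma~\ref{lem:complete}.
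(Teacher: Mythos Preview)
Your proposal is correct and is essentially the paper's own argument: the paper simply invokes Lemma~\ref{lem:complete} (with $\uR=\A$), whose proof is exactly the equalizer-carving argument you spell out, using Corollary~\ref{cor:1} for $\A^{\A^X}$ and Lemma~\ref{compcomp} for closure under limits. In other words, you have unfolded the proof of Lemma~\ref{lem:complete} in the special case at hand rather than citing it, but the approach is identical.
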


For $\eta_X^P \colon X \to \Hom_{0,1}(\A^X,\A)$ being the $\A$-repletion 
of $X$, it suffices to show now that $\eta^P_X$ is $\A$-equable, i.e.\
that for every continuous $f : X \to \A$ there exists a unique 
Scott-continuous homomorphism $\wh f : \Hom_{0,1}(\A^X,\A) \to \A$  
with $\wh f \circ \eta^P_X = f$ and $f\mapsto \wh f$ is continuous. 
Actually, existence and continuity of $\wh f$ is not the problem since 
one may take $\wh f(\varphi) = \varphi(f)$. The question rather is uniqueness.
For answering this question positively we prove an auxiliary result.

\subsubsection{An auxiliary observation}\label{auxobs}

\begin{enumerate}
\item Let $A_1$ and $B_1$ be dcpo-join-semilattices and $A_2$ and $B_2$
dcpo-meet-semilattices. 
\end{enumerate}
The direct product $A_1\times A_2$ (and similarly $B_1\times B_2$) is a 
dcpo-semilattice with the product order $$(a_1,a_2)\leq (a_1',a_2')
\iff a_1\leq a_1',a_2\leq a_2'$$ and the semilattice operation 
$$(a_1,a_2)\funion (a_1',a_2') = (a_1\vee a_1',a_2\wedge a_2').$$  
We denote by $\pi_i\colon A_1\times A_2\to A_i,\ i=1,2,$ the canonical
projections onto the factors. We use the same notation for the
projections of $B_1\times B_2$ onto its factors. 
\begin{enumerate}[resume]
\item We suppose in addition that $B_1$ has a top element $1$ and
  $B_2$ a bottom element $0$.
\end{enumerate}
 Then we have embeddings $\epsilon_i\colon B_i\to
B_1\times B_2$ defined by $\epsilon_1(b_1)=(b_1,0)$ and
$\epsilon_2(b_2)=(1,b_2)$, respectively, which are semilattice
homomorphisms. %The compositions $\epsilon_i\pi_i$ are given by
%$\epsilon_1\pi_1(b_1,b_2) = (b_1,0)$ and $\epsilon_2\pi_2(b_1,b_2) =
%(1,b_2)$. %If $B_1$ and $B_2$ both have a bottom and a top element,
%then  $\epsilon_1\pi_1(b)=b\funion (0,0)$ and
%$\epsilon_2\pi_2(b)=b\funion (1,1)$.  
%We will use the following technical lemma:
%\begin{obs}\label{obs}
%Let $A_1, B_1$ be dcpo-join-semilattices and $A_2,B_2$ 
%dcpo-meet-semilattices.
%Suppose that $B_1$ has a top element $1$ and $B_2$ a bottom element
%$0$. 
%Under the above hypotheses, 
\begin{enumerate}[resume]
\item[(3)] Suppose that $B$ is a dcpo-subsemilattice of
$B_1\times B_2$ containing $(b_1,0)$ and $(1,b_2)$ for all $b_1\in
B_1, b_2\in B_2$. 
\end{enumerate}
\begin{diagram} 
            &             &B_1\times B_2&             &\\
            &\ruTo^{\epsilon_1}&\uInto         &\luTo^{\epsilon_2}&\\
B_1         &\pile{\lTo^{\pi_1}\\
  \rTo_{\epsilon_1}}&B&\pile{\rTo^{\pi_2}\\ 
\lTo_{\epsilon_2}}&B_2\\
\dTo<{\Phi_1}&            &\dTo<{\Phi}  &            &\dTo>{\Phi _2}\\
A_1         &\lTo_{\pi_1} &A_1\times A_2&\rTo_{\pi_2} &A_2
\end{diagram}
\begin{obs}\label{obs}
For a continuous semilattice homomorphism $\Phi\colon B\to A_1\times
A_2$ consider the continuous semilattice homomorphisms $\Phi_i=
\pi_i\Phi\epsilon_i\colon B_i\to A_i, i=1,2,$ that is, $\Phi_1(b_1)
=  \pi_1(\Phi(b_1,0))$ and  $\Phi_2(b_2) =\pi_2(\Phi(1,b_2))$ for all
$b_1\in B_1, b_2\in B_2$. 
Then the above diagram commutes, that is, $\Phi(b_1,b_2) =
(\Phi_1(b_1), \Phi_2(b_2))$ for all $b=(b_1,b_2)\in B$.   
\end{obs}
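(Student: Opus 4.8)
The plan is to pin down the two coordinates of $\Phi(b_1,b_2)$ separately, exploiting the fact that the target $A_1\times A_2$ carries the join $\vee$ on its first factor and the meet $\wedge$ on its second. First I would record the standing facts that make everything well-typed: by hypothesis~(3) both $(b_1,0)$ and $(1,b_2)$ lie in $B$, so that $\Phi_1(b_1)=\pi_1(\Phi(b_1,0))$ and $\Phi_2(b_2)=\pi_2(\Phi(1,b_2))$ genuinely refer to values of $\Phi$; and that $\Phi$, being a continuous semilattice homomorphism, is in particular monotone.

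The crux is a pair of elementary identities in the product semilattice, both forced by the hypotheses that $0$ is the bottom of $B_2$ and $1$ the top of $B_1$: namely $(b_1,b_2)\funion(b_1,0)=(b_1\vee b_1,\,b_2\wedge 0)=(b_1,0)$ and $(b_1,b_2)\funion(1,b_2)=(b_1\vee 1,\,b_2\wedge b_2)=(1,b_2)$. Applying $\Phi$ to the first identity and projecting onto the first coordinate, where the operation is $\vee$, yields $\Phi_1(b_1)=\pi_1(\Phi(b_1,b_2))\vee\Phi_1(b_1)$, hence $\pi_1(\Phi(b_1,b_2))\leq\Phi_1(b_1)$. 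Symmetrically, applying $\Phi$ to the second identity and projecting onto the second coordinate, where the operation is $\wedge$, gives $\Phi_2(b_2)=\pi_2(\Phi(b_1,b_2))\wedge\Phi_2(b_2)$, hence $\Phi_2(b_2)\leq\pi_2(\Phi(b_1,b_2))$.

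To upgrade these one-sided comparisons to equalities I would invoke monotonicity against the order relations $(b_1,0)\leq(b_1,b_2)\leq(1,b_2)$, which again hold precisely because $0$ is bottom and $1$ is top. The left relation gives $\Phi_1(b_1)=\pi_1(\Phi(b_1,0))\leq\pi_1(\Phi(b_1,b_2))$, and the right relation gives $\pi_2(\Phi(b_1,b_2))\leq\pi_2(\Phi(1,b_2))=\Phi_2(b_2)$. Combining each with the matching inequality from the previous step forces $\pi_1(\Phi(b_1,b_2))=\Phi_1(b_1)$ and $\pi_2(\Phi(b_1,b_2))=\Phi_2(b_2)$, so that $\Phi(b_1,b_2)=(\Phi_1(b_1),\Phi_2(b_2))$, as claimed.

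I do not anticipate a genuine obstacle; the single point demanding care is the bookkeeping of the asymmetry between the two factors. Because the first coordinate is governed by $\vee$ and the second by $\wedge$, the homomorphism identity and the monotonicity inequality must be paired in the correct order on each side — squeezing from above on the join coordinate and from below on the meet coordinate — and it is easy to transpose the two and obtain a vacuous comparison instead of an equality.
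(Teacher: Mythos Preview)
Your proof is correct and is essentially the paper's own argument, specialized from the outset: the paper shows more generally that $\psi_1=\pi_1\circ\Phi$ is independent of its second argument via the chain $\psi_1(b_1,b_2)\leq\psi_1(b_1,b_2)\vee\psi_1(b_1,b_2')=\psi_1(b_1,b_2\wedge b_2')\leq\psi_1(b_1,b_2)$ (and dually for $\pi_2\circ\Phi$), then reads off the value at $b_2'=0$. Your two squeeze inequalities are exactly this chain with $b_2'=0$ (resp.\ $b_1'=1$), so the mechanism --- homomorphism identity for one direction, monotonicity for the other --- is the same.
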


\begin{proof}
Let $\Phi\colon B\to A_1\times A_2$ be a semilattice
homomorphism. The composed maps $\psi_i=\pi_i\Phi\colon B \to A_i$ are semilattice homomorphisms, too. We claim that $\psi_1$
does not depend on the second and $\psi_2$ not on the first argument.

Indeed, for $(b_1,b_2), (b_1,b_2')\in B$ we have:
$\psi_1(b_1,b_2),\psi_1(b_1,b_2') \leq
\psi_1(b_1,b_2)\vee\psi_1(b_1,b_2') =
\psi_1((b_1,b_2)\funion(b_1,b_2')) = 
\psi_1(b_1,b_2\wedge b_2') \leq \psi_1(b_1,b_2),\psi_1(b_1,b_2')$. Thus 
 $\psi_1(b_1,b_2)=\psi_1(b_1,b_2')$. The order
 dual argument yields the analogous statement for $\psi_2$. 
 
It follows that the maps $\Phi_i\colon B_i\colon\to A_i, i=1,2,$
defined by $\Phi_1(b_1)=\psi_1(b_1,0)=\pi_1\Phi(b_1,0)$ and
$\Phi_2(b_2)=\psi_2(1,b_2)=\pi_2\Phi(1,b_2)$ have the desired properties.    
\end{proof}

 \subsubsection{Formal lenses and real lenses}\label{sec:lens}

Our aim in this subsection is to develop a more conceptual understanding 
of the elements of $\Hom_{0,1}(\A^X,\A)$ as so-called \emph{formal lenses} 
in a dcpo $X$.  

Recall that $\A$ can be seen as the subsemilattice of $\2_\vee\times
\2_\wedge$ consisting of those elements $(a_1,a_2)$ with $a_1\geq a_2$.

Consequently, every $f \in \A^X$ can be written uniquely as a pair
$(u_1,u_2)\in \2_\vee^X\times\2_\wedge^X$ such that $u_1\geq u_2$ or,
alternatively, as a pair $(U_1,U_2)$ of open subsets of $X$ with
$U_1\supseteq U_2$.
 
\begin{lem}\label{lem:3} 
$\Hom_{0,1}(\A^X,\A)$ agrees with the dcpo-subsemilattice of all pairs
$(\varphi_1,\varphi_2)$ in $\Hom_{0,1}(\2_\vee^X,\2_\vee)\times
\Hom_{0,1}(\2_\wedge^X,\2_\wedge)$ such that $\phi_1\geq\phi_2$.   
\end{lem}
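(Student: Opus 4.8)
The plan is to read off a homomorphism $\phi\in\Hom_{0,1}(\A^X,\A)$ as a pair consisting of a Hoare and a Smyth datum by applying the auxiliary Observation~\ref{obs}. Concretely, I would instantiate \ref{obs} with $B_1=\2_\vee^X$, $B_2=\2_\wedge^X$, $A_1=\2_\vee$, $A_2=\2_\wedge$, and with $B=\A^X$, which by the identification recalled just before the lemma is exactly the dcpo-subsemilattice $\{(u_1,u_2) : u_1\geq u_2\}$ of $\2_\vee^X\times\2_\wedge^X$. The hypotheses of \ref{obs} hold: $B_1$ has top element the constant function $1$, $B_2$ has bottom element the constant function $0$, and $B$ contains $(u_1,0)$ and $(1,u_2)$ for all $u_1,u_2$. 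Post-composing a given $\phi\in\Hom_{0,1}(\A^X,\A)$ with the subsemilattice inclusion $\A\hookrightarrow\2_\vee\times\2_\wedge$ turns it into a semilattice homomorphism $B\to A_1\times A_2$, so \ref{obs} provides $\varphi_1=\pi_1\phi\epsilon_1\colon\2_\vee^X\to\2_\vee$ and $\varphi_2=\pi_2\phi\epsilon_2\colon\2_\wedge^X\to\2_\wedge$ with $\phi(u_1,u_2)=(\varphi_1(u_1),\varphi_2(u_2))$ for all $(u_1,u_2)\in\A^X$; this is the forward assignment $\phi\mapsto(\varphi_1,\varphi_2)$.

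Next I would verify that $(\varphi_1,\varphi_2)$ lands in the asserted set. By \ref{obs} both maps are continuous semilattice homomorphisms. For preservation of $\bot$ and $\top$ the only nonroutine ingredient is the value $\phi(\text{const }m)$: since $\text{const }m=\text{const }\bot\funion\text{const }\top$ and $\phi$ preserves $\bot$ and $\top$, we get $\phi(\text{const }m)=\bot\funion\top=m$, and together with $\phi(\text{const }\bot)=\bot$, $\phi(\text{const }\top)=\top$ this forces $\varphi_1,\varphi_2\in\Hom_{0,1}$. The crucial and genuinely content-bearing step is the inequality $\varphi_1\geq\varphi_2$: evaluating the decomposition on a diagonal element $(u,u)\in\A^X$ gives $\phi(u,u)=(\varphi_1(u),\varphi_2(u))$, and because this value lies in $\A=\{(a_1,a_2) : a_1\geq a_2\}$ we read off $\varphi_1(u)\geq\varphi_2(u)$ for every $u$. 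Thus $\varphi_1\geq\varphi_2$ is not an extra hypothesis but precisely the shadow of the fact that $\phi$ takes values in the chain $\A$ rather than in all of $\2_\vee\times\2_\wedge$.

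For the converse I would send a pair $(\varphi_1,\varphi_2)$ with $\varphi_1\geq\varphi_2$ to $\phi(u_1,u_2):=(\varphi_1(u_1),\varphi_2(u_2))$. This is well defined (i.e.\ lands in $\A$) because for $u_1\geq u_2$ monotonicity of $\varphi_1$ and $\varphi_1\geq\varphi_2$ give $\varphi_1(u_1)\geq\varphi_1(u_2)\geq\varphi_2(u_2)$; continuity, the homomorphism property and preservation of $\bot,\top$ are then componentwise computations using that $\varphi_1$ is a $\vee$- and $\varphi_2$ a $\wedge$-homomorphism. By \ref{obs} and the defining formulas the two assignments are mutually inverse, and each is manifestly order-preserving, so they form an order-isomorphism and hence a dcpo-isomorphism. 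Finally, computing $\phi\funion\psi$ componentwise via $\epsilon_1,\epsilon_2$ shows the isomorphism carries $\funion$ to the mixed operation $(\varphi_1\vee\psi_1,\varphi_2\wedge\psi_2)$, so it is an isomorphism of dcpo-subsemilattices. I expect no real obstacle beyond \ref{obs} itself; the step deserving care is the diagonal evaluation giving $\varphi_1\geq\varphi_2$, since this is exactly what couples the Hoare component $\varphi_1$ to the Smyth component $\varphi_2$.
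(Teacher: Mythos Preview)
Your proposal is correct and follows essentially the same route as the paper: both instantiate Observation~\ref{obs} with $B=\A^X\subseteq\2_\vee^X\times\2_\wedge^X$, extract $\varphi_1,\varphi_2$ via $\epsilon_1,\epsilon_2$, and identify the condition ``$\phi$ lands in $\A$'' with $\varphi_1\geq\varphi_2$. Your write-up is in fact somewhat more explicit than the paper's in two places---the computation of $\phi(\mathrm{const}\,m)=m$ needed for $\varphi_1,\varphi_2\in\Hom_{0,1}$, and the diagonal evaluation giving $\varphi_1\geq\varphi_2$---both of which the paper compresses into one-line remarks.
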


\begin{proof}
We want to apply Observation \ref{obs}:  $\A^X$ is
the dcpo-subsemilattice of $\2_\vee^X\times \2_\wedge^X$ of all
$u=(u_1,u_2)$ such that $u_1\geq u_2$. In particular $(u_1,0)\in
\A^X$ and $(1,u_2)\in \A^X$ for all $u_1,u_2\in\2^X$, where $0$ and
$1$ denote also the constant functions with value $0$ and $1$
respectively. Thus the 
hypotheses of $\ref{obs}$ are satisfied. We conclude that the
continuous semilattice homomorphisms $\phi\colon\A^X\to
\2_\vee\times\2_\wedge$ 
are those maps for which there are continuous semilattice homomorphisms
$\phi_1\colon \2_\vee^X\to\2_\vee$ and $\phi_2\colon
\2_\wedge^X\to\2_\wedge$ such that
$\phi(u_1,u_2)=(\phi_1(u_1),\phi_2(u_2))$. The maps $\phi$ and
$\phi_1,\phi_2$ are related by the formulas
$\phi_1(u_1)=\pi_1\phi(u_1,0)$ and
$\phi_2(u_2)=\pi_2\phi(1,u_2)$. Clearly, $\phi$ preserves 
bottom and top if and only if both $\phi_1$ and $\phi_2$ preserve
bottom and top. 
Finally, $\phi$ maps $\A^X$ into $\A$ if and only if $\phi_1\geq\phi_2$.  
\end{proof}

According to Lemma \ref{lem:funchoare} and Lemma \ref{lem:funcsmyth}, the
elements $\phi_1$ of $\Hom_{0,1}(\2_\vee^X,\2_\vee)$ correspond to nonempty
closed subsets $C$ of $X$, namely $C=X\setminus \bigcup\phi_1^{-1}(0)$,
and the elements $\phi_2$ of $\Hom_{0,1}(\2_\wedge^X,\2_\wedge)$
correspond to Scott-open filters $\cU$ of $\cO(X)$, namely $\cU =
\phi_2^{-1}(1)$. The condition $\phi_1\geq \phi_2$ corresponds to the
requirement $C\cap U\neq \emptyset$ for all $U\in \cU$. Hence
$\Hom_{0,1}(\A^X,\A)$ is the functional representation of the
dcpo-semilattice of formal lenses according to the following
definition:

\begin{defi}   {\rm
A \emph{formal lens} of a dcpo $X$ consists of a pair $(C,\cU)$ of a
nonempty closed subset $C$ of $X$ and a Scott-open filter $\cU$ of
$\cO(X)$  such that $C\cap U\neq \emptyset$ for all $U\in\cU$. }  
\end{defi}

The formal lenses form a dcpo-subsemilattice of $\Cc(X)\times
\mathcal{OFilt}(X)$, where $\Cc(X)$ is the dcpo-$\cup$-semilattice  of
nonempty closed subsets of $X$ and
$\mathcal{OFilt}(X)$  is the dcpo-$\cap$-semilattice of Scott-open
filters of $\cO(X)$.

In the case where the dcpo $X$ is sober w.r.t.\ its Scott topology,
under assumption of AC the Scott-open filters $\cU$ can be replaced by their
intersections $Q=\bigcap \cU$ which are the nonempty Scott-compact saturated
subsets of $X$. Thus the formal lenses of a sober dcpo $X$ are the
pairs $(C,Q)$ of closed subsets $C$ and compact saturated subsets $Q$
of $X$ such that $C\cap Q\neq\emptyset$. But remember that the dcpo of
nonempty compact saturated sets is ordered by reverse inclusion
$\supseteq$ and the meet-operation is $\cup$.

Notice that formal lenses $(C,Q)$ are not determined by the intersection 
$L := C \cap Q$ which can be considered as the {\bf real lens}
corresponding to the formal lens $(C,Q)$. This real lens $L$ can be
represented by the ``normalized'' formal lens $(C_L,Q_L)$ where $C_L$
is the closure of $L$ for the Scott topology and $Q_L= {\ua}L$.

In most cases the dcpo of formal lenses is much bigger than the dcpo of real lenses.
As an example consider the two element set $X=\{0,1\}$ with the discrete order. 
There are three real lenses, the nonempty subsets. But there are four more formal 
lenses. The real lens $\{1\}$ has three representations by formal lenses, namely 
$\{1\} = \{1\}\cap \{1\} = \{1\}\cap \{0,1\} = \{0,1\} \cap \{1\}$,
and so has $\{0\}$. Thus, for flat dcpos $X$ with more than one element 
$\A(X)$ is different from the Plotkin powerdomain $\mathcal{P}(X)$ and the
same applies to their liftings $X_\bot$.

\subsubsection{The $\A$-repletion of a compact continuous dcpo}

Since $\Hom_{0,1}(\A^X,\A)$ is $\A$-complete by Proposition
\ref{prop:plotkincomplete},
the question now is whether the map $\eta^P \colon X \to
\Hom_{0,1}(\A^X,\A)\colon x \mapsto \lambda \phi. \phi(x)$ is $\A$-equable.
I.~Battenfeld in his 2013 paper has shown that this
is actually true for $X = 2$, the discrete poset. In this section
we give a positive answer for compact continuous dcpos $X$. 

As before, we consider $\A$ to be a dcpo-subsemilattice of
$\2_\vee\times\2_\wedge$. We write
$\pi_1\colon\2_\vee\times\2_\wedge\to\2_\vee$ and
$\pi_2\colon\2_\vee\times\2_\wedge\to\2_\wedge$ for the respective canonical
projections.% and write $\epsilon_\vee\colon \2_\vee\to \A\colon
%x\mapsto (x,0)$ and $\epsilon_\wedge\colon \2_\wedge\to \A\colon
%y\mapsto (1,y)$ for the corresponding inclusion 
%maps. %The composed maps $\epsilon_\vee\circ\pi_\vee: \Ab \to \Sigma_H$ and 
%$p_S : \Ab \to \Sigma_S$ are given by $\bot \funion (-)$ 
%and $\top \funion (-)$, respectively.

By Lemma \ref{lem:3}, $\Hom_{0,1}(\A^X,\A)$ agrees with the
dcpo-subsemilattice of all 
$(\varphi_1,\varphi_2)\in\Hom_{0,1}(\2_\vee^X,\2_\vee)\times
\Hom_{0,1}(\2_\wedge^X,\2_\wedge)$ such that $\phi_1\geq\phi_2$. 
We write $\pi_H$ and $\pi_S$ for the projections from
$\Hom_{0,1}(\A^X,\A)$ to   $\Hom_{0,1}(\2_\vee^X,\2_\vee)$ and
$\Hom_{0,1}(\2_\wedge^X,\2_\wedge)$, respectively.
%$\cH(X)$ and $\cS(X)$ which are obtained by postcomposition with $p_H$ 
%and $p_S$, respectively. 
The maps $\eta^P \colon X \to  \Hom_{0,1}(\A^X,\A)$, 
$\eta^H \colon X \to \Hom_{0,1}(\2_\vee^X,\2_\vee)$ and $\eta^S \colon X
\to\Hom_{0,1}(\2_\wedge^X,\2_\wedge)$ are all three
given by $\lambda x.\lambda u. u(x)$ as usual and one easily checks that
the diagram
\begin{diagram}
& & X & & \\
& \ldTo^{\eta^H} & \dTo_{\eta^P} & \rdTo^{\eta^S} & \\
\Hom_{0,1}(\2_\vee^X,\2_\vee) & \lTo_{\pi_H} & \Hom_{0,1}(\A^X,\A) & \rTo_{\pi_S} & \Hom_{0,1}(\2_\wedge^X,\2_\wedge) 
\end{diagram}
commutes. In order to apply Observation \ref{obs}, we need the
hypothesis that $\Hom_{0,1}(\2_\wedge^X,\2_\wedge)$ has a least and
$\Hom_{0,1}(\2_\vee^X,\2_\vee)$ a greatest element. While
$\Hom_{0,1}(\2_\vee^X,\2_\vee)$ always has a greatest element --- the map
$\phi_\top$ that maps all open sets to top except for the empty set,
we have to suppose $X$ to be compact for the Scott topology so that
$\Hom_{0,1}(\2_\wedge^X,\2_\wedge)$ has a least element. Indeed,
the map $\phi_\bot$ mapping all open sets to bottom except for the
whole space is a meet-semilattice homomorphism, and it is continuous
if and only if $X$ is compact. Thus, we assume $X$ to be Scott-compact.
We now have continuous maps
$\epsilon_H\colon\Hom_{0,1}(\2_\vee^X,\2_\vee)\to\Hom_{0,1}(\A^X,\A)\colon
\phi_1\mapsto (\phi_1,\phi_\bot)$ with $\pi_H\circ \epsilon_H = \id$,
and
$\epsilon_S\colon\Hom_{0,1}(\2_\wedge^X,\2_\wedge)\to\Hom_{0,1}(\A^X,\A)\colon   
\phi_2\mapsto (\phi_\top,\phi_2)$ with $\pi_S\circ \epsilon_S
= \id$. 

Now we can apply the decomposition Observation \ref{obs} and we obtain:

\begin{lem}\label{hunique}
Let $X$ be a compact dcpo and 
$\Phi\colon\Hom_{0,1}(\A^X,\A)\to \A$ a continuous semilattice
homomorphism. Then there exist unique continuous
semilattice homomorphisms $\Phi_H\colon \Hom_{0,1}(\2_\vee^X,\2_\vee) \to
\2_\vee$ and $\Phi_S\colon\Hom_{0,1}(\2_\wedge^X,\2_\wedge) \to\2_\wedge$ 
such that
%\begin{itemize}
%\item[\emph{(1)}] $h_H$ preserves $\funion$
%\item[\emph{(2)}] $h_S$ preserves $\funion$
%\item[\emph{(3)}] 
the following diagram commutes:
\begin{diagram}
\Hom_{0,1}(\2_\vee^X,\2_\vee) & \pile{\lTo^{\pi_H}\\
  \rTo_{\epsilon_H}} & \Hom_{0,1}(\A^X,\A) & 
\pile{\rTo^{\pi_S}\\ \lTo_{\epsilon_S}}& \Hom_{0,1}(\2_\wedge^X,\2_\wedge) \\  
\dTo^{\Phi_H} & & \dTo_\Phi & & \dTo_{\Phi_S} \\
\2_\vee & \lTo_{\pi_1} & \A & \rTo_{\pi_2} & \2_\wedge \\
\end{diagram}
%commutes.
%\end{itemize}
\end{lem}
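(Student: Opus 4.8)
The plan is to recognize this lemma as a direct instance of the decomposition Observation~\ref{obs}, with the abstract data instantiated by the concrete semilattices at hand. Specifically, I would put $A_1 = \2_\vee$ and $A_2 = \2_\wedge$, so that $A_1 \times A_2 = \2_\vee \times \2_\wedge$ and $\A$ sits inside it as the subsemilattice of pairs $(a_1,a_2)$ with $a_1 \geq a_2$. I would take $B_1 = \Hom_{0,1}(\2_\vee^X, \2_\vee)$ and $B_2 = \Hom_{0,1}(\2_\wedge^X, \2_\wedge)$, and by Lemma~\ref{lem:3} the domain $B := \Hom_{0,1}(\A^X,\A)$ is precisely the dcpo-subsemilattice of $B_1 \times B_2$ consisting of those $(\phi_1,\phi_2)$ with $\phi_1 \geq \phi_2$. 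Composing the given $\Phi$ with the subsemilattice inclusion $\A \hookrightarrow \2_\vee \times \2_\wedge$ produces a continuous semilattice homomorphism $\Phi \colon B \to A_1 \times A_2$, which is exactly the input required by Observation~\ref{obs}.

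Next I would verify the three hypotheses of Observation~\ref{obs}. The join-semilattice $B_1$ has a greatest element $\phi_\top$, and---this is the single point where the compactness assumption on $X$ enters---the meet-semilattice $B_2$ has a least element $\phi_\bot$, which is continuous precisely because $X$ is Scott-compact. The remaining hypothesis, that $B$ contains $(\phi_1,\phi_\bot)$ and $(\phi_\top,\phi_2)$ for all $\phi_1 \in B_1$, $\phi_2 \in B_2$, holds because $\phi_1 \geq \phi_\bot$ and $\phi_\top \geq \phi_2$; indeed these are exactly the maps $\epsilon_H(\phi_1)$ and $\epsilon_S(\phi_2)$ already introduced before the lemma, for which $\pi_H \circ \epsilon_H = \id$ and $\pi_S \circ \epsilon_S = \id$.

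With the hypotheses in place, Observation~\ref{obs} directly yields the continuous semilattice homomorphisms $\Phi_H := \pi_1 \circ \Phi \circ \epsilon_H \colon B_1 \to \2_\vee$ and $\Phi_S := \pi_2 \circ \Phi \circ \epsilon_S \colon B_2 \to \2_\wedge$ satisfying $\Phi(\phi_1,\phi_2) = (\Phi_H(\phi_1),\Phi_S(\phi_2))$ for all $(\phi_1,\phi_2) \in B$. I would then observe that this identity is nothing but the commutativity of the two squares of the displayed diagram, namely $\pi_1 \circ \Phi = \Phi_H \circ \pi_H$ and $\pi_2 \circ \Phi = \Phi_S \circ \pi_S$. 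Uniqueness is not really an obstacle: any $\Phi_H$ making the left square commute must satisfy $\Phi_H = \Phi_H \circ \pi_H \circ \epsilon_H = \pi_1 \circ \Phi \circ \epsilon_H$, and symmetrically for $\Phi_S$, so both are forced. The only genuinely substantive ingredient in the whole argument is that Scott-compactness of $X$ is what supplies the bottom element $\phi_\bot$ of $B_2$; once that is secured, the lemma is simply the payoff of matching this concrete situation to the abstract pattern of Observation~\ref{obs}.
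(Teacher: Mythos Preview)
Your proposal is correct and follows exactly the route the paper takes: the paper simply states that the lemma is obtained by applying Observation~\ref{obs} once the hypotheses (top in $B_1$, bottom in $B_2$ via compactness of $X$, and the images of $\epsilon_H,\epsilon_S$ lying in $B$) have been set up, and you have spelled out precisely this instantiation together with the straightforward uniqueness argument via $\pi_H\circ\epsilon_H=\id$ and $\pi_S\circ\epsilon_S=\id$.
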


Now we are ready to show the result that we were aiming for.

\begin{prop}\label{mainth}
For compact continuous dcpos $X$ the map $\eta_X^P\colon X
\to\Hom_{0,1}(\A^X,\A)$ is $\A$-equable. 
\end{prop}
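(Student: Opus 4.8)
The plan is to split the statement into existence/continuity on the one hand and uniqueness on the other, since, as already noted after Proposition~\ref{prop:plotkincomplete}, only the latter is substantial. For existence and continuity I would simply record that $\widehat{f} := \lambda\phi.\,\phi(f)$ is a continuous semilattice homomorphism $\Hom_{0,1}(\A^X,\A)\to\A$ (the operation on $\Hom_{0,1}(\A^X,\A)$ being pointwise), that it satisfies $\widehat{f}(\eta_X^P(x)) = \eta_X^P(x)(f) = f(x)$, and that $f\mapsto\widehat{f} = \lambda f.\lambda\phi.\,\phi(f)$ is continuous as the transpose of the jointly continuous evaluation map $(\phi,f)\mapsto\phi(f)$. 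Together with $\A$-completeness of $\Hom_{0,1}(\A^X,\A)$ from Proposition~\ref{prop:plotkincomplete}, establishing uniqueness will then show that $\eta_X^P$ is $\A$-equable, as desired.

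The core of the argument is therefore to prove that any continuous semilattice homomorphism $\Phi\colon\Hom_{0,1}(\A^X,\A)\to\A$ with $\Phi\circ\eta_X^P = f$ already equals $\widehat{f}$. Writing $f=(f_1,f_2)$ with $f_1\in\2_\vee^X$, $f_2\in\2_\wedge^X$ and $f_1\geq f_2$ as in Lemma~\ref{lem:3}, the strategy is to transport uniqueness from the Hoare and Smyth factors, where it is already available. Since $X$ is compact, the map $\phi_\bot$ is continuous and Lemma~\ref{hunique} applies, yielding continuous semilattice homomorphisms $\Phi_H\colon\Hom_{0,1}(\2_\vee^X,\2_\vee)\to\2_\vee$ and $\Phi_S\colon\Hom_{0,1}(\2_\wedge^X,\2_\wedge)\to\2_\wedge$ with $\Phi(\phi_1,\phi_2) = (\Phi_H(\phi_1),\Phi_S(\phi_2))$.

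Next I would chase the two triangles $\eta_X^H = \pi_H\circ\eta_X^P$ and $\eta_X^S = \pi_S\circ\eta_X^P$ against the commuting square of Lemma~\ref{hunique}. Because $\eta_X^P(x)$ corresponds to the pair $(\eta_X^H(x),\eta_X^S(x))$, evaluating the decomposition of $\Phi$ at $\eta_X^P(x)$ gives $(\Phi_H(\eta_X^H(x)),\Phi_S(\eta_X^S(x))) = \Phi(\eta_X^P(x)) = f(x) = (f_1(x),f_2(x))$, whence $\Phi_H\circ\eta_X^H = f_1$ and $\Phi_S\circ\eta_X^S = f_2$. I then invoke that $\eta_X^H$ is $\2_\vee$-equable (Theorem~\ref{th:hoare}, valid for all dcpos) and that $\eta_X^S$ is $\2_\wedge$-equable (Theorem~\ref{th:smyth}, where continuity of $X$ enters): the uniqueness clause of each forces $\Phi_H = \lambda\phi_1.\,\phi_1(f_1)$ and $\Phi_S = \lambda\phi_2.\,\phi_2(f_2)$. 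Recombining and using the factorization $\phi(f_1,f_2) = (\phi_1(f_1),\phi_2(f_2))$ from the proof of Lemma~\ref{lem:3}, I obtain $\Phi(\phi) = (\phi_1(f_1),\phi_2(f_2)) = \phi(f) = \widehat{f}(\phi)$.

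The hard part is the uniqueness step, and it is exactly here that both hypotheses on $X$ are consumed: compactness supplies the continuity of $\phi_\bot$ and hence the decomposition of $\Phi$ in Lemma~\ref{hunique}, while continuity of $X$ makes the Smyth construction internally free so that $\eta_X^S$ is genuinely $\2_\wedge$-equable. Granting the decomposition, the remainder is essentially a diagram chase reducing the erratic case to the already-settled angelic and demonic cases; the one point needing care is to verify that the projections $\pi_H,\pi_S$ really send $\eta_X^P$ to $\eta_X^H,\eta_X^S$, which is what pins down $\Phi_H$ and $\Phi_S$ as evaluation at $f_1$ and $f_2$.
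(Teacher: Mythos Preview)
Your proposal is correct and follows essentially the same route as the paper: produce the evaluation extension $\widehat f(\phi)=\phi(f)$ for existence and continuity, then for uniqueness decompose an arbitrary extension $\Phi$ via Lemma~\ref{hunique} into its Hoare and Smyth components and pin these down using the $\2_\vee$- and $\2_\wedge$-equability of $\eta_X^H$ and $\eta_X^S$ (the paper phrases the latter as ``internally free'' rather than citing Theorems~\ref{th:hoare} and~\ref{th:smyth}, but this is the same content). One small remark: the $\A$-completeness of $\Hom_{0,1}(\A^X,\A)$ is not needed to conclude that $\eta_X^P$ is $\A$-equable---equability is purely the isomorphism $\A^X\cong\Hom(\Hom_{0,1}(\A^X,\A),\A)$---so that clause in your first paragraph is superfluous for the proposition as stated (it only enters when combining with Proposition~\ref{prop:plotkincomplete} to identify the repletion).
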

\begin{proof}
Given $u \colon X \to \A$, a continuous homomorphic extension 
$\Phi\colon\Hom_{0,1}(\A^X,\A)\to \A $ of $u$ along 
$\eta_X^P$ is given by $\Phi(\varphi) = \varphi(u)$. 
Clearly, $\Phi$ is a continuous semilattice homomorphism and
we have $\Phi(\eta_X^P(x)) = \eta_X^P(x)(u) = u(x)$  for all $x \in X$,
that is, $\Phi$ extends $u$ along $\eta_X^P$. The extension operator
$u\mapsto \Phi$ is continuous since it is $\lambda$-definable as
$\lambda u.\lambda \phi.\ \varphi(u)$.

From Lemma~\ref{hunique} it follows that $\Phi$ is uniquely determined 
by the semilattice homomorphisms $\Phi_H$ and $\Phi_S$. We have
$\Phi_H \circ \eta_X^H = \pi_1 \circ u$ and $\Phi_S \circ \eta_X^S =
\pi_2 \circ u$. Since $\eta_X^H : X \to \cH(X)$ is internally free
it is also $\2_\vee$-equable. Since $\eta_X^S :  X \to \cS(X)$ is 
internally free it is also $\2_\wedge$-equable. Thus, the maps $\Phi_H$ 
and $\Phi_S$ are uniquely determined in a continuous way by $\pi_1 \circ u$ 
and $\pi_2 \circ u$, respectively. Thus, the map $\Phi$ is uniquely 
determined by $u$ in a continuous way.
\end{proof}

The previous proposition together with Proposition \ref{prop:plotkincomplete}
finishes the proof of part (a) of our main theorem \ref{th:main}.

\subsubsection{The Plotkin powerdomain and $\A$-valuations}

We proceed to a proof of assertion (b) in our Main Theorem~\ref{th:main}. 
We could give a direct proof. But we prefer to use R.~Heckmann's work in
\cite{heckabstvals} on $\A$-valuations and the Plotkin powerdomain. 
 
Heckmann has called a continuous map  $\alpha\colon\2^X\to\A$ an
\emph{$\A$-valuation} on the dcpo $X$, if it preserves bottom and top and,
moreover, satisfies the following two conditions:
\begin{enumerate}[label=(H\arabic*)]
\item if $\alpha(U) =\bot$ then $\alpha(U\cup V)=\alpha(V)$,
\item if $\alpha(U) =\top$ then $\alpha(U\cap V)=\alpha(V)$.
\end{enumerate} 
The collection $\HAVal(X)$ of Heckmann's $\A$-valuations is a
dcpo-subsemilattice of $\A^{\2^X}$.

We want to adapt Heckmann's $\A$-valuations to our setting.
We can identify the semilattice $\HAVal$ of Heckmann's $\A$-valuations 
with a dcpo-subsemilattice of $\Hom_{0,1}(\A^X,\A)$. For this purpose
we consider $\A$ as a join- and as a meet-semilattice with the operations
$$a\vee b =\max(a,b), \ \ \ a\wedge b = \min(a,b)$$ 
where $\max$ and $\min$ refer do the dcpo-ordering of $\A$.

\begin{defi}
A continuous map $\phi\colon \A^X\to\A$ will be called an
\emph{$\A$-valuation} on the dcpo $X$, if it preserves $0$ and $1$
and satisfies the conditions
\begin{enumerate}[label=(H\arabic*)]
\item if $\phi(u) =\bot$ then $\phi(u\vee v)=\phi(v)$,
\item if $\phi(u) =\top$ then $\phi(u\wedge v)=\phi(v)$.
\end{enumerate} 
We denote by $\AVal(X)$ the collection of these $\A$-valuations.
\end{defi} 

An easy calculation shows that one can pass from Heckmann's
$\A$-valuations $\alpha$ to our 
$\A$-valuations by defining $\ov\alpha(U_1,U_2)
=\alpha(U_1)\funion\alpha(U_2)$. Conversely, from an $\A$-valuation
$\phi$ in our sense on obtains an $\A$-valuation in the sense of
Heckmann by defining $\ov\phi(U) = \phi(U,U)$. (Here we have used that the
elements of $\A^X$ can be represented as pairs of open sets
$u=(U_1,U_2)$ with $U_1\supseteq U_2$.)  

For working with $\A$-valuations in our sense, another
characterization is useful:
As before, we consider
$\A$ to be a subsemilattice of $\2_\vee \times \2_\wedge$ and we
denote by $\pi_i, i=1,2,$ the canonical projections onto the two
factors. Similarly, we represent $\Hom_{0,1}(\A^X,\A)$ as the
dcpo-subsemilattice of all $\phi=(\phi_1,\phi_2)\in
\Hom_{0,1}(\2_\vee^X,\2_\vee) \times
\Hom_{0,1}(\2_\wedge^X,\2_\wedge)$ such that $\phi_1\geq\phi_2$.

\begin{lem}
Let $X$ be any dcpo. 
\begin{enumerate}[label=\({\alph*}]
\item For every Heckmann $\A$-valuation $\alpha\colon\2^X\to\A$,
let $\ov\alpha_i=\pi_i\circ\alpha\colon\2^X\to\2, i=1,2,$ and 
$\ov\alpha =(\ov\alpha_1,\ov\alpha_2)$. Then
$\ov\alpha\in\Hom_{0,1}(\A^X,\A)$. 

\item A $\phi=(\phi_1,\phi_2)\in\Hom_{0,1}(\A^X,\A)$ is of the form
$\ov\alpha$ for some Heckmann $\A$-valuation $\alpha$ if and only if 
it satisfies the following two conditions:
\begin{enumerate}[label=(H\arabic*')]
\item if $\varphi_1(U) = \bot$ then $\varphi_2(V) =
  \varphi_2(U \cup V)$  
\item if $\varphi_2(U) = \top$ then $\varphi_1(V) =
  \varphi_1(U \cap V)$  
\end{enumerate} 

\item The elements $\phi=(\phi_1,\phi_2)\in\Hom_{0,1}(\A^X,\A)$
  satisfying (H1') and (H2') are exactly the $\A$-valuations according
  to the definition above. They form a dcpo-subsemilattice of
  $\Hom_{0,1}(\A^X,\A)$ and $\alpha\mapsto\ov\alpha$ is a
  dcpo-semilattice isomorphism of $\HAVal$ onto $\AVal(X)$.
\end{enumerate}
\end{lem}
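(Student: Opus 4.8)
The plan is to treat (a), (b), (c) in turn, working throughout with the coordinate description of $\Hom_{0,1}(\A^X,\A)$ supplied by Lemma~\ref{lem:3}: each element is a pair $(\phi_1,\phi_2)$ with $\phi_1\in\Hom_{0,1}(\2_\vee^X,\2_\vee)$, $\phi_2\in\Hom_{0,1}(\2_\wedge^X,\2_\wedge)$ and $\phi_1\geq\phi_2$, acting on a pair of opens $U_1\supseteq U_2$ by $\phi(U_1,U_2)=(\phi_1(U_1),\phi_2(U_2))$. I shall use repeatedly that $\vee$ and $\wedge$ on $\A^X$ act on such pairs coordinatewise as $\cup$ and $\cap$, that $\funion$ sends $(U_1,U_2),(V_1,V_2)$ to $(U_1\cup V_1,U_2\cap V_2)$, and that in $\A\subseteq\2_\vee\times\2_\wedge$ one has $a=\bot$ iff $\pi_1 a=0$ and $a=\top$ iff $\pi_2 a=1$. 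For (a), continuity of $\ov\alpha_1,\ov\alpha_2$ and preservation of $0,1$ are inherited from $\alpha$, and $\ov\alpha_1\geq\ov\alpha_2$ holds because $\alpha(U)\in\A$. The substance is that $\ov\alpha_1$ is a $\vee$-homomorphism and $\ov\alpha_2$ a $\wedge$-homomorphism: monotonicity gives $\ov\alpha_1(U\cup V)\geq\ov\alpha_1(U)\vee\ov\alpha_1(V)$, and for the reverse inequality one notes that if $\ov\alpha_1(U)=0$ then $\alpha(U)=\bot$, so (H1) collapses $\alpha(U\cup V)$ to $\alpha(V)$; the order-dual argument with (H2) handles $\ov\alpha_2$, and Lemma~\ref{lem:3} then gives $\ov\alpha\in\Hom_{0,1}(\A^X,\A)$.

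For (b), if $\phi=\ov\alpha$ then reading (H1) in the second coordinate yields (H1') and reading (H2) in the first yields (H2'). Conversely, from $(\phi_1,\phi_2)$ satisfying (H1'),(H2') I set $\alpha(U)=(\phi_1(U),\phi_2(U))$, which lands in $\A$ since $\phi_1\geq\phi_2$; it is continuous, preserves $\bot,\top$, and each of the two Heckmann axioms splits into a first-coordinate instance supplied by the homomorphism property of $\phi_1$ (resp.\ by (H2')) and a second-coordinate instance supplied by (H1') (resp.\ by the homomorphism property of $\phi_2$), while $\ov\alpha=\phi$ by construction. Since $\alpha$ is recovered from $\ov\alpha$ as $\langle\ov\alpha_1,\ov\alpha_2\rangle$, parts (a) and (b) together already exhibit $\alpha\mapsto\ov\alpha$ as a bijection of $\HAVal$ onto the set $\{(\phi_1,\phi_2)\in\Hom_{0,1}(\A^X,\A):\text{(H1'),(H2')}\}$.

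For (c) it remains to identify this set with $\AVal(X)$ and to upgrade the bijection to an isomorphism. That every $(\phi_1,\phi_2)$ satisfying (H1'),(H2') is an $\A$-valuation is direct: $\phi(u)=\bot$ forces $\phi_1(U_1)=0$, whence the homomorphism property of $\phi_1$ gives the first coordinate of $\phi(u\vee v)=\phi(v)$ while (H1') and monotonicity give the second, establishing author's (H1), and dually for (H2). For the converse I would associate to an $\A$-valuation $\phi$ the Heckmann valuation $\ov\phi(U):=\phi(U,U)$, verifying its axioms from author's (H1),(H2) via $(U,U)\vee(V,V)=(U\cup V,U\cup V)$ and $(U,U)\wedge(V,V)=(U\cap V,U\cap V)$, and then prove $\phi=\ov{\ov\phi}$; part (b) then places $\phi$ in $\Hom_{0,1}(\A^X,\A)$ with (H1'),(H2'). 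The structural claims are bookkeeping: $\ov{(-)}$ is a $\funion$-homomorphism by commutativity and associativity of $\funion$ and a continuous order-embedding with continuous inverse $\phi\mapsto\ov\phi$, so its image $\AVal(X)$ is a dcpo-subsemilattice of $\Hom_{0,1}(\A^X,\A)$ and $\alpha\mapsto\ov\alpha\colon\HAVal\to\AVal(X)$ is a dcpo-semilattice isomorphism.

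The step I expect to be the main obstacle is the identity $\phi=\ov{\ov\phi}$ for a general $\A$-valuation, equivalently the decomposition $\phi(U_1,U_2)=(\pi_1\phi(U_1,U_1),\pi_2\phi(U_2,U_2))$ asserting that its two coordinates separate. Observation~\ref{obs} delivers precisely such a splitting, but only for semilattice homomorphisms, whereas here $\phi$ is a priori merely continuous, so the splitting must be wrung out of (H1) and (H2) directly. Concretely one squeezes $\phi(U_1,U_2)$ between $\phi(U_2,U_2)$ and $\phi(U_1,U_1)$ by monotonicity and uses (H1) to collapse the first coordinate when $\phi(U_2,U_2)=\bot$ and (H2) to collapse the second when $\phi(U_1,U_1)=\top$; the delicate point is the mixed case, where the intermediate value $m$ must be forced, and it is here that the interplay of the two Heckmann conditions has to be exploited most carefully.
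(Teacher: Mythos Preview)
Your treatment of (a) and (b) is essentially the paper's argument: (H1) and (H2) supply exactly the missing inequality needed to make $\ov\alpha_1$ a $\vee$-homomorphism and $\ov\alpha_2$ a $\wedge$-homomorphism, and the converse in (b) goes via $\ov\phi(U)=(\phi_1(U),\phi_2(U))$ with the Heckmann axioms split coordinatewise, just as you describe. For the bijection and isomorphism claims in (c) the paper is terser than you: it simply records $\ov{\ov\alpha}=\alpha$ and $\ov{\ov\phi}=\phi$ where $\phi$ ranges over $\Hom_{0,1}(\A^X,\A)$ satisfying (H1$'$),(H2$'$), and with the coordinate description of Lemma~\ref{lem:3} this is immediate --- no case analysis is needed there.

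The obstacle you single out, however, is real and in fact cannot be overcome as stated. The first sentence of (c) asserts that the $\A$-valuations are \emph{exactly} the elements of $\Hom_{0,1}(\A^X,\A)$ satisfying (H1$'$),(H2$'$), but an $\A$-valuation is by definition merely continuous and $\bot,\top$-preserving with (H1),(H2); nothing forces it to preserve $\funion$. For $X$ a one-point dcpo (so $\A^X\cong\A$), the map $\phi$ sending $\bot,m,\top$ to $\bot,\bot,\top$ is monotone, preserves $\bot,\top$, and satisfies (H1),(H2), yet $\phi(\top\funion\bot)=\phi(m)=\bot\neq m=\phi(\top)\funion\phi(\bot)$, so $\phi\notin\Hom_{0,1}(\A,\A)$ and $\ov{\ov\phi}(m)=m\neq\phi(m)$. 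Your proposed squeeze argument will therefore break down exactly in the mixed case you flag. The paper's own proof does not address this direction either --- it only establishes the isomorphism $\HAVal\cong\{\phi\in\Hom_{0,1}(\A^X,\A):\text{(H1$'$),(H2$'$)}\}$, and it is this set (rather than the literal $\AVal(X)$) that is used subsequently. So your plan is correct for everything the paper actually proves; the converse inclusion $\AVal(X)\subseteq\Hom_{0,1}(\A^X,\A)$ is a gap in the statement, not in your argument.
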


\begin{proof}
(a) It is clear that the $\alpha_i$ are continuous and that they
preserve bottom and top. We check that $\alpha_1$ is a join-semilattice
homomorphism. It suffices to show: if $\alpha_1(U) =\alpha_1(V) = 0$ then
$\alpha_1(U\cup V)=0$, and this is a direct consequence of Heckmann's
condition (H1). One uses condition (H2) in a similar way to show that
$\alpha_2$ is a meet-semilattice homomorphism. Since $\pi_1\geq \pi_2$,
we also have $\alpha_1=\pi_1\circ\alpha\geq\pi_2\circ\alpha\geq\alpha_2$. 

(b) For every Heckmann $\A$-valuation $\alpha$, $\alpha_1$ and
$\alpha_2$ satisfy the 
conditions (H1') and (H2'). Indeed, if $\alpha_1(U)=\pi_1(\alpha(U)) =
0$, then $\alpha(U)=0$, whence $\alpha(U\cup V)=\alpha(V)$ by
condition (H1); thus, $\alpha_2(U\cup V)=\pi_2\alpha(U\cup
V)=\pi_2\alpha(U) = \alpha_2(V)$, which proves condition (H1').
Similarly one proceeds for (H2').

Conversely, for every $\phi=(\phi_1,\phi_2)\in\Hom_{0,1}(\A^X,\A)$ we
define a map $\ov\phi\colon \2^X\to\A$ by
$\ov\phi(U)=(\phi_1(U),\phi_2(U))$. Clearly $\ov\phi$ is
continuous and preserves bottom and top. If $\phi$ satisfies
(H1') and (H2'), then $\ov\phi$ is a Heckmann $\A$-valuation. Indeed, suppose
$\ov\phi(U)=\bot$. Then $(\phi_1(U),\phi_2(U))=\bot =(0,0)$. Thus,
firstly, $\phi_1(U) =0$ which implies that $\phi_1(U\cup
V)=\phi_1(U)\vee\phi_1(V) =\phi_1(V)$, where we have used that
$\phi_1$ is a join-homomorphism. Secondly,  $\phi_1(U)=0$ also
implies $\phi_2(U\cup V)= \phi_2(V)$ by condition (H1'). Thus
$\ov\phi(U\cup V)=(\phi_1(U\cup V),\phi_2(U\cup V))=
(\phi_1(V),\phi_2(V)) = \ov\phi(V)$. Thus $\ov\phi$ satisfies
(H1). Similarly,  $\ov\phi$ satisfies (H2).

(c) Clearly $\ov{\ov\alpha}=\alpha$ and $\ov{\ov\phi}=\phi$. Thus,
$\alpha\mapsto\ov\alpha$ and $\phi\mapsto\ov\phi$ are mutually inverse
bijections. These bijections are order preserving by their very
definition. Hence they are dcpo-isomorphisms.
It is immediate from the definition that $\alpha\mapsto\ov\alpha$ is a
semilattice homomorphism from $\AVal(X)$ to
$\Hom_{0,1}(\A^X,\A)$. 
\end{proof}

% We now adapt Heckmann's terminology to our setting. When refering to
% Heckmann's $\A$-valuations we always specify this explicitly.  
 
% \begin{defi}
% An element $\phi=(\phi_1,\phi_2)\in )\in\Hom_{0,1}(\A^X,\A)$ will be
% called an \emph{$\A$-valuation}, if it satisfies the conditions (H1') and
% (H2') above. We denote by $\AVal$ the dcpo-subsemilattice of these
% $\A$-valuations.  
% \end{defi}

There is a natural map $\delta \colon X\to\HAVal(X)$ which to every 
$x\in X$ assigns the map $\delta(x)\colon \2^X\to \A$ defined %by 
%$\delta(x)(u) = u(x)$ for $u\in\2^X$. Representing the elements of
%$\2^X$ by open subsets $U$ of $X$, the definition of $\delta(x)$ looks
as follows:
\[
\delta(x)(U)=\begin{cases} \top & \text{ if } x\in U,\\
                           \bot & \text{ else }.
             \end{cases}
\]
Clearly, $\delta (x)$ is a Heckmann $\A$-valuation and $\delta$ depends
continuously on $x$. For every nonempty finite subset $F$ of $X$, we
can form the $\A$-valuation $\delta(F) =\ \bigfunion _{x\in F}\delta(x)$ 
which can be defined by
\[
\delta(F)(U) = \begin{cases} \top & \text{ if } F\subseteq U,\\
                             \bot & \text{ if } F\cap U = \emptyset,\\
                             m    & \text{ else}.
               \end{cases}
\] 
Moreover, the following diagram commutes:
\begin{diagram}
        &&X&&&\\
        &\ldTo^\delta&&\rdTo^{\eta^P}&&\\
\HAVal(X)&&\rTo_{\alpha\mapsto\ov\alpha}&& \AVal(X) & \;\subseteq\Hom_{0,1}(\A^X,\A) 
\end{diagram} 

For continuous dcpos Heckmann has shown \cite[Theorem
6.1]{heckabstvals} that $\HAVal(X)$ is (isomorphic to) the Plotkin
powerdomain over $X$ and that the $\delta(F)$ for nonempty finite subsets $F$
of $X$ form a basis. Using the preceding commuting diagram we have: 

\begin{thm}\label{th:heckmann}
For any continuous dcpo $X$, the dcpo-semilattice $\AVal(X)$ of
$\A$-valuations is 
(isomorphic to) the Plotkin powerdomain $\cP(X)$, the canonical map
being $\eta^P\colon X\to\AVal(X)$. Moreover,
$\AVal(X)$ is a continuous dcpo. The $\A$-valuations $\eta^P(F)$
form a basis, when $F$ ranges over the nonempty finite subsets of $X$.
\end{thm}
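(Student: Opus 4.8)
The plan is to transport Heckmann's description of the Plotkin powerdomain along the dcpo-semilattice isomorphism $\alpha\mapsto\ov\alpha$ established in the preceding lemma. By part (c) of that lemma, $\alpha\mapsto\ov\alpha$ is a dcpo-semilattice isomorphism of $\HAVal(X)$ onto $\AVal(X)$. On the other side, Heckmann's result \cite[Theorem 6.1]{heckabstvals} tells us that for a continuous dcpo $X$ the dcpo-semilattice $\HAVal(X)$ is (isomorphic to) the Plotkin powerdomain $\cP(X)$ via the canonical map $\delta\colon X\to\HAVal(X)$, and that the $\delta(F)$ for nonempty finite $F\subseteq X$ form a basis. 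Composing these two isomorphisms immediately yields a dcpo-semilattice isomorphism $\AVal(X)\cong\cP(X)$.

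First I would check that the composite isomorphism carries the canonical maps correctly. The commuting triangle just before the theorem asserts exactly that $\eta^P=(\alpha\mapsto\ov\alpha)\circ\delta$; hence under the composite the generator $\delta(x)$ of $\cP(X)$ is sent to $\eta^P(x)$, so that the canonical map exhibiting $\AVal(X)$ as the Plotkin powerdomain is indeed $\eta^P$, as claimed.

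Next I would transport continuity and the basis statement. A dcpo-isomorphism preserves directed suprema and the order in both directions, hence preserves the way-below relation; therefore it maps continuous dcpos to continuous dcpos and bases to bases. Since $\cP(X)$ is continuous for continuous $X$, so is $\AVal(X)$, and it remains only to identify the image of Heckmann's basis $\{\delta(F)\}$. Because $\alpha\mapsto\ov\alpha$ is a semilattice homomorphism and $\delta(F)=\bigfunion_{x\in F}\delta(x)$, we compute $\ov{\delta(F)}=\bigfunion_{x\in F}\ov{\delta(x)}=\bigfunion_{x\in F}\eta^P(x)=\eta^P(F)$, using $\ov{\delta(x)}=\eta^P(x)$ from the commuting triangle. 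Thus the $\eta^P(F)$ form a basis of $\AVal(X)$.

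I do not expect a serious obstacle: the theorem is essentially a transport of structure, with the genuine mathematical content already discharged in the preceding lemma (giving $\HAVal\cong\AVal$) and in Heckmann's cited theorem (giving $\HAVal(X)\cong\cP(X)$ for continuous $X$). The only points needing care are the bookkeeping that the two canonical maps agree --- handled by the commuting triangle --- and the verification that the basis is transported to the $\eta^P(F)$ --- handled by the fact that $\alpha\mapsto\ov\alpha$ respects the finite combinations $\bigfunion_{x\in F}$.
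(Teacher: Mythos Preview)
Your proposal is correct and follows exactly the paper's approach: the paper's entire argument is the sentence ``Using the preceding commuting diagram we have'', i.e.\ transport Heckmann's result for $\HAVal(X)$ along the isomorphism $\alpha\mapsto\ov\alpha$ of the preceding lemma. Your write-up simply makes explicit the bookkeeping (canonical maps agree via the triangle, bases transport because the isomorphism preserves $\funion$ and the order) that the paper leaves implicit.
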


We now can finish the proof of part (b) of Theorem \ref{th:main}:

\begin{prop}
For a continuous dcpo $X$, the Plotkin powerdomain is (isomorphic to) the least
dcpo-subsemilattice of $\A^{\A^X}$ containing all the point evaluations
$\eta_X^P(x) = \lambda u.\ u(x)$, $x\in X$.
\end{prop}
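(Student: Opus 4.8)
The plan is to leverage Theorem~\ref{th:heckmann}, which already identifies the Plotkin powerdomain $\cP(X)$ with the dcpo-semilattice $\AVal(X)$ of $\A$-valuations via $\eta^P$, and then to prove that $\AVal(X)$ is precisely the least dcpo-subsemilattice $\cD$ of $\A^{\A^X}$ containing all point evaluations $\eta_X^P(x)$. Since the preceding lemma exhibits $\AVal(X)$ as a dcpo-subsemilattice of $\Hom_{0,1}(\A^X,\A)$ and hence of $\A^{\A^X}$, the claim reduces to the equality $\cD = \AVal(X)$ of subobjects of $\A^{\A^X}$, which I would establish by two inclusions.

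For the inclusion $\cD \subseteq \AVal(X)$ I would first observe that every point evaluation $\eta_X^P(x)$ already lies in $\AVal(X)$: the commuting diagram preceding Theorem~\ref{th:heckmann} exhibits $\eta_X^P(x) = \ov{\delta(x)}$ as the image of the Heckmann $\A$-valuation $\delta(x)$ under the isomorphism $\alpha \mapsto \ov\alpha$. As $\AVal(X)$ is a dcpo-subsemilattice of $\A^{\A^X}$ containing all these evaluations, minimality of $\cD$ yields $\cD \subseteq \AVal(X)$.

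The substantive direction is $\AVal(X) \subseteq \cD$, and here I would invoke the basis statement of Theorem~\ref{th:heckmann}. Since $\alpha \mapsto \ov\alpha$ is a semilattice homomorphism and $\delta(F) = \bigfunion_{x\in F}\delta(x)$, each basis element is the \emph{finite} semilattice combination $\eta^P(F) = \bigfunion_{x\in F}\eta_X^P(x)$, which therefore belongs to the subsemilattice $\cD$. By Theorem~\ref{th:heckmann} the dcpo $\AVal(X)$ is continuous with the $\eta^P(F)$, $F$ ranging over nonempty finite subsets of $X$, forming a basis; hence every $\phi \in \AVal(X)$ is the directed supremum of those $\eta^P(F)$ way-below it. Because $\AVal(X)$ is a sub-dcpo of $\A^{\A^X}$, this directed supremum coincides with the pointwise supremum computed in $\A^{\A^X}$, and since $\cD$ is closed under directed suprema and contains every $\eta^P(F)$, we obtain $\phi \in \cD$. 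This gives $\AVal(X) \subseteq \cD$ and completes the equality.

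The only point requiring care — which I would single out as the main obstacle — is the interplay of the two directed-completeness structures: one must know that the suprema realizing the basis expansion \emph{inside} the continuous dcpo $\AVal(X)$ agree with the pointwise suprema in $\A^{\A^X}$ along which $\cD$ is closed. This is guaranteed precisely because $\AVal(X)$ is a dcpo-subsemilattice, hence a sub-dcpo, of $\A^{\A^X}$, so no verification beyond invoking the preceding lemma is needed. Combined with the identification $\cP(X)\cong\AVal(X)$ of Theorem~\ref{th:heckmann}, the equality $\cD=\AVal(X)$ delivers the asserted isomorphism.
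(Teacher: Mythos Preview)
Your proof is correct but takes a different route from the paper's for the substantive inclusion $\AVal(X)\subseteq\cD$. The paper argues directly from \emph{freeness}: since $\HAVal(X)\cong\AVal(X)$ is the free dcpo-semilattice over $X$ by Theorem~\ref{th:heckmann}, it has no proper dcpo-subsemilattice containing all the generators $\eta_X^P(x)$ (the standard retraction argument), and intersecting an arbitrary candidate $\cD$ with $\AVal(X)$ then forces $\AVal(X)\subseteq\cD$. You instead exploit the second half of Theorem~\ref{th:heckmann}, the explicit basis $\{\eta^P(F):F\ \text{finite nonempty}\}$, and write every $\A$-valuation as a directed supremum of finite $\funion$-combinations of point evaluations, each of which already lies in $\cD$. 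The paper's route is shorter and works for any free algebra without an explicit basis; your route is more constructive and makes visible exactly how each element of $\AVal(X)$ is assembled from the generators inside $\cD$. The care you take about suprema in $\AVal(X)$ agreeing with those in $\A^{\A^X}$ is appropriate and is precisely what the sub-dcpo property (from the preceding lemma) guarantees.
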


\begin{proof}
From Heckmann's Theorem \ref{th:heckmann} we know that $\HAVal(X)$ is
the free dcpo-semilattice over $X$. Thus $\HAVal(X)$ has no proper
dcpo-subsemilattice containing all the $\delta(x), x\in X$. By the
isomorphism established in the preceding lemma, there is a
dcpo-semilattice isomorphism from $\HAVal(X)$ onto the
dcpo-subsemilattice $\AVal(X)$ of $\A^{\A^X}$ mapping $\delta(x)$ to
the evaluation map $\eta_X^P(x)$, $x\in X$. Thus this subsemilattice is
the least dcpo-subsemilattice of $\A^{\A^X}$ containing all the
$\eta_X^P(x)$, $x\in X$. 
\end{proof}

\begin{prob}{\rm
In this context it seems appropriate to recall Birkhoff's theorem
\cite[pages 143f.]{Bir} 
from universal algebra (in the category of sets): For a
finitary signature $\Omega$, consider an $\Omega$-algebra $\uR$. For any 
set $X$ let $\eta\colon X \to \uR^{\uR^X}$ denote the canonical map defined 
by $\eta(x)(f) = f(x)$ for $x\in X$ and $f\colon X\to R$. Then the 
$\Omega$-subalgebra $F(X)$ of the product algebra $\uR^{\uR^X}$ generated by 
the `projections' $\eta(x)$, $x\in X,$  is free  in the equational class 
of $\Omega$-algebras satisfying all the equational laws that hold in $\uR$. 

One would like to transpose this result into the context of
$\dcpo_\Omega$-algebras. Our last proposition seems to indicate that
this is not an hopeless effort. Instead of equational laws one will
also have to consider inequational laws, too. Thus one can ask the
question: \emph{Given a $\dcpo_\Omega$-algebra $\uR$ and a dcpo $X$. 
Is the least $\dcpo_\Omega$-subalgebra of $\uR^{R^X}$ containing all the
evaluation maps $\eta(x)=\lambda u.\ u(x)$ free in the class of
$\dcpo_\Omega$-algebras satisfying all equational and inequational
laws that hold in $\uR$?} }
\end{prob} 

\begin{rem}{\rm
We have seen that, in general, $\A$-valuations are more general than
lenses. In order to describe $\A$-valuations in general,
J. Goubault-Larrecq \cite{jglduality} has 
introduced the notion of a \emph{quasilens}. This is a formal lens
$(C,\cU)$ where $C$ is a nonempty closed set and $\cU$ an open filter of
open sets such that $C\subseteq \cl(C\cap U)$ for all $U\in\cU$. 
In sober dcpos, formal lenses are given by pairs $(C,Q)$ where $C$ is
a closed and $Q$ a compact saturated set with nonempty intersection. Such a 
formal lens is a quasilens if $C\subseteq \cl(C\cap U)$ for all open sets $U$
containing $Q$. }
\end{rem}

\subsubsection{The Heckmann conditions and real lenses}

The Heckmann conditions look amazing. But they arise in
a natural way from the point of view of real lenses in a dcpo $X$. 

Let us fix some notation for this section. Firstly suppose that $X$ is
a sober dcpo. Recall that a formal lens is a pair $(C,Q)$ consisting
of a closed subset $C$ and a compact saturated subset $Q$ such that
$C \cap Q\neq \emptyset$. We denote by $V$ the open set $X\setminus C$
and by $L=C\cap Q$ the `real' lens associated with the formal lens.
  
The formal lenses are in a one-to-one correspondence with  the
$\phi \in \Hom_{0,1}(\A^X,\A)$. If we represent $\phi$ as a
pair $\phi=(\phi_1,\phi_2)\in \Hom_{0,1}(\2_\vee^X,\2_\vee)\times 
\Hom_{0,1}(\2_\wedge^X,\2_\wedge)$ such that $\phi_1\geq\phi_2$
according to Lemma \ref{lem:3}, then the corresponding formal lens $(C,Q)$ is
obtained  as $C=X\setminus V$, where $V=\bigcup\{U\mid \phi_1(U)=0\}$
is the greatest open set with $\phi_1(V)=0$, and $Q=
\bigcap\{U\mid\phi_2(U)=1\}$. Let $L$ be the real lens $L=C\cap Q$. 

A real lens $L$ has many representations by formal lenses. Among them 
there is a `least' one, namely $(\cl(L),{\ua}L)$. We would like to characterize 
those $\phi\in \Hom_{0,1}(\A^X,\A)$ that correspond to these minimal
representations. This leads to the Heckmann conditions.

\begin{lem}\label{lem:lens}
Given a lens $L$ in a dcpo $X$, define  
maps $\varphi_1$ and $\varphi_2$ from $\2^X$ to $\2$ by
\[ \varphi_1(U) = 0 \;\;\mbox{ iff }\;\; L\cap U = \emptyset \qquad\mbox{ 
  and }\qquad \varphi_2(U) = 1 \;\;\mbox{ iff }\;\; L\subseteq U, \]
then $\phi_L=(\phi_1,\phi_2)$ is an $\A$-valuation
% member of $\Hom_{0,1}(\A^X,\A)$ satisfying the Heckmann conditions
. Moreover $\phi_L$ corresponds to
the minimal formal lens $(\cl(L),\ua L)$ representing $L$.
\end{lem}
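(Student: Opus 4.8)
The plan is to place $\phi_L$ inside $\Hom_{0,1}(\A^X,\A)$ first, via the pairing characterization of Lemma~\ref{lem:3} — which only requires that $\varphi_1$ and $\varphi_2$ be bottom- and top-preserving continuous homomorphisms with $\varphi_1\geq\varphi_2$ — and then to check the conditions (H1') and (H2'), so that part~(c) of the preceding lemma delivers the $\A$-valuation property. The one structural fact I would rely on throughout is that a real lens $L=C\cap Q$ is nonempty and compact, being a closed subset of the compact saturated set $Q$.

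First I would verify the homomorphism properties. For $\varphi_1$, preservation of bottom and top comes from $L\cap\emptyset=\emptyset$ and $L\cap X=L\neq\emptyset$, the join law from the equivalence $L\cap(U\cup V)\neq\emptyset \iff L\cap U\neq\emptyset$ or $L\cap V\neq\emptyset$, and Scott-continuity from the observation that any witness $x\in L\cap\bigcup_i U_i$ already lies in some single $U_i$. The dual computations show $\varphi_2$ is a bottom- and top-preserving meet-homomorphism $\2_\wedge^X\to\2_\wedge$, with meet law $L\subseteq U\cap V \iff L\subseteq U$ and $L\subseteq V$. Finally $\varphi_1\geq\varphi_2$, since $L\subseteq U$ forces $L\cap U=L\neq\emptyset$; hence Lemma~\ref{lem:3} yields $\phi_L=(\varphi_1,\varphi_2)\in\Hom_{0,1}(\A^X,\A)$.

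Next I would check the Heckmann conditions. For (H1') I assume $\varphi_1(U)=0$, i.e.\ $L\cap U=\emptyset$; then any $x\in L$ lying in $U\cup V$ must lie in $V$, so $L\subseteq U\cup V \iff L\subseteq V$, giving $\varphi_2(U\cup V)=\varphi_2(V)$. For (H2') I assume $\varphi_2(U)=1$, i.e.\ $L\subseteq U$, and use $L\cap(U\cap V)=L\cap V$ to obtain $\varphi_1(U\cap V)=\varphi_1(V)$. By part~(c) of the preceding lemma, $\phi_L$ is therefore an $\A$-valuation. For the final clause I would apply the reconstruction recipe recalled at the start of this subsection, whereby the formal lens of $(\varphi_1,\varphi_2)$ is $(C,Q)$ with $C=X\setminus\bigcup\{U\mid\varphi_1(U)=0\}$ and $Q=\bigcap\{U\mid\varphi_2(U)=1\}$. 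Here $\bigcup\{U\mid L\cap U=\emptyset\}$ is the largest open set disjoint from $L$, namely $X\setminus\cl(L)$, so $C=\cl(L)$; and $\bigcap\{U\text{ open}\mid L\subseteq U\}$ is the saturation ${\ua}L$, so $Q={\ua}L$. Thus $\phi_L$ corresponds to the minimal formal lens $(\cl(L),{\ua}L)$, whose intersection contains $L$ and is hence nonempty.

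The main obstacle is the Scott-continuity of $\varphi_2$: it amounts to saying that $L\subseteq\bigcup_i U_i$ for a directed family forces $L\subseteq U_i$ for a single $i$, which is precisely the compactness of $L$ (take a finite subcover and absorb it into one $U_i$ by directedness). This is the only place where the lens hypothesis on $L$ — as opposed to an arbitrary nonempty set — is genuinely used; the remaining verifications are routine set-theoretic bookkeeping.
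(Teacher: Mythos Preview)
Your proof is correct and follows essentially the same approach as the paper: verify that $\varphi_1,\varphi_2$ are bottom/top-preserving continuous join/meet-semilattice homomorphisms (using compactness of $L$ for continuity of $\varphi_2$), check the Heckmann conditions (H1') and (H2'), and compute the associated formal lens via the largest open set disjoint from $L$ and the saturation of $L$. The only cosmetic difference is that you verify $\varphi_1\geq\varphi_2$ directly before checking the Heckmann conditions, whereas the paper derives this inequality afterward from (H2'); otherwise the arguments coincide.
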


\begin{proof}
One easily checks that $\varphi_1$ and $\varphi_2$ are
join- and meet-semilattice homomorphisms,
respectively, preserving bottom and top. For continuity we have
to use the compactness of $L$. Moreover the Heckmann conditions 
are satisfied.  (H1') holds since if $U \cap L = \emptyset$
then $L \subseteq W$ iff $L \subseteq U \cup W$ and (2) holds since if
$L \subseteq U$ then $W \cap L = \emptyset$ iff $W \cap U \cap L =
\emptyset$.  Each of the conditions (H1') and (H2') implies that
$\phi_1\geq\phi_2$. Indeed $\phi_2(U)=1$, by (H2'), implies
$\phi_1(U)=\phi_1(U\cap X) =\phi_1(X)=1$. Thus,
$\phi_L=(\phi_1,\phi_2)$ is an  element of $\Hom_{0,1}(\A^X,\A)$.

Since $X\setminus\cl(L)$ is the greatest open set disjoint from
$L$, the definition  of $\phi_1$ yields
$\bigcup\{U\mid \phi_1(U)=0\}=X\setminus \cl(L)$ and, since the intersection 
of the open neighborhoods of a set is its saturation, we have
$\bigcap\{U \mid \phi_2(U)=1\} = {\ua}L$.
\end{proof}

Of course, one will ask now, whether lenses correspond bijectively
to the $\phi = (\phi_1,\phi_2)\in \Hom_{0,1}(\A^X,\A)$ satisfying the
Heckmann conditions. We have seen in \ref{sec:lens} that this is not so, in
general, even for algebraic dcpos.  Let us try to find sufficient
conditions for this to hold.

\begin{lem}\label{lem:H1'}
Let $X$ be a sober dcpo and suppose that $\phi =(\phi_1,\phi_2) \in
\Hom_{0,1}(\A^X,\A)$ satisfies the Heckmann condition (H1'). For the
formal lens $(C,Q)$ associated with $\phi$ 
we have $Q = \ua(C\cap Q) = {\ua}L$.
\end{lem}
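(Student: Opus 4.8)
The plan is to reduce the claimed equality of compact saturated sets to a statement about their open neighbourhoods and then to extract that statement from the single Heckmann condition (H1'). Since $L = C \cap Q \subseteq Q$ and $Q$ is saturated, the inclusion $\ua L \subseteq Q$ is automatic, so the whole content lies in proving $Q \subseteq \ua L$. Here I would invoke the standard fact that a saturated set is the intersection of its open neighbourhoods: both $\ua L = \bigcap\{U \in \cO(X) \mid L \subseteq U\}$ and $Q = \bigcap\{U \in \cO(X) \mid Q \subseteq U\}$. Consequently $Q \subseteq \ua L$ is equivalent to the implication that every open $U$ with $L \subseteq U$ already satisfies $Q \subseteq U$, and this is what I would aim to establish.

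For the main step, fix an open set $U$ with $L = C \cap Q \subseteq U$ and recall the dictionary between $\phi = (\phi_1,\phi_2)$ and the associated formal lens: writing $V = X \setminus C$ for the largest open set with $\phi_1(V) = 0$, one has $\phi_1(W) = 0 \iff W \cap C = \emptyset$ and, by Hofmann--Mislove, $\phi_2(W) = 1 \iff Q \subseteq W$. Now decompose $Q = (Q \cap C) \cup (Q \setminus C) = L \cup (Q \cap V) \subseteq U \cup V$, so that $\phi_2(U \cup V) = 1$. Since $V$ is disjoint from $C$ we have $\phi_1(V) = 0$, and applying (H1') with $V$ in the role of its hypothesis set and $U$ in the role of the added set yields $\phi_2(U) = \phi_2(U \cup V) = 1$, that is, $Q \subseteq U$. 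This establishes the implication from the reduction and hence $Q = \ua L$.

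The genuine mathematical content is the single application of (H1') collapsing $\phi_2(U \cup V)$ to $\phi_2(U)$ by discarding the $C$-avoiding open set $V$; everything else is bookkeeping in the correspondence between top/bottom-preserving homomorphisms and lens data. Accordingly, the step I expect to require the most care --- and the only one genuinely using soberness --- is the translation $\phi_2(W) = 1 \iff Q \subseteq W$, in particular the forward direction that $Q \subseteq U \cup V$ forces $\phi_2(U \cup V) = 1$; this is exactly Hofmann--Mislove applied to the Scott-open filter $\phi_2^{-1}(1)$ whose intersection is $Q$. One should also record the easy monotonicity point that $\phi_1(V) = 0$ (immediate since $\phi_1$ is a continuous join-homomorphism and $V$ is a directed union of opens on which $\phi_1$ vanishes), so that (H1') indeed applies.
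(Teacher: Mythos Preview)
Your argument is correct and is essentially the same as the paper's own proof: you reduce to showing that $L\subseteq U$ implies $Q\subseteq U$, observe $Q\subseteq U\cup V$, and then use $\phi_1(V)=0$ together with (H1') to collapse $\phi_2(U\cup V)$ to $\phi_2(U)$. The paper proceeds through exactly these steps, only more tersely; your explicit invocation of Hofmann--Mislove for the implication $Q\subseteq W\Rightarrow\phi_2(W)=1$ is a welcome clarification of where sobriety enters.
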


\begin{proof}
We have to show that $Q=\bigcap\{U\mid \phi_2(U)=1\}=\ua L$. For this, it 
suffices to show that if $ L =Q\cap C\subseteq U$, then $Q\subseteq
U$.  Since $\phi_1(V)=0$, condition (H1') tells
us that $\phi_2(U)=\phi_2(U\cup V)$. Now $Q\cap C\subseteq U$ implies
$Q\subseteq U\cup V$. Thus $\phi_2(U\cup V)= 1$, whence $\phi_2(U)=1$,
which implies $Q\subseteq U$ as desired. 
\end{proof}

Starting with a $\phi =(\phi_1,\phi_2)$ satisfying
the Heckmann condition (H2'), we would like to show that
$C=\cl(L)$. For this, it suffices to show that if $W\cap L=W\cap C\cap
Q=\emptyset$ then  $W\cap C=\emptyset$. 
% or, equivalently, if $W\cap Q\subseteq V$ then $W\subseteq V$. 

If  we can find an open set $U$ containing $Q$ such that
$W\cap U\subseteq V$, 
then we are on the safe side. Then indeed
$\phi_2(U)=1$. Using (H2') we then have $\phi_1(W\cap
U)=\phi_1(W)$. Since $W\cap U\subseteq V$, we have $\phi_1(W\cap U)=0$
and hence, $\phi_1(W)=0$, whence $W\subseteq V$.        

But to find an open neighborhood $U$ of $Q$ such that $U\cap W\subseteq V$
is a real problem. We can solve this problem if $X$ is locally compact
(for the Scott topology) and coherent. Recall that a topological space
is \emph{coherent} if the intersection of any two compact saturated subsets
is compact. 

\begin{lem}\label{lem:H2'}
Suppose that $\phi=(\phi_1,\phi_2)\in \Hom_{0,1}(\A^X,\A)$ satisfies
the Heckmann condition (H2'). If  $X$ is a locally compact coherent
sober dcpo, then  $C=\cl(L)$ for the formal lens $(C,Q)$ and the real lens
$L=C\cap Q$ associated with $\phi$. 
\end{lem}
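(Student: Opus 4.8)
The plan is to follow exactly the route laid out in the discussion preceding the lemma. Since $L\subseteq C$ and $C$ is Scott-closed, we always have $\cl(L)\subseteq C$, so it suffices to prove the reverse inclusion $C\subseteq\cl(L)$. First I would unravel this closure: $C\subseteq\cl(L)$ holds iff the largest open set disjoint from $L$ is contained in $V=X\setminus C$, i.e.\ iff every Scott-open $W$ with $W\cap L=\emptyset$ already satisfies $W\subseteq V$. So I fix such a $W$. Recalling that $C=X\setminus V$ and $L=C\cap Q$, the hypothesis $W\cap L=W\cap C\cap Q=\emptyset$ says precisely that $W\cap Q\subseteq V$, and the goal is to upgrade this to $W\subseteq V$.

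Next I would isolate the two routine ingredients that turn the rest into a single topological question. On the one hand, since $X$ is sober, the Hofmann--Mislove theorem \cite{dom} (as used in the treatment of $\Hom_{0,1}(\2_\wedge^X,\2_\wedge)$ around Lemma~\ref{lem:funcsmyth}) identifies $\phi_2^{-1}(1)$ with the filter of \emph{all} open neighbourhoods of $Q=\bigcap\phi_2^{-1}(1)$; hence $Q\subseteq U$ implies $\phi_2(U)=1$ for every open $U$. On the other hand, $V=\bigcup\{U\mid\phi_1(U)=0\}$ is the largest open set on which the join-homomorphism $\phi_1$ vanishes, so $\phi_1(V)=0$ and, conversely, $\phi_1(O)=0$ forces $O\subseteq V$. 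Granting these, suppose one has produced an open $U$ with $Q\subseteq U$ and $U\cap W\subseteq V$. Then $\phi_2(U)=1$, so the Heckmann condition (H2') yields $\phi_1(W)=\phi_1(U\cap W)$; since $U\cap W\subseteq V$ and $\phi_1$ is monotone, $\phi_1(U\cap W)\le\phi_1(V)=0$, whence $\phi_1(W)=0$ and therefore $W\subseteq V$, which is exactly what we need. Together with $\cl(L)\subseteq C$ this gives $C=\cl(L)$.

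Thus everything reduces to the separation statement \emph{in a locally compact coherent sober dcpo, if $Q$ is compact saturated and $W\cap Q\subseteq V$, then there is an open $U\supseteq Q$ with $U\cap W\subseteq V$}, and this is where the hypotheses genuinely enter and where I expect the main obstacle to lie. It is equivalent to $Q\cap\cl(W\cap C)=\emptyset$, since $X\setminus\cl(W\cap C)$ is exactly the largest open set whose intersection with $W$ lands in $V$; in the language of the Remark on quasilenses (cf.\ \cite{jglduality}) it says that (H2') forces $C\subseteq\cl(C\cap U)$ for every open $U\supseteq Q$, and one wants to push $U$ down to $Q$ itself. The delicate point is that $W\cap C$ is neither open nor closed, so $Q$ cannot be separated from it by naive means, and this is precisely what local compactness and coherence are there to repair. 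My plan is to approximate $W$ from inside, writing $W=\bigcup_i\mathrm{int}(K_i)$ as a directed union of interiors of compact saturated sets $K_i\subseteq W$; coherence then makes each $Q\cap K_i$ compact saturated, and since $K_i\subseteq W$ we get $Q\cap K_i\subseteq Q\cap W\subseteq V$, so each piece is a compact saturated set lying inside the open set $V$. The hard part will be to assemble from these pieces, together with the filtered family of compact saturated neighbourhoods $K$ of $Q$ (whose intersection is $Q$), a \emph{single} open neighbourhood $U$ of $Q$ with $U\cap W\subseteq V$: here I would run a well-filteredness/Hofmann--Mislove limiting argument, using coherence at each stage to keep the relevant intersections $K\cap K_i$ compact saturated, so that the inclusion $Q\cap W\subseteq V$ is preserved in the passage to the limit and no point of $\cl(W\cap C)$ survives inside $Q$. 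Once such a $U$ is found, the computation of the previous paragraph closes the argument.
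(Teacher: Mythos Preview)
Your reduction is exactly the paper's: $\cl(L)\subseteq C$ is trivial, and the converse reduces to producing, for a suitable open $W$ with $W\cap L=\emptyset$, an open $U\supseteq Q$ with $U\cap W\subseteq V$, after which (H2') together with $\phi_2(U)=1$ and $\phi_1(V)=0$ forces $\phi_1(W)=0$ and hence $W\subseteq V$. The ingredients you name (local compactness, coherence, well-filteredness via Hofmann--Mislove) are also the paper's. The gap is in how you plan to manufacture $U$.

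You propose to approximate the given open $W$ from \emph{inside} by compact saturated $K_i$, obtaining a \emph{directed} (increasing) family, and to combine this with the filtered family of compact saturated neighbourhoods $K$ of $Q$. But a well-filteredness argument needs a \emph{filtered} (decreasing) family of compact saturated sets whose intersection lands in a fixed open set. Your $K_i$'s go the wrong way. Even if for each fixed $K_i$ you use well-filteredness on the $K$'s to find some $K^{(i)}$ with $K^{(i)}\cap K_i\subseteq V$, you then need a \emph{single} neighbourhood working for all $i$ simultaneously, and nothing prevents $\bigcap_i K^{(i)}$ from collapsing to $Q$ with no open interior to spare. The ``assembly'' step you flag as hard is not just hard; as written it has no mechanism to terminate.

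The paper sidesteps this by working pointwise and approximating in the opposite direction. Fix $x\not\in\cl(L)$ and use local compactness to pick a compact saturated neighbourhood $H$ of $x$ with $H\cap L=\emptyset$; this single compact $H$ supplies the compactness on the $W$-side that your inside-approximation was trying to synthesise. Now approximate $Q$ from \emph{outside} by a filtered basis $(K_i)_i$ of compact saturated neighbourhoods. Coherence makes each $H\cap K_i$ compact saturated, the family is filtered, and $\bigcap_i(H\cap K_i\cap C)\subseteq H\cap Q\cap C=H\cap L=\emptyset$. Well-filteredness (\cite[Theorem II-1.21(3)]{dom}) then yields $H\cap K_{i_0}\cap C=\emptyset$ for some $i_0$; take $W=\mathrm{int}(H)$ and $U=\mathrm{int}(K_{i_0})$. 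So the paper does not seek a global $U$ for an arbitrary $W$: it chooses a \emph{small} $W$ around each point and a matching $U$, and that is enough to conclude $x\notin C$, hence $C\subseteq\cl(L)$.
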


\begin{proof}
 Let $x\not\in \cl(L)$. We will show: there are an open neighborhood
 $W$ of $x$ with $W\cap L = \emptyset$ and an open set $U$ containing
 $Q$ such that $W\cap U\cap C=\emptyset$. 

We then can argue that $\phi_1(W\cap U)=0$. Since $\phi_2(U)=1$,
condition (H2') yields $\phi_1(W)= \phi_1(W\cap U)=0$, whence $W\cap
C=\emptyset$, which implies $x\not\in C$. 

Thus let $x\not\in \cl(L)$. By local compactness we can find a
compact saturated neighborhood $H$ of $x$ disjoint from $\cl(L)$.
By local compactness, not only every point but also every 
compact saturated subset $Q$ has a neighborhood basis of compact saturated
neighborhoods $(K_i)_i$. The intersections $H\cap K_i$ are compact 
by coherence, and they form a down-directed family such that
$\bigcap_i(H\cap K_i\cap C) \subseteq H\cap Q\cap C =H\cap L=
\emptyset$. Thus, $H \cap K_i\cap C = \emptyset$ for some $i$ by
\cite[Theorem II.1.21(3)]{dom}. If $W$ denotes the interior of $H$ and $U$ 
the interior of $K_i$, we have open sets with $W \cap U \cap C=\emptyset$. 
\end{proof} 

Each quasicontinuous dcpo (see \cite[Section III-3]{dom}) is locally
compact and sober according to \cite[Proposition III-3.7]{dom}. Thus,   
we can apply the previous results \ref{lem:H1'}, \ref{lem:H2'} to coherent
quasicontinuous dcpos. 

\begin{prop}
For coherent quasicontinuous dcpos, in particular for coherent continuous
dcpos, there is a one-to-one correspondence between (real) lenses and 
$\A$-valuations given by $L\mapsto\phi_L$ as in Lemma \ref{lem:lens}.
%the
                                %$\phi\in\Hom_{0.1}(\A^X,\A)$
                                %satisfying the 
%Heckmann conditions (H1') and (H2'). 
\end{prop}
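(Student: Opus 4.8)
The plan is to establish the one-to-one correspondence by combining the three preceding lemmas, which together pin down exactly how a formal lens decomposes. I would organize the argument around the map $L \mapsto \phi_L$ from Lemma~\ref{lem:lens} and show it is a bijection onto the set of $\phi \in \Hom_{0,1}(\A^X,\A)$ satisfying both Heckmann conditions (H1') and (H2'). By Lemma~\ref{lem:lens} the map $L\mapsto\phi_L$ is well-defined and lands in the set of $\A$-valuations, so injectivity and surjectivity are what remain. For injectivity, I would observe that $\phi_L$ determines its associated formal lens $(C,Q)$ via $C = X\setminus\bigcup\{U\mid \phi_1(U)=0\}$ and $Q = \bigcap\{U\mid\phi_2(U)=1\}$, and Lemma~\ref{lem:lens} tells us precisely that this recovers $(\cl(L),\ua L)$; since $L = \cl(L)\cap\ua L$ (as $L$ is a lens, hence order-convex and closed in the patch topology), distinct lenses yield distinct $\phi_L$.

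For surjectivity — the heart of the proposition — I would take an arbitrary $\A$-valuation $\phi = (\phi_1,\phi_2)$ satisfying (H1') and (H2') with associated formal lens $(C,Q)$ and real lens $L = C\cap Q$, and show that $\phi = \phi_L$. The strategy is to prove that $(C,Q)$ is already the minimal representation $(\cl(L),\ua L)$ of its own real lens. This is exactly what the hypotheses deliver: Lemma~\ref{lem:H1'} uses condition (H1') to conclude $Q = \ua L$, while Lemma~\ref{lem:H2'} uses condition (H2') together with local compactness, coherence and sobriety to conclude $C = \cl(L)$. Here is where the standing hypothesis enters decisively — the proposition restricts to coherent quasicontinuous dcpos precisely so that Lemma~\ref{lem:H2'} applies, invoking the remark that quasicontinuous dcpos are locally compact and sober by \cite[Proposition III-3.7]{dom}.

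Once $C = \cl(L)$ and $Q = \ua L$ are established, I would note that a formal lens in a sober dcpo is completely determined by its pair $(C,Q)$, and by Lemma~\ref{lem:3} together with Lemma~\ref{lem:funchoare} and Lemma~\ref{lem:funcsmyth} the homomorphism $\phi$ is completely determined by its associated $(C,Q)$. Since $\phi_L$ is built from the identical pair $(\cl(L),\ua L)$, we get $\phi = \phi_L$, giving surjectivity. Thus $L\mapsto\phi_L$ is a bijection between real lenses and $\A$-valuations on coherent quasicontinuous dcpos.

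I expect the main obstacle to be purely one of assembly rather than of new mathematics: all the substantive topological work has been quarantined into Lemmas~\ref{lem:H1'} and~\ref{lem:H2'}, and the genuinely hard step — producing, for a point $x\notin\cl(L)$, open sets $W\ni x$ and $U\supseteq Q$ with $W\cap U\cap C = \emptyset$ — was already discharged in the proof of Lemma~\ref{lem:H2'} using coherence to intersect compact saturated neighborhoods and \cite[Theorem II.1.21(3)]{dom} to pass to a single index. The only care required at this level is to verify that the formal lens associated with $\phi_L$ genuinely coincides with $(\cl(L),\ua L)$ as recorded in the final sentence of Lemma~\ref{lem:lens}, and that the reconstruction maps $C = X\setminus\bigcup\{U\mid\phi_1(U)=0\}$ and $Q=\bigcap\{U\mid\phi_2(U)=1\}$ are mutually inverse to the lens-to-valuation passage, so that the correspondence is bijective in both directions.
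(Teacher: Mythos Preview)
Your proposal is correct and follows exactly the approach the paper intends: the proposition is stated without proof in the paper, with the preceding sentence ``Thus, we can apply the previous results \ref{lem:H1'}, \ref{lem:H2'} to coherent quasicontinuous dcpos'' serving as the entire justification, and your argument is precisely the assembly of those two lemmas with Lemma~\ref{lem:lens} into a bijection. One minor remark: when you invoke Lemma~\ref{lem:funcsmyth} for the reconstruction of $Q$ from $\phi_2$, that lemma as stated is about continuous dcpos, whereas here you need the correspondence for quasicontinuous (hence sober) dcpos; the relevant fact is the Hofmann--Mislove correspondence for sober spaces recorded just before that lemma, so it would be cleaner to cite sobriety directly rather than Lemma~\ref{lem:funcsmyth}.
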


The preceding  proposition covers most of the relevant cases since all 
FS-domains and, in particular, all retracts of bifinite domains are coherent. 
According to an unpublished result obtained independently by J. Goubault-Larrecq 
and A. Jung and by J. D. Lawson and Xi Xiaoyong, the generalizations of 
FS-domains and retracts of bifinite domains to QFS- and QRB-domains agree with
the compact coherent quasicontinuous dcpos, and our previous proposition applies 
to them, too. 
The result in the preceding Lemma can be strengthened by means of the
following lemma that is well known (see e.g. \cite[p. 370]{dom} or
\cite{abjudomain}): 

\begin{lem}
For every lens $L$ in a coherent sober dcpo, $\da L$ is closed. 
\end{lem}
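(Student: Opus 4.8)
The plan is to show that $\da L$ is Scott-closed by using the standard characterization of closed sets: a subset of a dcpo is Scott-closed if and only if it is a lower set that is closed under directed suprema. Since $\da L$ is a lower set by construction, the only thing to verify is closure under directed suprema. So I would fix a directed family $(x_i)_{i\in I}$ contained in $\da L$ with supremum $x = \dsup_{i\in I} x_i$ and aim to prove $x\in\da L$. Writing $L = C\cap Q$ with $C$ closed and $Q$ compact saturated (as in the definition of a real lens), the crucial reformulation is that for any point $y$ one has $y\in\da L$ if and only if ${\ua}y\cap L\neq\emptyset$, so that the hypothesis reads ${\ua}x_i\cap L\neq\emptyset$ for every $i$, and the goal is ${\ua}x\cap L\neq\emptyset$.

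The main work goes through a well-filteredness argument, and I would set it up by contraposition. First I would record the order-theoretic identity $\bigcap_{i} {\ua}x_i = {\ua}x$, which holds precisely because $x$ is the least upper bound of the directed family $(x_i)$. Next I would consider the sets $K_i := {\ua}x_i\cap Q$. Each ${\ua}x_i$ is the saturation of a point and hence compact saturated, so by \emph{coherence} each $K_i$ is again compact saturated; directedness of $(x_i)$ makes $(K_i)_i$ a filtered family (down-directed under inclusion), and intersecting with $Q$ preserves the identity above, giving $\bigcap_i K_i = {\ua}x\cap Q$. Now suppose, for contradiction, that $x\notin\da L$, i.e.\ ${\ua}x\cap C\cap Q=\emptyset$; equivalently $\bigcap_i K_i = {\ua}x\cap Q \subseteq X\setminus C$, and $X\setminus C$ is open. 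At this point I would invoke the well-filtered property of sober spaces — the same \cite[Theorem~II.1.21]{dom} already used in the proof of Lemma~\ref{lem:H2'} — to conclude that $K_i\subseteq X\setminus C$ for some single index $i$. This means ${\ua}x_i\cap Q\cap C = {\ua}x_i\cap L=\emptyset$, so $x_i\notin\da L$, contradicting the assumption that the whole family lies in $\da L$.

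Hence $x\in\da L$, which establishes closure under directed suprema and completes the proof. The two places where the hypotheses are genuinely consumed, and which I regard as the heart of the argument, are the appeal to \emph{coherence} to guarantee that the intersections ${\ua}x_i\cap Q$ remain compact, and the appeal to \emph{sobriety} (via well-filteredness) to transfer the emptiness of the filtered intersection back to a single member of the family. Everything else — that $\da L$ is a lower set, that $\bigcap_i{\ua}x_i = {\ua}x$, and that $(K_i)_i$ is filtered — is routine bookkeeping about upper sets and directedness, so I would not belabor it.
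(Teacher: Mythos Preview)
Your argument is correct. The paper does not actually give a proof of this lemma; it simply records it as ``well known'' and points to \cite[p.~370]{dom} and \cite{abjudomain}. Your proof --- reducing Scott-closedness to closure under directed suprema, using coherence to keep the filtered family ${\ua}x_i\cap Q$ compact saturated, and then invoking well-filteredness of sober spaces to derive a contradiction --- is precisely the standard argument one finds in those references, so there is nothing to compare and nothing to fix.
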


%\begin{proof}
%Let $C$ be closed and $Q$ compact saturated such that $L=C\cap
%Q$. Let $(x_i)_i$ be a directed family in $\da L$. We want to show
%that $\dsup_i x_i\in \da L$. For this, we consider the family of sets
%$\ua x_i\cap Q$ which are compact by coherence and saturated. The sets
%$\ua x_i\cap L=\ua x_i\cap Q\cap C$ are nonempty, compact and
%saturated in $C$, and they form a down-directed family. Since $C$ as a
%closed subspace of a sober space is sober, the intersection
%$\bigcap_i(\ua x_i\cap L)$ is nonempty. Each $x$ in the intersection
%is an upper bound of $\dsup_i x_i$ in $L$, whence $\dsup_ix_i\in \da L$.
%\end{proof}

This Lemma together with the previous one allows one to describe the
Plotkin powerdomain over a quasicontinuous coherent dcpo in the classical way 
as the collection of all (real) lenses $L$ with 
the Egli-Milner order $L\leq_{EM} L'$ iff $L\subseteq \da L'$ and
$L'\subseteq \ua L$ (see, e.g.\ \cite{abjudomain} or \cite[Theorem IV-8.18]{dom}). 

There is a second class of dcpos where the Plotkin powerdomain consists of
the (real) lenses, the countably based continuous dcpos  
(see \cite[]{abjudomain} or \cite[Theorem IV-8.18]{dom}). Also in this case the $\A$-valuations are in a
one-to-one correspondence with the (real) lenses. It would be
desirable to have a proof of the 
statement in Lemma \ref{lem:H2'} that would
cover both the countably based and the coherent case. 

\section{Final remarks }

We conclude with some comments on possible extensions of our results 
and limitations of our methods. 

\subsubsection*{Plotkin powerdomain via different computational
  prototypes}
We have seen that when choosing the computational prototype $\A$ 
as $\mathcal{P}(\Si)$, the Plotkin powerdomain of $\Si$, then $\A(X)$
will in general be different from $\mathcal{P}(X)$, the Plotkin powerdomain
of $X$. However, what happens if we choose $\A$ as $\mathcal{P}(A)$ for some
dcpo $A$ more complex than $\Si$, e.g.\ the domain $\mathbb{T}$ of lifted 
booleans? Does there exist an $A$ such that for $\A = \mathcal{P}(A)$ the
repletion $\A(X)$ always gives rise to $\mathcal{P}(X)$?

\subsubsection*{Combining probability with nondeterminism}
There are other algebraic effects where the methods developed above
can be applied. Battenfeld \cite{batisdt13} has looked at probabilistic
effects. He chooses the extended nonnegative reals $\ov{\mathbb R}_+$
with addition and multiplication by nonnegative scalars
$\lambda\in\mathbb R_+$ as computational prototype. Using known
results about the extended probabilistic powerdomain $\cV(X)$ 
(C. Jones \cite{jonesthesis}, R. Tix \cite{tixdiplom}) he shows
that $\cV(X)$ is the repletion of $X$ for any continuous dcpo
$X$. Thus, here again we have the phenomenon that the repletion agrees
with the free dcpo-algebra with respect to some natural equational
laws for probabilistic choice operators. 

There have been extensive investigations on combining probabilistic
and nondeterministic effects by Tix, Keimel, Plotkin
\cite{tikeplprobdom}, Mislove \cite{mislove}, Goubault-Larrecq
\cite{jglduality}. The free dcpo-algebras have been characterized 
from an equational point of view. As in Section
\ref{sec:examples} there is an angelic, a demonic and an erratic case
to consider. We conjecture that in the angelic and demonic cases the
observationally induced approach leads to the same result as the
equational approach, while there is a big gap between the two in the
erratic case as in the case of nondeterminism without probability
considered in Section \ref{sec:examples}.

\subsubsection*{Limitations}
In the example of nondeterministic effects considered in Section
\ref{sec:examples} we make use of the phenomenon that the semilattice
homomorphisms between two semilattices $A$ and $B$ form again a
semilattice $\Hom(A,B)$, the semilattice operation for homomorphisms
being defined pointwise. The repletion of a (continuous) dcpo $X$ was
always given by $\Hom_{0,1}(\uR^X,\uR)$, where $\uR$ was the
computational prototype. An analogous phenomenon occurs when dealing
with probabilities and when combining probability with
nondeterminism. This is a quite exceptional situation. For example, when
dealing with noncommutative monoids, the collection of all
homomorphisms between two monoids does not carry any natural monoid
structure. Thus, if our computational proptotype is a dcpo with a
continuous noncommutative monoid structure, the repletion cannot be
$\Hom(\uR^X,\uR)$. We will pursue this topic elsewhere.   

\subsubsection*{Predicate transformers}
 
For a dcpo $X$ a predicate is usually meant to be a continuous
function $u\colon X\to\2$ or, equivalently, an open subset $U$ of
$X$. If our computational prototyp is $\uR$, we will consider
\emph{$\uR$-valued predicates}, that is continuous functions $u\colon
X\to\uR$. Thus, $\uR^X$ is the dcpo of all $\uR$-valued predicates on
$X$. For dcpos $X$ and $Y$, a \emph{predicate transformer} will be a
continuous map $s\colon \uR^Y\to\uR^X$. 

Since in all our considerations the computational monads were kind of 
`submonads' of the continuation monad $R^{R^{(-)}}$ we consider
\emph{state transformers} to be continuous maps $t \colon X \to R^{R^Y}$ 
(transforming an input $x\in X$ to an output $t(x)\in R^{R^Y}$. 

Exponential transpose establishes a one-to-one
correspondence between state transformers and predicate transformers:
$$  (R^{R^Y})^X \cong (R^X)^{R^Y}.$$

If $\uR$ is a $\dcpo_\Omega$-algebra, $\Hom(\uR^X,\uR)$ is a subdcpo of
$\uR^{\uR^X}$. The state transformers $t\colon X\to \Hom(\uR^Y,\uR)$
correspond to the predicate transformers $s\in \Hom(\uR^Y,\uR^X)$
through exponential transpose:
$$\Hom(\uR^Y,\uR)^X\cong \Hom(\uR^Y,\uR^X)$$
We apply this to the particular cases of repletion considered 
in Section \ref{sec:examples}.\\ 

{\sc Angelic case}: For all dcpos $X$ and $Y$, the state transformers
$t\colon X\to 
\Hom_{0,1}(\2_\vee^Y,\2_\vee)$ are in one-to-one correspondence with
those predicate transformers $s\colon \2_\vee^Y\to\2_\vee^X$ that 
preserve binary join, bottom and top:
$$\Hom_{0,1}(\2_\vee^Y,\2_\vee)^X\cong \Hom_{0,1}(\2_\vee^Y,\2_\vee^X).$$
In terms of open sets, these predicate transformers $s$ are
characterized by the properties \[s(Y) =X,\ \ s(\emptyset)
= \emptyset,\ \ s(U\cup V) = s(U)\cup s(V).\]

{\sc Demonic case}: For all continuous dcpos $X$ and $Y$, the state
transformers $t\colon X\to 
\Hom_{0,1}(\2_\wedge^Y,\2_\wedge)$ are in one-to-one correspondence with
those predicate transformers $s\colon \2_\wedge^Y\to\2_\wedge^X$ that 
preserve binary meets (intersections), bottom and top:
$$\Hom_{0,1}(\2_\wedge^Y,\2_\wedge)^X\cong \Hom_{0,1}(\2_\wedge^Y,\2_\wedge^X).$$
In terms of open sets, these predicate transformers $s$ are
characterized by the properties \[s(Y) = X,\ \ s(\emptyset)
=\emptyset,\ \ s(U\cap V) = s(U)\cap s(V).\]

{\sc Erratic case}:  For all compact continuous dcpos $X$ and $Y$, 
the state transformers $t\colon X\to 
\Hom_{0,1}(\A^Y,\A)$ are in one-to-one correspondence with
those predicate transformers $s\colon \A^Y\to\A^X$ that 
preserve $\funion$, bottom and top:
$$\Hom_{0,1}(\A^Y,\A)^X\cong \Hom_{0,1}(\A^Y,\A^X).$$
While in the angelic and demonic case the predicates are $\2$-valued, this 
is not so in the erratic case. But we can represent an $\A$-valued predicate 
$u$ as a pair $u=(u_1,u_2)$ of $\2$-valued predicates with $u_1 \geq u_2$. 
Recall that the members $\phi\in \Hom_{0,1}(\A^Y,\A)$ can be 
represented as pairs $\phi=(\phi_1,\phi_2)\in
\Hom_{0,1}(\2_\vee^Y,\2_\vee) \times \Hom_{0,1}(\2_\wedge^Y,\2_\wedge)$
such that $\phi_1\geq \phi_2$. Consequently, the state transformers
$t\colon X\to \Hom_{0,1}(\A^Y,\A)$ can be seen as pairs $t=(t_1,t_2)$
of state transformers $t_1\colon X\to \Hom_{0,1}(\2_\vee^Y,\2_\vee)$
and $t_2\colon X\to \Hom_{0,1}(\2_\wedge^Y,\2_\wedge)$ such that
$t_1\geq t_2$. Thus, state transformers for the erratic case
consist of an angelic and a demonic state transformer, the angelic
one dominating the demonic one.  

Similarly, using Observation \ref{obs},
every semilattice homomorphism $s\colon  \A^Y\to\A^X$ can be seen as a
pair $s=(s_1,s_2)\in \Hom_{0,1}(\2_\vee^Y,\2_\vee^X)\times
\Hom_{0,1}(\2_\wedge^Y,\2_\wedge^X)$ such that $s_1\geq s_2$.
 Thus, predicate transformers for the erratic case
consist of an angelic and a demonic predicate transformer, the angelic
one dominating the demonic one. \\

{\sc Plotkin powerdomain}: The functional representation of the Plotkin 
powerdomain $\cP(X)$ for continuous dcpos $X$ through the Heckmann 
conditions allows one to characterize the predicate transformers 
$\A^Y\to\A^X$ corresponding to the state transformers $X\to\cP(Y)$
as the continuous maps  $s\colon \A^Y\to\A^X$ that satisfy:
\begin{enumerate}[label=(H\arabic*)]
\item[(H1)] if $s(u)(x)=\bot$ then $s(u\vee v)(x)=s(v)(x)$
\item[(H2)] if $s(u)(x)=\top$ then $s(u\wedge v)(x)=s(v)(x)$.
\end{enumerate}
If one represents $s$ as a pair $(s_1,s_2)$ of maps $\2^X \to \2$
these conditions read:
$s_1$ and $s_2$ are continuous maps preserving bottom, top and 
binary union, resp., intersection, and satisfy:
\begin{enumerate}[label=(H\arabic*')]
\item[(H1')] if $s_1(u)(x)=0$ then $s_2(u\vee v)(x)=s_2(v)(x)$
\item[(H2')] if $s_2(u)(x)=1$ then $s_1(u\wedge v)(x)=s_1(v)(x)$
\end{enumerate}
Thus, these predicate transformers consist of an angelic component $s_1$
and a demonic component $s_2$ which are related by the conditions (H1')
and (H2').

\bibliography{myrefs}

\begin{thebibliography}{10}

\bibitem{abjudomain}
Samson Abramsky and Achim Jung.
\newblock Domain theory.
\newblock volume~3 of {\em Handbook of Logic in Computer Science}, pages
  1--168. Clarendon Press, Oxford, 1994.

\bibitem{batmfps28}
Ingo Battenfeld.
\newblock Observationally-induced effects in cartesian closed categories.
\newblock {\em Electr. Notes Theor. Comput. Sci.}, 286:43--56, 2012.

\bibitem{batisdt13}
Ingo Battenfeld.
\newblock Observationally-induced Algebras in Domain Theory. 
\newblock {\em Electr. Notes Theor. Comput. Sci.}, 301: 21--37, 2014.


\bibitem{batschrmfps27}
Ingo Battenfeld and Matthias Schr{\"o}der.
\newblock Observationally-induced effect monads: Upper and lower powerspace
  constructions.
\newblock {\em Electr. Notes Theor. Comput. Sci.}, 276:105--119, 2011.

\bibitem{batschrobspowersp}
Ingo Battenfeld and Matthias Schr{\"o}der.
\newblock Observationally induced upper and lower powerspace constructions.
\newblock 2012.
\newblock submitted for publication.

\bibitem{Bir}  Garrett Birkhoff.
\newblock {\em Lattice Theory},
\newblock  American Mathematical Society Colloquium Publications vol. XXV, 3rd
  edition, (1967).

%\bibitem{cohendiplom}
%Ben~S. Cohen.
%\newblock Mathematical foundations for denotational semantics for combining
%  probability and nondeterminism over stably compact spaces.
%\newblock Master's thesis, TU Darmstadt, 2006.

\bibitem{dom}
Gerhard Gierz, Karl~Heinrich Hofmann, Klaus Keimel, Jimmie Lawson,
Michael Mislove, and Dana~S. Scott.
\newblock {\em Continuous Lattices and Domains}.
\newblock Cambridge University Press, Cambridge, 2003.

\bibitem{jgbprevisions}
Jean Goubault-Larrecq.
\newblock Prevision domains and convex powercones.
\newblock In Roberto~M. Amadio, editor, {\em FoSSaCS}, volume 4962 of {\em
  Lecture Notes in Computer Science}, pages 318--333. Springer, 2008.

\bibitem{jglduality}
Jean Goubault-Larrecq.
\newblock De Groot duality and models of choice: angels, demons and nature.
\newblock {\em Mathematical Structures in Computer Science}, 20(2):169--237,
  2010.

\bibitem{heckabstvals}
Reinhold Heckmann.
\newblock Abstract valuations: A novel representation of Plotkin power domain
  and Vietoris hyperspace.
\newblock {\em Electr. Notes Theor. Comput. Sci.}, 6:160--173, 1997.

\bibitem{hylsdt}
Martin Hyland
\newblock{First Steps in Synthetic Domain Theory}
\newblock Proc.~of Category Theory 1990, SLNM 1488, pp.~131-156, 1991. 

\bibitem{hyleplpocontin}
Martin Hyland, Paul~Blain Levy, Gordon~D. Plotkin, and John Power.
\newblock Combining algebraic effects with continuations.
\newblock {\em Theor. Comput. Sci.}, 375(1-3):20--40, 2007.

\bibitem{hyplpocombiningeffects}
Martin Hyland, Gordon~D. Plotkin, and John Power.
\newblock Combining computational effects: commutativity {\&} sum.
\newblock In Ricardo~A. Baeza-Yates, Ugo Montanari, and Nicola Santoro,
  editors, {\em IFIP TCS}, volume 223 of {\em IFIP Conference Proceedings},
  pages 474--484. Kluwer, 2002.

\bibitem{jonesthesis}
Claire Jones.
\newblock {\em Probabilistic Non-determinism}.
\newblock PhD thesis, LFCS, University of Edinburgh, 1989.

\bibitem{peyjonbook}
Simon~Peyton Jones, editor.
\newblock {\em Haskell 98 language and libraries}.
\newblock Cambridge University Press, Cambridge, 2003.
\newblock The revised report, J. Funct. Programming {\bf 13} (2003), no. 1.

\bibitem{keicones}
Klaus Keimel.
\newblock Topological cones: Foundations for a domain theoretical semantics
  combining probability and nondeterminism.
\newblock {\em Electr. Notes Theor. Comput. Sci.}, 155:423--443, 2006.

\bibitem{keimelentropic}
Klaus Keimel.
\newblock On the equivalence of state transformer semantics and predicate
  transformer semantics.
\newblock {\em Proceedings of the Workshop Informatics and Information
  Technologies in Education: Theory, Practice, Didactics, Novosibirsk},
  1:78--104, 2012.

%\bibitem{keilawcompletions}
%Klaus Keimel and Jimmie~D. Lawson.
%\newblock D-completions and the {$d$}-topology.
%\newblock {\em Ann. Pure Appl. Logic}, 159(3):292--306, 2009.

\bibitem{mislove}
Michael Mislove. 
\newblock Nondeterminism and probabilistic choice: Obeying
the law.
\newblock In: Proc. 11th Conf. Concurrency Theory (CONCUR'00),  
LNCS 1877: 350--364. Springer Verlag, 2000.


\bibitem{mogcomplambda}
Eugenio Moggi.
\newblock Computational lambda-calculus and monads.
\newblock In {\em Proceedings, Fourth Annual Symposium on Logic in Computer
  Science, 5-8 June, 1989, Asilomar Conference Center, Pacific Grove,
  California, USA}, pages 14--23. IEEE Computer Society, 1989.

\bibitem{plponotions}
Gordon~D. Plotkin and John Power.
\newblock Notions of computation determine monads.
\newblock In Mogens Nielsen and Uffe Engberg, editors, {\em FoSSaCS}, volume
  2303 of {\em Lecture Notes in Computer Science}, pages 342--356. Springer,
  2002.
\newblock Erratum available at:
  http://homepages.inf.ed.ac.uk/gdp/publications/.

\bibitem{plpooverview}
Gordon~D. Plotkin and John Power.
\newblock Computational effects and operations: An overview.
\newblock {\em Electr. Notes Theor. Comput. Sci.}, 73:149--163, 2004.

\bibitem{scsiobservations}
Matthias Schr{\"o}der and Alex Simpson.
\newblock Probabilistic observations and valuations: (extended abstract).
\newblock {\em Electr. Notes Theor. Comput. Sci.}, 155:605--615, 2006.

\bibitem{sim05talk}
Alex Simpson.
\newblock Probabilistic observations and valuations.
\newblock {\em Talk given at: MFPS XXI, University of Birmingham}, 2005.
\newblock Slides: http://homepages.inf.ed.ac.uk/als/.

\bibitem{sim07talk}
Alex Simpson.
\newblock Beyond classical domain theory.
\newblock {\em Talk given at: MFPS XXIII, University of Birmingham}, 2007.
\newblock Slides: http://homepages.inf.ed.ac.uk/als/.

\bibitem{smythtopology}
Michael~B. Smyth.
\newblock Topology.
\newblock volume~1 of {\em Handbook of Logic in Computer Science}, pages
  641--761. Clarendon Press, Oxford, 1992.

\bibitem{tixdiplom}
Regina Tix.
\newblock Stetige {B}ewertungen auf topologischen {R}\"aumen (Continuous
  valuations on topological spaces, in German).
\newblock Master's thesis, TU Darmstadt, 1995.

\bibitem{tikeplprobdom}
Regina Tix, Klaus Keimel, and Gordon~D. Plotkin.
\newblock Semantic domains for combining probability and non-determinism.
\newblock {\em Electr. Notes Theor. Comput. Sci.}, 129:1--104, 2005.

%\bibitem{wyler}
%Oswald Wyler.
%\newblock Dedekind complete posets and scott topologies.
%\newblock In {\em Continuous Lattices, Proceedings of the Conference on
%  Topological and Categorical Aspects of Continuous Lattices, Bremen 1979},
%  volume 871 of {\em Lecture Notes in Mathematics}, pages 384--389. Springer,
%  1981.

\end{thebibliography}
\bibliographystyle{plain}

\end{document}